\definecolor{Gray}{gray}{0.5}
\algnewcommand\INPUT{\item[\textbf{Input:}]}%
\algnewcommand\OUTPUT{\item[\textbf{Output:}]}%
\providecommand{\U}[1]{\protect\rule{.1in}{.1in}}
\newtheorem{theorem}{Theorem}
\newtheorem{corollary}{Corollary}
\newtheorem{lemma}{Lemma}
\newtheorem{example}{Example}
\newtheorem{remark}{Remark}
\newenvironment{proof}[1][Proof]{\noindent\textbf{#1.} }{\ \rule{0.5em}{0.5em}}
\numberwithin{equation}{section}
\begin{document}

\title{Robust estimation and model diagnostic of insurance loss data: a weighted likelihood approach}

\author{Tsz Chai Fung\thanks{Address: Department of Risk Management and Insurance, Georgia State University, Atlanta, GA 30303. Email: \href{mailto:tfung@gsu.edu}{tfung@gsu.edu}.}}
\maketitle

\abstract{This paper presents a score-based weighted likelihood estimator (SWLE) for robust estimations of generalized linear model (GLM) for insurance loss data. The SWLE exhibits a limited sensitivity to the outliers, theoretically justifying its robustness against model contaminations. Also, with the specially designed weight function to effectively diminish the contributions of extreme losses to the GLM parameter estimations, most statistical quantities can still be derived analytically, minimizing the computational burden for parameter calibrations. Apart from robust estimations, the SWLE can also act as a quantitative diagnostic tool to detect outliers and systematic model misspecifications. Motivated by the coverage modifications which make insurance losses often random censored and truncated, the SWLE is extended to accommodate censored and truncated data. We exemplify the SWLE on three simulation studies and two real insurance datasets. Empirical results suggest that the SWLE produces more reliable parameter estimates than the MLE if outliers contaminate the dataset. The SWLE diagnostic tool also successfully detects any systematic model misspecifications with high power, accompanying some potential model improvements.}

\medskip

\noindent \textbf{Keywords}: Censored and truncated data; Generalized linear model (GLM); Robust estimation; Score-based weighted likelihood estimator (SWLE); Wald test.

\section{Introduction}
Insurance loss modelling and diagnostics have long been challenging actuarial problems essential for general insurance ratemaking and reserving. Insurance losses often exhibit peculiar distributional characteristics, including multimodality, tail-heaviness, and outlier contamination. These losses are also influenced by the policyholder attributes, potentially in a complicated way. 
The levels of peculiarity and complexity also vary significantly across different insurance datasets. For example, a simple log-normal model already fits well the Secura Re loss data analyzed by \cite{BLOSTEIN201935}. In contrast, the Greek automobile insurance data (\cite{fung2021mixture}) requires a model that captures distributional multimodality with heterogeneous regression links across different parts of the distribution.
Furthermore, insurance claim losses are also subject to policy coverage modifications, including deductibles (claims below a certain threshold are not reported) and policy limits (claims are reported only to a certain capped amount). As a result, losses are often random censored and truncated.

Amongst all statistical models, the generalized linear model (GLM, \cite{nelder1972generalized}) is widely regarded as a benchmark for insurance loss modelling, due to its analytical tractability and interpretability with a minimal computational burden. However, GLM is heavily criticized for producing maximum likelihood estimators (MLE) highly sensitive to outlier contaminations. The GLM is also inflexible to capture peculiar model characteristics (e.g., non-linear regression, tail-heaviness, and distributional multimodality), which may or may not appear in an insurance loss dataset. It is, therefore, essential to explore how to estimate the GLM parameters robustly and understand whether or not the GLM is fundamentally appropriate in modelling an insurance dataset of interest. 

To address the robustness issues underlying the MLE approach, several research works propose alternative estimation approaches to reduce the sensitivities to outlier contaminations. In the actuarial loss modelling literature, \cite{brazauskas2000robust}, \cite{serfling2002efficient} and \cite{brazauskas2003favorable} consider quantile and median estimators, while \cite{zhao2018robust}, \cite{poudyal2021robust} and \cite{poudyal2021truncated} propose truncated or winsorized method of moments (MoM) to robustly estimate the insurance loss distributions. However, extending their use to models involving many parameters is rather challenging. This issue is recently addressed by \cite{fung2021maximum}, who explores the maximum weighted likelihood estimator (MWLE), which down-weights the observations that may harm the model robustness for robust tail estimation of the finite mixture model. 
In the context of regression models, \cite{cantoni2001robust} first considers a robustification of the quasi-likelihood function for robust estimations of the GLM. Various alternative forms of robust M-estimators for the GLM are then extensively studied in the statistics literature, including \cite{valdora2014robust}, \cite{aeberhard2014robust} and \cite{ghosh2016robust}. \cite{wong2014robust} and \cite{aeberhard2021robust} further extends the robustification techniques to the generalized additive models (GAM). Given that the methods above are designed for a complete dataset, it remains an open problem to develop a robust model estimation approach for random censored and truncated insurance loss data.

With regards to the limited flexibility of GLM, there are more sophisticated insurance loss models extensive studied in actuarial literature, such as extreme value distributions (\cite{embrechts1999extreme}), composite models (\cite{cooray2005modeling}), finite mixture models (\cite{MILJKOVIC2016387}), GLM with varying dispersion (\cite{tzougas2020algorithm}), and neural network (\cite{delong2021gamma}).
These models capture more complex data characteristics at some costs of model interpretability, computational burden, and mathematical tractability.
A fundamental question is: When is the GLM sufficient such that the more complex models mentioned above are unnecessary?

We need model diagnostic tools to assess the goodness-of-fit in light of the above question. Actuaries often rely on ad-hoc qualitative tools such as P-P or Q-Q plots and quantitative tools such as Anderson-Darling (AD), Kolmogorov-Smirnov (KS), or chi-squared tests to determine the goodness-of-fit of the probability distributions, or deviance residual analysis plot to visually detect any abnormal regression patterns. These basic diagnostic tools are designed only for complete data and fail to detect misspecifications on both probability distribution and regression link simultaneously. Hence, it is desirable to develop a more universal statistical diagnostic tool, which provides quantitative and comprehensive assessments on the appropriateness of the GLM to an insurance loss data subject to coverage modifications. This gives us a relatively more objective rule in deciding whether or not we should reject the GLM and consider the more complex model classes as mentioned above. 

Motivated by the issues above, in this paper, we make the following contributions:

Firstly, we introduce a score-based weighted likelihood estimation (SWLE), adapted and extended from the MWLE proposed by \cite{fung2021maximum} for robust estimation of regression models. Using a weight function, the proposed SWLE diminishes the contributions of extreme observations to the GLM parameter estimations. In this way, the estimated GLM parameters are less sensitive to outliers and, hence, more reliable and robust against model contamination. We show that the proposed SWLE produces consistent and asymptotic normal parameters estimations. Also, with careful selections of weight functions, most statistical quantities (e.g., score functions, information matrix) under SWLE are analytically tractable. Therefore, the computational burden for calibrating the GLM parameters by the SWLE is almost no different from the MLE.

Secondly, we develop a novel Wald-based test statistic based on the sensitivity of the SWLE weight functions to the estimated GLM parameters to quantitatively assess the overall appropriateness of using the GLM to model the dataset. Higher sensitivities suggest rejection of the GLM. If our proposed Wald-type statistic rejects the GLM, an analysis on where the sensitivities come from also provides some guidance to improve the GLM.

Thirdly, we extend the SWLE to cater for random censoring and truncation of data. The corresponding statistical inference quantities, including score function, information matrix, and Wald-type diagnostic test statistic, are derived.

The remainder of this article proceeds as follows. The class of generalized linear model (GLM) is first revisited in Section \ref{sec:glm} with relevant notations. Then, Section \ref{sec:likelihood} reviews some likelihood-based inference techniques, which motivate us to introduce a novel score-based weighted likelihood estimation (SWLE) approach for robust estimations of insurance loss models. In Section \ref{sec:swle}, we formally construct the SWLE framework for the GLM, supplemented by estimation algorithms and several theoretical properties to justify the computational tractability, consistency, and robustness of the proposed SWLE. Section \ref{sec:diag} introduces an alternative use of the SWLE as a model diagnostic tool to quantitatively detect model misspecifications. The SWLE is further extended in Section \ref{sec:censtrun} to cater for censored and truncated data. The performance and practical applicability of the proposed SWLE are analyzed through three simulation studies in Section \ref{sec:sim} and two real insurance datasets in Section \ref{sec:data}. Section \ref{sec:conclude} concludes.

\section{Generalized Linear Model (GLM)} \label{sec:glm}
In this section, we briefly review the class of Generalized Linear Model (GLM) and define the relevant notations used throughout the paper. Suppose that there are $n$ independent (transformed) loss severities $\bm{Y}=(Y_1,\ldots,Y_n)$ with realizations $\bm{y}=(y_1,\ldots,y_n)$. Each loss $y_i\in\mathcal{Y}$ is accompanied by $P$ covariates (policyholder attributes) denoted as $\bm{x}_i=(x_{i1},\ldots,x_{iP})^T\in\mathcal{X}$ for $i=1,\ldots,n$. Also, define $\bm{X}=(\bm{x}_1,\ldots,\bm{x}_n)^T$ as an $n\times P$-design matrix consisting of the attributes of all policyholders. We model $Y_i|\bm{x}_i$ through GLM with density function given by
\begin{align} \label{eq:glm}
f(y_i;\bm{x}_i,\bm{\Psi})=\exp\left\{\frac{\theta_iy_i-A(\theta_i)}{\phi}+C(y_i,\phi)\right\},\qquad y_i\in\mathcal{Y}, \bm{x}_i\in\mathcal{X}
\end{align}
for $i=1,\ldots,n$. Here, $\theta_i:=\theta(\bm{x}_i,\bm{\beta})$ is a canonical parameter which depends on the covariates $\bm{x}_i$ and regression coefficients $\bm{\beta}:=(\beta_1,\ldots,\beta_P)^T$. The scale parameter is $\phi$ and the set of all model parameters is $\bm{\Psi}=(\bm{\beta},\phi)$. The mean and variance of $Y_i|\bm{x}_i$ are $E[Y_i|\bm{x}_i,\bm{\Psi}]:=\mu_i=A'(\theta_i)$ and $\text{Var}[Y_i|\bm{x}_i,\bm{\Psi}]:=\sigma_i^2=\phi A''(\theta_i)$ respectively. For linear regression, it is assumed that $\bm{x}_i^T\bm{\beta}=\eta(\mu_i)=\eta(A'(\theta_i))$, where $\eta(\cdot)$ is called the link function. We also define $\xi(\cdot)=(\eta\circ A')^{-1}(\cdot)$ such that $\theta_i=\xi(\bm{x}_i^T\bm{\beta})$. In the special case where $\xi(\cdot)$ or $\eta(A'(\cdot))$ is an identity function such that $\theta_i=\bm{x}_i^T\bm{\beta}$, we call the resulting $\eta(\cdot)$ a canonical link. Throughout this paper, we further particularize the class of GLM to be considered, by assuming that the function $C(y_i,\phi)$ in Equation (\ref{eq:glm}) can be decomposed as

\begin{align} \label{eq:glm_C}
C(y_i,\phi)=\left(\frac{1}{\phi}-c\right)g(y_i)+a(y_i)+b(\phi)
\end{align}
for a constant $c$ and some functions $g(\cdot)$, $a(\cdot)$ and $b(\cdot)$. This assumption is not restrictive in actuarial practice as most widely adopted GLMs (e.g., Gamma and inverse Gaussian) satisfy this assumption.

\section{Likelihood-based inference techniques} \label{sec:likelihood}
This section briefly reviews several likelihood-based inference tools and proposes a score-based weighted likelihood estimation (SWLE) approach for robust parameter estimations of regression models.

\subsection{Maximum likelihood estimation (MLE)}
Statistical inference for loss regression models is exclusively dominated by the maximum likelihood estimation (MLE) approach in actuarial practice, which maximizes the log-likelihood function
\begin{align}
\mathcal{L}^{\text{MLE}}_n(\bm{\Psi};\bm{y},\bm{X})=\sum_{i=1}^{n}\log f(y_i;\bm{x}_i,\bm{\Psi})
\end{align}
with respect to the parameters $\bm{\Psi}$. 
MLE is not robust to model contamination, i.e., a few outliers may significantly distort the estimated MLE parameters. It is, therefore, worthwhile to explore robust estimation methods alternative to the MLE to obtain more stable and reliable estimates of parameters.


\subsection{Maximum weighted likelihood estimation (MWLE)}
The maximum weighted likelihood estimation (MWLE) is developed by \cite{fung2021maximum} for robust estimations of loss distributions. The idea is to impose observation-dependent weights on each observation so that the observations deemed to harm the model robustness would make fewer impacts on parameter estimations. Slightly extending the MWLE proposed by \cite{fung2021maximum} to the regression setting, the following weighted log-likelihood function is maximized:
\begin{align} \label{eq:inference:mwle}
\mathcal{L}^{\text{MWLE}}_n(\bm{\Psi};\bm{y},\bm{X})=\sum_{i=1}^{n}W(y_i,\bm{x}_i)\log \frac{f(y_i;\bm{x}_i,\bm{\Psi})W(y_i,\bm{x}_i)}{\int_{\mathcal{Y}}f(u;\bm{x}_i,\bm{\Psi})W(u,\bm{x}_i)du}:=\sum_{i=1}^{n}W(y_i,\bm{x}_i)\log f^*(y_i;\bm{x}_i,\bm{\Psi}),
\end{align}
where $W(\cdot)$ is the weight function, and $f^*(y_i;\bm{x}_i,\bm{\Psi})$ is a transformed density function given by
\begin{equation}
f^*(y_i;\bm{x}_i,\bm{\Psi})=\frac{f(y_i;\bm{x}_i,\bm{\Psi})W(y_i,\bm{x}_i)}{\int_{\mathcal{Y}}f(u;\bm{x}_i,\bm{\Psi})W(u,\bm{x}_i)du}.
\end{equation}

\cite{fung2021maximum} assumes that $0\leq W(\cdot)\leq 1$ and $W(\cdot)$ is a non-decreasing function to address tail robustness issue. In this paper, we focus on robustness against the outliers, and hence we do not impose such an assumption on $W(\cdot)$. The appropriate choice of $W(\cdot)$ not only needs to address the modelling need (model robustness) but also has to result in an analytically tractable transformed density function $f^*(y_i;\bm{x}_i,\bm{\Psi})$ such that computational burden is minimized. Details will be covered in Section \ref{sec:swle:const}. An adjustment term $\int_{\mathcal{Y}}f(u;\bm{x}_i,\bm{\Psi})W(u,\bm{x}_i)du$ is incorporated into Equation (\ref{eq:inference:mwle}) to adjust for the estimation biases introduced by weighting the log-likelihood function. With this regard, the MWLE is consistent and asymptotically normal under several mild regularity conditions.

\subsection{Score-based weighted likelihood estimation (SWLE)}
The main shortcoming of the above MWLE approach is that it is difficult to be extended to cater for incomplete data where the true value of insurance loss severity $y_i$ may not be fully observed in exact due to censoring and truncation. This is because the weight function $W(y_i,\bm{x}_i)$ in Equation (\ref{eq:inference:mwle}) relies on the exact observable $y_i$, and it is difficult to determine the appropriate weight applied to an inexact loss. In the insurance loss modelling perspective, loss severities are expected to be censored and truncated due to various forms of coverage modifications such as deductibles (which lead to left truncation) and policy limits (which lead to right censoring) applied to insurance policies. 

With this regard, we propose a novel score-based weighted likelihood estimation (SWLE) for robust estimations of regression models while retaining its extensibility to the aforementioned incomplete insurance loss data. The SWLE is obtained by solving the following set of score functions w.r.t. $\bm{\Psi}$:
\begin{align} \label{eq:inference:swle}
\mathcal{S}_n(\bm{\Psi};\bm{y},\bm{X}):=\mathcal{S}^{\text{SWLE}}_n(\bm{\Psi};\bm{y},\bm{X})=\sum_{i=1}^{n}\left(\int_{\mathcal{Y}}f(u;\bm{x}_i,\bm{\Psi})W(u,\bm{x}_i)du\right)\frac{f^*(y_i;\bm{x}_i,\bm{\Psi})}{f(y_i;\bm{x}_i,\bm{\Psi})}\frac{\partial}{\partial\bm{\Psi}}\log f^*(y_i;\bm{x}_i,\bm{\Psi})=\bm{0}.
\end{align}

It is easy to show that the SWLE score function above is simply the derivative of the weighted log-likelihood function (Equation (\ref{eq:inference:mwle})) w.r.t. $\bm{\Psi}$, and hence they are equivalent when the observed data is complete. However, unlike the MWLE, SWLE avoids an explicit expression of the weight function $W(y_i,\bm{x}_i)$ into the score function, addressing the aforementioned extensibility problem. Note that the above SWLE score function is for complete data and has not been extended to the case where the insurance losses are censored and/or truncated. We will leverage the censoring-truncation mechanism for insurance losses and the corresponding extension of Equation (\ref{eq:inference:swle}) to Section \ref{sec:censtrun}.

\section{SWLE for GLM} \label{sec:swle}
This section examines the theoretical properties of the proposed SWLE under the GLM modelling framework for insurance loss regression analysis. We first construct an appropriate class of weight functions such that: (i) the contributions of extreme observations or outliers are effectively down-weighted to ensure robust model estimations; (ii) the resulting statistical quantities, including the score function, are analytically tractable, to ensure computational desirability. Then, we present the estimation algorithm for SWLE and discuss its connection to the MLE estimation algorithm. We will prove that the SWLE is consistent and asymptotically normal under mild regularity conditions with an analytically tractable information matrix. We will also show that the proposed SWLE results in a bounded influence function (IF), ensuring robustness against outliers.

\subsection{Construction} \label{sec:swle:const}
We consider the weight function with the following form
\begin{align} \label{eq:thm:weight}
W(y_i,\bm{x}_i)\propto\exp\left\{\frac{\tilde{\theta}_iy_i}{\tilde{\phi}}+\left(\frac{1}{\tilde{\phi}}-c\right)g(y_i)\right\},
\end{align}
where $\tilde{\theta}_i=\xi(\bm{x}_i^T\tilde{\bm{\beta}})$, and $\tilde{\bm{\Psi}}:=(\tilde{\bm{\beta}},\tilde{\phi})$ are the hyperparameters of the weight function, which control the extent that extreme observations are down-weighted and govern the trade-off between robust modelling and estimation efficiency. We start with the following lemma which suggests that the score function under the proposed SWLE is analytically tractable
\begin{lemma} \label{lem:density}
Suppose that the weight function $W(y_i,\bm{x}_i)$ and density function $f(y_i;\bm{x}_i,\bm{\Psi})$ are given by Equations (\ref{eq:thm:weight}) and (\ref{eq:glm}) respectively. Then we have:
\begin{enumerate}
\item The bias adjustment term is
\begin{align}
\lambda_i^*(\bm{\Psi};\bm{x}_i):=\int_{\mathcal{Y}}f(u;\bm{x}_i,\bm{\Psi})W(u,\bm{x}_i)du
\propto\exp\left\{\frac{A(\theta^*_i)}{\phi^*}-b(\phi^*)-\frac{A(\theta_i)}{\phi}+b(\phi)\right\};
\end{align}
\item The transformed density function is
\begin{align}
f^*(y_i;\bm{x}_i,\bm{\Psi})=\exp\left\{\frac{\theta_i^*y_i-A(\theta_i^*)}{\phi^*}+C(y_i,\phi^*)\right\},
\end{align}
where $\phi^*=(\phi^{-1}+\tilde{\phi}^{-1}-c)^{-1}$ and $\theta_i^*=(\theta_i/\phi+\tilde{\theta}_i/\tilde{\phi})\phi^*$.
\end{enumerate}
\end{lemma}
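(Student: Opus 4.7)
The proof is essentially a direct algebraic manipulation leveraging the fact that the weight function in Equation (\ref{eq:thm:weight}) has been deliberately constructed to mimic the exponential-family form of the GLM density (\ref{eq:glm}) under the decomposition (\ref{eq:glm_C}). My plan is therefore to compute the product $f(u;\bm{x}_i,\bm{\Psi})W(u,\bm{x}_i)$ inside the exponent, regroup terms in $u$ and $g(u)$ with new coefficients, and then read off both $\theta_i^*$, $\phi^*$, and the bias adjustment by matching the result against the canonical GLM template.

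Concretely, I would first substitute (\ref{eq:glm_C}) into (\ref{eq:glm}) so that $\log f(u;\bm{x}_i,\bm{\Psi}) = (\theta_i u - A(\theta_i))/\phi + (1/\phi - c)g(u) + a(u) + b(\phi)$, and add $\log W(u,\bm{x}_i) = \tilde{\theta}_i u/\tilde{\phi} + (1/\tilde{\phi} - c)g(u) + \text{const}$. Collecting the coefficient of $u$ gives $\theta_i/\phi + \tilde{\theta}_i/\tilde{\phi}$, which equals $\theta_i^*/\phi^*$ by the definition of $\theta_i^*$; collecting the coefficient of $g(u)$ gives $(1/\phi - c) + (1/\tilde{\phi} - c) = 1/\phi^* - c$, using the defining identity $1/\phi^* = 1/\phi + 1/\tilde{\phi} - c$. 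The $a(u)$ term is untouched. Hence
\begin{equation*}
f(u;\bm{x}_i,\bm{\Psi})W(u,\bm{x}_i) \propto \exp\!\left\{\frac{\theta_i^* u}{\phi^*} + \left(\frac{1}{\phi^*}-c\right)g(u) + a(u) - \frac{A(\theta_i)}{\phi} + b(\phi)\right\}.
\end{equation*}

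To match the target form of $f^*$, I would then add and subtract $A(\theta_i^*)/\phi^* - b(\phi^*)$, which recovers the full canonical GLM exponent $(\theta_i^* u - A(\theta_i^*))/\phi^* + C(u,\phi^*)$ on one side while depositing the factor $\exp\{A(\theta_i^*)/\phi^* - b(\phi^*) - A(\theta_i)/\phi + b(\phi)\}$ outside the $u$-dependent part. Integrating over $u \in \mathcal{Y}$ then proves part (1), because the bracketed exponential is $u$-free and what remains integrates to $1$ (this is where I implicitly assume $\theta_i^*$ lies in the natural parameter space so that the resulting GLM is proper; I would note this as a regularity condition inherited from the specification of $\tilde{\bm{\Psi}}$). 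Dividing the product by the bias adjustment term finally gives part (2).

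The main obstacle is not conceptual but bookkeeping: one must be careful to track the constant of proportionality in $W$ (which cancels in $f^* = fW/\int fW$ and therefore never needs an explicit form), and to confirm that the coefficient algebra for $\phi^*$ does not accumulate an extra $-c$ or double-count $b(\phi)$. Once the identifications $\theta_i^*/\phi^* = \theta_i/\phi + \tilde{\theta}_i/\tilde{\phi}$ and $1/\phi^* - c = (1/\phi - c) + (1/\tilde{\phi} - c)$ are made cleanly, both statements fall out by pattern matching against (\ref{eq:glm})--(\ref{eq:glm_C}).
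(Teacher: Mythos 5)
Your proposal is correct and follows essentially the same route as the paper's own proof: both multiply the density and weight inside the exponent, match the coefficients of $u$ and $g(u)$ to identify $\theta_i^*/\phi^*$ and $1/\phi^*-c$, add and subtract $A(\theta_i^*)/\phi^*-b(\phi^*)$ to complete the exponential-dispersion form, and integrate the resulting proper density to $1$. Your explicit remark that $\theta_i^*$ must lie in the natural parameter space for the transformed density to be proper is a point the paper leaves implicit, but otherwise the arguments coincide.
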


\begin{corollary} \label{cor:thm:density}
The following results based on Lemma \ref{lem:density} hold under the following two special cases:
\begin{enumerate}
\item If $\tilde{\theta}_i:=\tilde{\theta}$ is independent of $\bm{x}_i$ (e.g. choose $\tilde{\bm{\beta}}$ such that only the intercept term is non-zero), then we have $\theta_i^*=\xi^{*}(\bm{x}_i^T\bm{\beta})$ with the transformed mapping function $\xi^{*}(z)=(\xi(z)/\phi+\tilde{\theta}/\tilde{\phi})\phi^*$.
\item If a canonical link is selected for GLM such that $\xi(z)=z$, $\theta_i=\bm{x}_i^T\bm{\beta}$ and $\tilde{\theta}_i=\bm{x}_i^T\tilde{\bm{\beta}}$, then the transformed density function can be written as $f^*(y_i;\bm{x}_i,\bm{\Psi})=f(y_i;\bm{x}_i,\bm{\Psi}^*)$, where $\bm{\Psi}^*=(\bm{\beta}^*,\phi^*)$ with $\bm{\beta}^*=(\bm{\beta}/\phi+\tilde{\bm{\beta}}/\tilde{\phi})\phi^*$.
\end{enumerate}
\end{corollary}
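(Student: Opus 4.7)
My plan is to treat both parts of the corollary as direct algebraic substitutions into the formulas for $\theta_i^*$ and $f^*(y_i;\bm{x}_i,\bm{\Psi})$ supplied by Lemma \ref{lem:density}. The heavy lifting (deriving $\phi^*$ and the closed form for $\theta_i^*$ from the integration against the exponential weight) is already done in Lemma \ref{lem:density}; what remains is to recognize that under the special structural assumptions of parts 1 and 2, the expression $\theta_i^* = (\theta_i/\phi + \tilde{\theta}_i/\tilde{\phi})\phi^*$ inherits the form of a GLM canonical parameter in $\bm{x}_i$.

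For part 1, I would start from $\theta_i = \xi(\bm{x}_i^T\bm{\beta})$ and substitute into the expression for $\theta_i^*$ from Lemma \ref{lem:density}. Setting $\tilde{\theta}_i = \tilde{\theta}$ independent of $\bm{x}_i$, I obtain
\begin{align*}
\theta_i^* = \left(\frac{\xi(\bm{x}_i^T\bm{\beta})}{\phi} + \frac{\tilde{\theta}}{\tilde{\phi}}\right)\phi^*,
\end{align*}
which, since $\phi$, $\tilde{\phi}$, $\tilde{\theta}$ and $\phi^*$ do not depend on $i$, is exactly $\xi^*(\bm{x}_i^T\bm{\beta})$ with $\xi^*(z) = (\xi(z)/\phi + \tilde{\theta}/\tilde{\phi})\phi^*$. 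This establishes part 1 immediately.

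For part 2, the canonical link assumption gives $\xi(z) = z$, so $\theta_i = \bm{x}_i^T\bm{\beta}$ and $\tilde{\theta}_i = \bm{x}_i^T\tilde{\bm{\beta}}$. Plugging into the formula for $\theta_i^*$ and factoring out $\bm{x}_i^T$ (using linearity) yields
\begin{align*}
\theta_i^* = \bm{x}_i^T\left[\left(\frac{\bm{\beta}}{\phi} + \frac{\tilde{\bm{\beta}}}{\tilde{\phi}}\right)\phi^*\right] = \bm{x}_i^T\bm{\beta}^*.
\end{align*}
Substituting this into the expression for $f^*(y_i;\bm{x}_i,\bm{\Psi})$ in Lemma \ref{lem:density} and comparing with the GLM density (\ref{eq:glm}) evaluated at $\bm{\Psi}^* = (\bm{\beta}^*,\phi^*)$ identifies the two functional forms: the exponent $(\theta_i^* y_i - A(\theta_i^*))/\phi^* + C(y_i,\phi^*)$ matches exactly what (\ref{eq:glm}) produces when the canonical parameter is $\bm{x}_i^T\bm{\beta}^*$ and the scale is $\phi^*$.

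There is essentially no obstacle here beyond bookkeeping; the only subtle point worth flagging in the write-up is that part 2 requires the \emph{canonical} link for both $\bm{\beta}$ and $\tilde{\bm{\beta}}$ in order for the linear combination $\bm{\beta}/\phi + \tilde{\bm{\beta}}/\tilde{\phi}$ to pull cleanly out of $\bm{x}_i^T(\cdot)$, and that the identification of $f^*$ with a GLM density at shifted parameters depends on the decomposition (\ref{eq:glm_C}) of $C(y_i,\phi)$ having already been used inside the proof of Lemma \ref{lem:density} so that the $\phi$-dependence of $C$ transforms to $\phi^*$ in the expected way.
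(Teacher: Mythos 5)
Your proposal is correct and follows essentially the same route as the paper: both treat the corollary as an immediate algebraic consequence of the $\theta_i^*=(\theta_i/\phi+\tilde{\theta}_i/\tilde{\phi})\phi^*$ formula from Lemma \ref{lem:density}, with part 1 holding by definition of $\xi^*$ and part 2 by factoring $\bm{x}_i^T$ out of the linear combination under the canonical link and matching the resulting exponent to the GLM density at $(\bm{\beta}^*,\phi^*)$. No gaps.
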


We can still write the transformed density as an Exponential dispersion model (EDM) with shifted parameter values from the above results. If we choose a canonical link, the resulting transformed density will still be expressed as a GLM with transformed parameters. With the above analytically tractable formulas, the SWLE score function can be written as
\begin{align} \label{eq:thm:score_glm}
\mathcal{S}_n(\bm{\Psi};\bm{y},\bm{X})
:=\sum_{i=1}^{n}\mathcal{S}(\bm{\Psi};y_i,\bm{x}_i)
=\sum_{i=1}^{n}W(y_i,\bm{x}_i)\frac{\partial}{\partial\bm{\Psi}}\left[\frac{\theta_i^*y_i-A(\theta_i^*)}{\phi^*}+C(y_i,\phi^*)\right].
\end{align}

Taking a derivative with the usage of chain rule, SWLE for GLM requires solving the following two sets of equations simultaneously for $\bm{\Psi}$
\begin{align} \label{eq:thm:score_glm_t}
\mathcal{S}_{n,\theta}(\bm{\Psi};\bm{y},\bm{X})
&:=\sum_{i=1}^{n}\mathcal{S}_{\theta}(\bm{\Psi};y_i,\bm{x}_i)
:=\frac{1}{\phi}\sum_{i=1}^{n}W(y_i,\bm{x}_i)\left(y_i-A'(\theta_i^*)\right)\xi'(\bm{x}_i^T\bm{\beta})\bm{x}_i=\bm{0},
\end{align}
\begin{align} \label{eq:thm:score_glm_p}
&\mathcal{S}_{n,\phi}(\bm{\Psi};\bm{y},\bm{X})
:=\sum_{i=1}^{n}\mathcal{S}_{\theta}(\bm{\Psi};y_i,\bm{x}_i)\nonumber\\
&:=\frac{1}{\phi^2}\sum_{i=1}^{n}W(y_i,\bm{x}_i)
\left\{\left(y_i-A'(\theta_i^*)\right)\phi^*\left(\left(c-\frac{1}{\tilde{\phi}}\right)\theta_i+\frac{\tilde{\theta_i}}{\tilde{\phi}}\right)-\left[\theta_i^*y_i-A(\theta_i^*)+g(y_i)\right]+\phi^{*2}b'(\phi^*)\right\}=0,
\end{align}
which are both analytically tractable.

\begin{example} \label{eg:thm}
We hereby discuss the use of SWLE to three example GLM classes, which are commonly adopted for actuarial loss modelling and ratemaking purposes.
\begin{enumerate}
\item (\textbf{Gamma GLM}) Its density function is given by Equations (\ref{eq:glm}) and (\ref{eq:glm_C}) with $A(\theta)=-\log (-\theta)$, $c=1$, $g(y)=\log y$, $a(y)=0$ and $b(\phi)=\phi^{-1}\log(\phi^{-1})-\log\Gamma(\phi^{-1})$. To apply the proposed SWLE to Gamma GLM, the plausible weight function, according to Equation (\ref{eq:thm:weight}), is gamma density function itself (because $a(y)=0$). For simplicity, one may particularize $\tilde{\phi}=1$ and $\tilde{\theta}_i=\tilde{\theta}$ such that $W(y_i,\bm{x}_i):=W(y_i)=\exp\{\tilde{\theta}y_i\}$ exponentially distributed with $\tilde{\theta}<0$ being the only adjustable hyperparameter. In this case, $W(y_i)$ is a decreasing function of $y_i$, down-weighting large losses. Larger losses are down-weighted more significantly as $\tilde{\theta}$ becomes more negative, while $W(y_i)$ becomes flat (SWLE recovers back to MLE) as $\tilde{\theta}\rightarrow 0$. Under the SWLE, the transformed density function $f^{*}(y_i;\bm{x}_i,\bm{\Psi})$ still follows Gamma GLM with an identical transformed dispersion parameter $\phi^{*}=\phi$ and a shifted canonical parameter $\theta_i^*=\theta_i+\tilde{\theta}\phi$. The left panel of Figure \ref{fig:prelim} (using $\theta=-1$, $\phi=0.5$ and $\tilde{\theta}=-0.5$) shows that the transformed density function peaks more in the body part and eventually becomes lighter tailed. This is a natural consequence of incorporating weight functions to reduce the influence of large losses.
\item (\textbf{Linear model}) Actuaries often model log-normal losses via an exponential transformation of the linear model. We have $A(\theta)=\theta^2/2$, $c=0$, $g(y)=-y^2/2$, $a(y)=0$ and $b(\phi)=-(1/2)\log(2\pi\phi)$. Since $a(y)=0$, one can choose a linear model itself as the weight function. One may particularize $\tilde{\theta}_i:=\tilde{\theta}$ fixed as e.g. sample mean and treat $\tilde{\phi}>0$ as the only adjustable weight function hyperparameter. The weight function $W(y_i,\bm{x}_i):=W(y_i)$ is then a normal density with fixed mean $\tilde{\theta}$ and adjustable variance $\tilde{\phi}$, down-weighting the observations from both tails. Smaller $\tilde{\phi}$ represents outliers are down-weighted by a larger extent. The middle panel of Figure \ref{fig:prelim} (using $\theta=\tilde{\theta}=0$, $\phi=0.5$ and $\tilde{\phi}=2$) shows that the transformed density under SWLE is still normal distributed with a sharper peak, coinciding with the Gamma case.
\item (\textbf{Inverse-Gaussian GLM}) We have $A(\theta)=-(-2\theta)^{1/2}$, $c=0$, $g(y)=-1/(2y)$, $a(y)=-\ln (2\pi y^3)/2$ and $b(\phi)=\ln(1/\phi)/2$. As $a(y)\neq 0$, the weight function cannot be Inverse-Gaussian itself. Instead, we refer to Equation (\ref{eq:thm:weight}) and choose $W(y_i,\bm{x}_i)=\exp\{\tilde{\theta}_iy_i/\tilde{\phi}-(2y_i\tilde{\phi})^{-1}\}$ as the weight function. One may particularize the weight function $W(y_i,\bm{x}_i):=W(y_i)$ such that $\tilde{\theta}_i:=\tilde{\theta}<0$ is fixed. The adjustable hyperparameter $\tilde{\phi}>0$ governs how the outliers are down-weighted. Similar to the previous two cases, the transformed density under SWLE is still Inverse-Gaussian distributed with a sharper node in the body, as demonstrated by the right panel of Figure \ref{fig:prelim} (using $\theta=-0.5$, $\phi=1$, $\tilde{\theta}=-2$ and $\tilde{\phi}=1$).
\end{enumerate}
\end{example}

\begin{figure}[!h]
\begin{center}
\includegraphics[width=\linewidth]{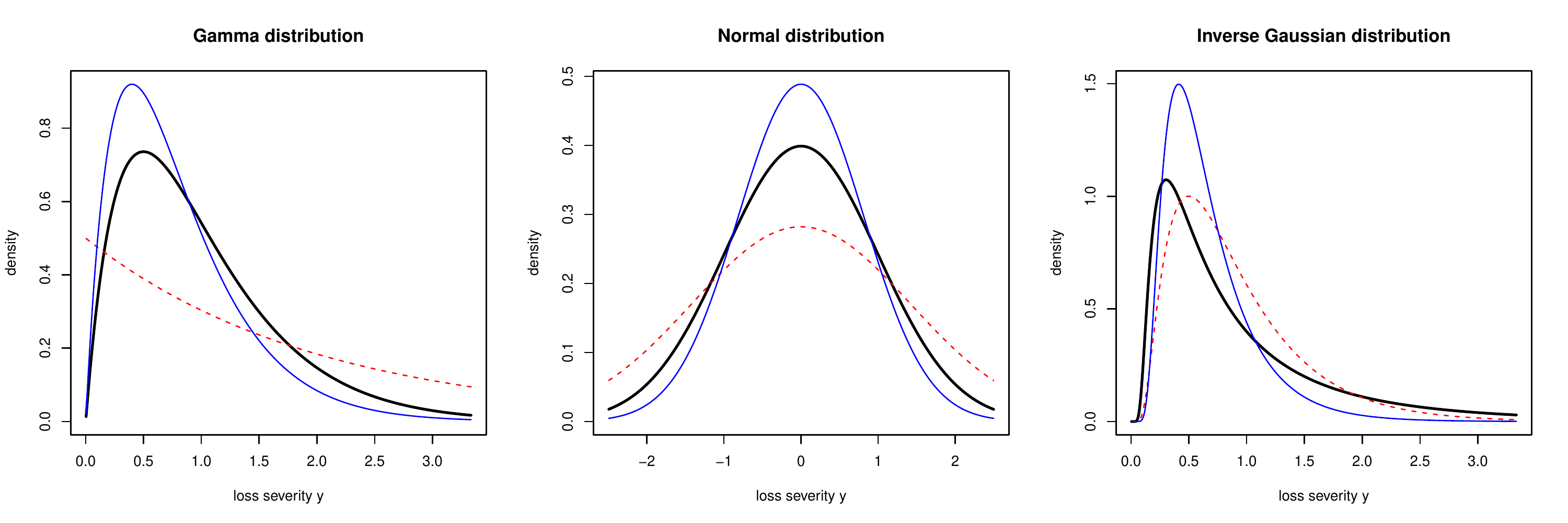}
\end{center}
\vspace{-0.5cm}
\caption{Weight function $W(y_i;\bm{x}_i)$ (dotted curves), original density function $f(y_i;\bm{x}_i,\bm{\Psi})$ (thick solid curves) and transformed density function $f^{*}(y_i;\bm{x}_i,\bm{\Psi})$ (thin solid curves), under Gamma (left panel), Normal (middle panel) and Inverse-Gaussian (right panel) distributions.}
\label{fig:prelim}
\end{figure}

\subsection{Estimation} \label{sec:swle:est}
With a carefully chosen weight function, the SWLE score functions (Equations (\ref{eq:thm:score_glm_t}) and (\ref{eq:thm:score_glm_p})) are analytically tractable. In particular, Equation (\ref{eq:thm:score_glm_t}) looks almost the same as the standard system of GLM score functions for the MLE regression parameter estimations. Therefore, it is still possible to employ a standard IRLS-type approach to estimate the GLM parameters under the proposed SWLE approach. Making use of this desirable property, we propose an SWLE model calibration algorithm, which allows borrowing and adapting the existing software packages (such as \texttt{glm} function in \texttt{R}) to estimate SWLE parameters efficiently. The steps of the algorithm are outlined as follows:
\begin{enumerate}
\item Set the initial parameters as $\bm{\Psi}^{[0]}:=(\bm{\beta}^{[0]},\phi^{[0]})$.
\item For each iteration $r=1,2,\ldots$, do:
\begin{itemize}
	\item Update the regression coefficients $\bm{\beta}^{[r]}$ by solving Equation (\ref{eq:thm:score_glm_t}) with the dispersion parameter $\phi=\phi^{[r-1]}$ fixed. This can be done by the IRLS procedures directly using \texttt{glm} function in \texttt{R}, setting prior weights \texttt{weights} as $W(y_i,\bm{x}_i)$ and custom link function \texttt{link} as $\eta^*(\cdot)=(A'\circ\xi^*)^{-1}(\cdot)$, where $\xi^*(\cdot)$ is the transformed mapping function (Corollary \ref{cor:thm:density}).
	\item Update the dispersion parameter $\phi^{[r]}$ by solving Equation (\ref{eq:thm:score_glm_p}) with the regression coefficients being fixed as $\bm{\beta}=\bm{\beta}^{[r-1]}$. This can be simply done by Newton-Raphson method using \texttt{uniroot} function in \texttt{R}.
\end{itemize}
\item Continue iterating step 2 until the absolute change of the iterated parameter values is smaller than a particular threshold (e.g. $|\bm{\Psi}^{[r]}-\bm{\Psi}^{[r-1]}|<10^{-6}$).
\end{enumerate}

If the canonical link function is chosen, from Corollary \ref{cor:thm:density}, the resulting transformed density function $f^{*}(y_i;\bm{x}_i,\bm{\Psi}^{*})$ is still a GLM with the same link function but transformed parameters $\bm{\Psi}^{*}=(\bm{\beta}^{*},\phi^{*})$. Further, the score function in Equation (\ref{eq:thm:score_glm_t}) can be simplified as
\begin{align} \label{eq:est:score_glm_t}
\sum_{i=1}^{n}W(y_i,\bm{x}_i)\left(y-A'(\theta_i^*)\right)\bm{x}_i=\bm{0},
\end{align}
which is a set of standard weighted GLM score functions which depends only on the transformed regression coefficients $\bm{\beta}^{*}$. As a result, it is even more computationally appealing to estimate $\bm{\Psi}^{*}$ first and then transform it back to $\bm{\Psi}$. The proposed algorithm consists of the following steps:
\begin{enumerate}
\item Obtain an IRLS estimate of the transformed regression parameters $\hat{\bm{\beta}}^{*}$ from Equation (\ref{eq:est:score_glm_t}), using \texttt{glm} function in \texttt{R} with \texttt{weights} being $W(y_i,\bm{x}_i)$ and \texttt{link} being the canonical function.
\item Obtain an estimated transformed dispersion parameter $\hat{\phi}^{*}$ by solving Equation (\ref{eq:thm:score_glm_p}) with the transformed regression coefficients being fixed as the estimated values in the previous step. This can be done by \texttt{uniroot} function in \texttt{R}.
\item Revert the transformed parameters $\hat{\bm{\Psi}}^{*}$ to obtain the estimated parameters $\hat{\bm{\Psi}}$:
\begin{align}
\hat{\phi}=\left(\frac{1}{\hat{\phi}^{*}}-\frac{1}{\tilde{\phi}}+c\right)^{-1}
,\qquad \hat{\beta}=\left(\frac{\hat{\bm{\beta}}^{*}}{\hat{\phi}^{*}}-\frac{\tilde{\bm{\beta}}}{\tilde{\phi}}\right)\phi.
\end{align}
\end{enumerate}

No further iterations are needed in steps 1 and 2 above. In this case, the computational burden under SWLE is almost the same as that under MLE.

\subsection{Asymptotic properties} \label{sec:swle:asymp}
The following theorem shows that the proposed SWLE approach leads to the convergence to true model parameters asymptotically as $n\rightarrow\infty$ and quantify the asymptotic parameter uncertainties:

\begin{theorem} \label{thm:asymp}
Suppose that $Y_i|\bm{x}_i$ follows GLM with density function in the form of Equation (\ref{eq:glm}) and true model parameters $\bm{\Psi}_0:=(\bm{\beta}_0,\phi_0)$. Assume that the mild regularity conditions outlined in Section \ref{apx:sec:reg_com_asymp} of the Appendix are satisfied. Then, there exists a solution $\hat{\bm{\Psi}}_n:=(\hat{\bm{\beta}}_n,\hat{\phi}_n)$ of the SWLE score equations $\mathcal{S}_{n,\theta}(\bm{\Psi};\bm{y},\bm{X})$ and $\mathcal{S}_{n,\phi}(\bm{\Psi};\bm{y},\bm{X})$ (Equations (\ref{eq:thm:score_glm_t}) and (\ref{eq:thm:score_glm_p})) such that
\begin{align}
\sqrt{n}(\hat{\bm{\Psi}}_n-\bm{\Psi}_0)\overset{d}{\rightarrow}\mathcal{N}(\bm{0},\bm{\Sigma}),
\end{align}
where $\bm{\Sigma}:=\bm{\Sigma}(\bm{\Psi}_0)=(\Gamma^{-1})\Lambda(\Gamma^{-1})^T$, with $\Gamma$ and $\Lambda$ being $(P+1)\times (P+1)$ matrices given by
\begin{align} \label{eq:asymp:matrix}
\Gamma:=\Gamma(\bm{\Psi}_0)=
\begin{pmatrix}
\Gamma_{\theta\theta}(\bm{\Psi}_0) & \Gamma_{\theta\phi}(\bm{\Psi}_0) \\
\Gamma_{\phi\theta}(\bm{\Psi}_0) & \Gamma_{\phi\phi}(\bm{\Psi}_0)
\end{pmatrix};\qquad
\Lambda:=\Lambda(\bm{\Psi}_0)=
\begin{pmatrix}
\Lambda_{\theta\theta}(\bm{\Psi}_0) & \Lambda_{\theta\phi}(\bm{\Psi}_0) \\
\Lambda_{\phi\theta}(\bm{\Psi}_0) & \Lambda_{\phi\phi}(\bm{\Psi}_0)
\end{pmatrix},
\end{align}
where the elements of the matrices are expressed as $\Gamma_{\theta\theta}(\bm{\Psi})=E_{\bm{x}}\left[W_{\theta\theta}(\bm{\Psi},\bm{x})\bm{x}\bm{x}^T\right]$, $\Gamma_{\theta\phi}(\bm{\Psi})=\Gamma_{\phi\theta}(\bm{\Psi})^T=E_{\bm{x}}\left[W_{\theta\phi}(\bm{\Psi},\bm{x})\bm{x}\right]$, $\Gamma_{\phi\phi}(\bm{\Psi})
=E_{\bm{x}}\left[W_{\phi\phi}(\bm{\Psi},\bm{x})\right]$, $\Lambda_{\theta\theta}(\bm{\Psi})=E_{\bm{x}}\left[V_{\theta\theta}(\bm{\Psi},\bm{x})\bm{x}\bm{x}^T\right]$, $\Lambda_{\theta\phi}(\bm{\Psi})
=\Lambda_{\phi\theta}(\bm{\Psi})^T=E_{\bm{x}}\left[V_{\theta\phi}(\bm{\Psi},\bm{x})\bm{x}\right]$ and $\Lambda_{\phi\phi}(\bm{\Psi})=E_{\bm{x}}\left[V_{\phi\phi}(\bm{\Psi},\bm{x})\right]$. Here, $E_{\bm{x}}[\cdot]$ is the expectation taken on $\bm{x}$.The analytical expressions of $W_{\theta\theta}(\bm{\Psi},\bm{x})$, $W_{\theta\phi}(\bm{\Psi},\bm{x})$, $W_{\phi\phi}(\bm{\Psi},\bm{x})$, $V_{\theta\theta}(\bm{\Psi},\bm{x})$, $V_{\theta\phi}(\bm{\Psi},\bm{x})$ and $V_{\phi\phi}(\bm{\Psi},\bm{x})$ are listed in Equations (\ref{apx:eq:w_tt})--(\ref{apx:eq:w_pp}) and (\ref{apx:eq:v_tt})--(\ref{apx:eq:v_pp}) in Section \ref{apx:sec:cov_asymp} of the Appendix.
If a canonical regression link is selected (i.e., $\xi(\cdot)$ is an identity function), the solution $\hat{\bm{\Psi}}_n$ will be unique.
\end{theorem}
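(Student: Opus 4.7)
The statement has three parts: existence of a solution $\hat{\bm{\Psi}}_n$, its asymptotic normality with a sandwich covariance, and uniqueness under the canonical link. The overall strategy is standard Z-estimator (M-estimator) asymptotics applied to the SWLE score in Equation (\ref{eq:thm:score_glm}), but the sandwich form $\Gamma^{-1}\Lambda(\Gamma^{-1})^T$ (rather than a plain Fisher information inverse) must be retained because the SWLE weights $W(y_i,\bm{x}_i)$ break the information matrix equality. I would structure the proof so that Fisher consistency of the score is established first, and then the mechanical Taylor expansion / LLN / CLT pipeline produces the stated limit.

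The crucial algebraic observation is that the individual score kernel simplifies to $\mathcal{S}(\bm{\Psi};y,\bm{x}) = W(y,\bm{x})\,\partial_{\bm{\Psi}}\log f^*(y;\bm{x},\bm{\Psi})$, because the factor $\lambda_i^*(\bm{\Psi};\bm{x}_i)\cdot f^*/f$ in Equation (\ref{eq:inference:swle}) collapses to $W(y,\bm{x})$ after substituting the definition of $f^*$. Using the identity $f(y;\bm{x},\bm{\Psi}_0) = f^*(y;\bm{x},\bm{\Psi}_0)\lambda^*(\bm{\Psi}_0;\bm{x})/W(y,\bm{x})$, one obtains
\begin{equation*}
E\bigl[\mathcal{S}(\bm{\Psi}_0;Y,\bm{x})\mid \bm{x}\bigr] = \lambda^*(\bm{\Psi}_0;\bm{x})\int \partial_{\bm{\Psi}}\log f^*(y;\bm{x},\bm{\Psi}_0)\,f^*(y;\bm{x},\bm{\Psi}_0)\,dy = 0
\end{equation*}
by the standard score identity applied to the valid density $f^*$. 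This Fisher consistency is exactly what the bias-adjustment term $\lambda_i^*$ was engineered to produce, and it is the step that would fail for a naive weighted likelihood without normalization.

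Given Fisher consistency, I would invoke a Cramér-type existence argument to obtain a sequence of roots $\hat{\bm{\Psi}}_n \xrightarrow{P} \bm{\Psi}_0$ of $\mathcal{S}_n=\bm{0}$ (combining Equations (\ref{eq:thm:score_glm_t}) and (\ref{eq:thm:score_glm_p})). Then Taylor-expand
\begin{equation*}
\bm{0} = \frac{1}{\sqrt{n}}\mathcal{S}_n(\hat{\bm{\Psi}}_n) = \frac{1}{\sqrt{n}}\mathcal{S}_n(\bm{\Psi}_0) + \Bigl(\frac{1}{n}\partial_{\bm{\Psi}}\mathcal{S}_n(\bar{\bm{\Psi}}_n)\Bigr)\sqrt{n}(\hat{\bm{\Psi}}_n-\bm{\Psi}_0),
\end{equation*}
for some intermediate $\bar{\bm{\Psi}}_n$. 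The uniform LLN (justified by the regularity conditions in Section \ref{apx:sec:reg_com_asymp}) gives $n^{-1}\partial_{\bm{\Psi}}\mathcal{S}_n(\bar{\bm{\Psi}}_n)\xrightarrow{P} -\Gamma(\bm{\Psi}_0)$, where the blockwise expectations define the entries $W_{\theta\theta}$, $W_{\theta\phi}$, $W_{\phi\phi}$. Since the summands $\mathcal{S}(\bm{\Psi}_0;Y_i,\bm{x}_i)$ are mean-zero and independent, the multivariate CLT yields $n^{-1/2}\mathcal{S}_n(\bm{\Psi}_0)\xrightarrow{d}\mathcal{N}(\bm{0},\Lambda(\bm{\Psi}_0))$, with blocks given by the second moments $V_{\theta\theta}, V_{\theta\phi}, V_{\phi\phi}$ (computed by differentiating $\log f^*$ twice and taking expectations against $f$, not $f^*$, which is why $\Gamma\neq\Lambda$ in general). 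Slutsky then delivers the sandwich variance. The main technical obstacle is verifying the uniform integrability / dominated-convergence side conditions needed to interchange $E$ and $\partial_{\bm{\Psi}}$ twice and to upgrade the pointwise LLN to a uniform one over a neighborhood of $\bm{\Psi}_0$; these will be handled by the regularity assumptions deferred to the appendix.

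For uniqueness under the canonical link, I would argue through the reparameterization in Corollary \ref{cor:thm:density}(2): with $\xi(\cdot)$ the identity, $f^*(y;\bm{x},\bm{\Psi}) = f(y;\bm{x},\bm{\Psi}^*)$ where $\bm{\Psi}\mapsto\bm{\Psi}^*$ is an affine bijection. The MWLE objective in Equation (\ref{eq:inference:mwle}), of which the SWLE score is the gradient, then becomes $\sum_i W(y_i,\bm{x}_i)\log f(y_i;\bm{x}_i,\bm{\Psi}^*)$; because $W\geq 0$ (or can be taken so after absorbing the proportionality constant) and $\log f$ is strictly concave in $\bm{\Psi}^*$ for canonical-link exponential-dispersion GLMs (a standard consequence of $A(\cdot)$ being strictly convex), the objective is strictly concave in $\bm{\Psi}^*$ and hence has at most one stationary point. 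Bijectivity of $\bm{\Psi}\mapsto\bm{\Psi}^*$ transfers uniqueness back to $\hat{\bm{\Psi}}_n$.
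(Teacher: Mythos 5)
Your treatment of existence and asymptotic normality is correct and is essentially the paper's argument: the paper proves Theorem \ref{thm:asymp} by specializing Theorem \ref{thm:censtrun:asymp} to $M=0$, $\mathcal{U}=\mathcal{T}=\mathbb{R}$, and that proof consists of exactly the two ingredients you supply --- Fisher consistency $E_Y[\mathcal{S}(\bm{\Psi}_0;Y,\bm{x})\mid\bm{x}]=\bm{0}$, obtained because the bias-adjustment factor turns the score kernel into $\lambda^*(\bm{\Psi}_0;\bm{x})\int \partial_{\bm{\Psi}}f^*\,dy=\bm{0}$, followed by the standard Z-estimator sandwich (the paper cites Theorems 5.41--5.42 of van der Vaart rather than writing out the Taylor expansion, but the content is the same, with $\Gamma=E[\partial_{\bm{\Psi}}\mathcal{S}^T]$ and $\Lambda=E[\mathcal{S}\mathcal{S}^T]$). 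Your observation that the score kernel collapses to $W(y,\bm{x})\,\partial_{\bm{\Psi}}\log f^*$ is precisely Equation (\ref{eq:thm:score_glm}).

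The gap is in your uniqueness argument for the canonical-link case. First, the map $\bm{\Psi}\mapsto\bm{\Psi}^*$ is a diffeomorphism but not affine ($\phi^*$ is a M\"obius function of $\phi$ and $\bm{\beta}^*$ depends on $\phi$); this is harmless, since invertibility of the Jacobian is all you need to transfer stationary points. The real problem is the claim that $\sum_i W_i\log f(y_i;\bm{x}_i,\bm{\Psi}^*)$ is \emph{jointly} strictly concave in $(\bm{\beta}^*,\phi^*)$. Convexity of $A(\cdot)$ gives strict concavity in $\bm{\beta}^*$ for fixed $\phi^*$ only. Joint concavity fails already for the normal model: with $g(\beta,\phi)=-\beta^2/(2\phi)-\tfrac12\log\phi$ the Hessian determinant is $-1/(2\phi^3)<0$, so the Hessian is indefinite. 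The same failure occurs for the Gamma and inverse-Gaussian cases, so "at most one stationary point" does not follow from your argument. What can be salvaged is weaker: under the canonical link the $\bm{\beta}^*$-score (\ref{eq:est:score_glm_t}) does not involve $\phi^*$, so $\hat{\bm{\beta}}^*$ is unique by concavity in $\bm{\beta}^*$ alone (given positive weights and a full-rank design), and uniqueness of the full solution then reduces to showing the one-dimensional $\phi$-equation (\ref{eq:thm:score_glm_p}) with $\hat{\bm{\beta}}^*$ substituted has a single root --- a monotonicity argument you would still need to supply. For what it is worth, the paper's own proof never addresses the uniqueness clause at all, so this is a claim for which neither your write-up nor the paper currently gives a complete justification.
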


In practice, we are unable to obtain the exact covariance matrix $\bm{\Sigma}$ above because the true model parameters $\bm{\Psi}_0$ are unobserved, and the randomness of $\bm{x}$ is not modelled explicitly to compute the expectation $E_{\bm{x}}[\cdot]$. We, therefore, estimate the uncertainties of fitted model parameters $\hat{\bm{\Psi}}_n$ as follows:
\begin{align}
\text{Var}(\hat{\bm{\Psi}}_n)\approx\frac{1}{n}(\tilde{\Gamma}^{-1})\tilde{\Lambda}(\tilde{\Gamma}^{-1})^T,
\end{align}
where
\begin{align}
\tilde{\Gamma}:=
\begin{pmatrix}
\tilde{\Gamma}_{\theta\theta}(\hat{\bm{\Psi}}_n) & \tilde{\Gamma}_{\theta\phi}(\hat{\bm{\Psi}}_n) \\
\tilde{\Gamma}_{\phi\theta}(\hat{\bm{\Psi}}_n) & \tilde{\Gamma}_{\phi\phi}(\hat{\bm{\Psi}}_n)
\end{pmatrix}
=\frac{1}{n}
\begin{pmatrix}
\bm{X}^T\tilde{W}_{\theta\theta}(\hat{\bm{\Psi}}_n,\bm{X})\bm{X} & \tilde{W}_{\theta\phi}(\hat{\bm{\Psi}}_n,\bm{X})\bm{X} \\
\bm{X}^T\tilde{W}_{\phi\theta}(\hat{\bm{\Psi}}_n,\bm{X}) & \tilde{W}_{\phi\phi}(\hat{\bm{\Psi}}_n,\bm{X})
\end{pmatrix}
\end{align}
and
\begin{align}
\tilde{\Lambda}:=
\begin{pmatrix}
\tilde{\Lambda}_{\theta\theta}(\hat{\bm{\Psi}}_n) & \tilde{\Lambda}_{\theta\phi}(\hat{\bm{\Psi}}_n) \\
\tilde{\Lambda}_{\phi\theta}(\hat{\bm{\Psi}}_n) & \tilde{\Lambda}_{\phi\phi}(\hat{\bm{\Psi}}_n)
\end{pmatrix}
=\frac{1}{n}
\begin{pmatrix}
\bm{X}^T\tilde{V}_{\theta\theta}(\hat{\bm{\Psi}}_n,\bm{X})\bm{X} & \tilde{V}_{\theta\phi}(\hat{\bm{\Psi}}_n,\bm{X})\bm{X} \\
\bm{X}^T\tilde{V}_{\phi\theta}(\hat{\bm{\Psi}}_n,\bm{X}) & \tilde{V}_{\phi\phi}(\hat{\bm{\Psi}}_n,\bm{X})
\end{pmatrix}
\end{align}
with $\tilde{W}_{\theta\theta}(\hat{\bm{\Psi}}_n,\bm{X})=\text{diag}(\{W_{\theta\theta}(\hat{\bm{\Psi}}_n,\bm{x}_i)\}_{i=1,\ldots,n})$, 
$\tilde{W}_{\theta\phi}(\hat{\bm{\Psi}}_n,\bm{X})=\tilde{W}_{\phi\theta}(\hat{\bm{\Psi}}_n,\bm{X})^T=(\{W_{\theta\phi}(\hat{\bm{\Psi}}_n,\bm{x}_i)\}_{i=1,\ldots,n})$, 
$\tilde{W}_{\phi\phi}(\hat{\bm{\Psi}}_n,\bm{X})=\sum_{i=1}^{n}W_{\phi\phi}(\hat{\bm{\Psi}}_n,\bm{x}_i)$, 
$\tilde{V}_{\theta\theta}(\hat{\bm{\Psi}}_n,\bm{X})=\text{diag}(\{V_{\theta\theta}(\hat{\bm{\Psi}}_n,\bm{x}_i)\}_{i=1,\ldots,n})$, 
$\tilde{V}_{\theta\phi}(\hat{\bm{\Psi}}_n,\bm{X})=\tilde{V}_{\phi\theta}(\hat{\bm{\Psi}}_n,\bm{X})^T=(\{V_{\theta\phi}(\hat{\bm{\Psi}}_n,\bm{x}_i)\}_{i=1,\ldots,n})$, and 
$\tilde{V}_{\phi\phi}(\hat{\bm{\Psi}}_n,\bm{X})=\sum_{i=1}^{n}V_{\phi\phi}(\hat{\bm{\Psi}}_n,\bm{x}_i)$.

\subsection{Robustness analysis}
The robustness of the proposed SWLE can also be justified by showing that the SWLE has a bounded sensitivity against model perturbations. Assume that $(Y_i,\bm{x}_i)$ are generated by a contamination model
\begin{align} \label{eq:robust:model}
\tilde{F}(y_i,\bm{x}_i;\epsilon,\Delta,\bm{\Psi}_0)=(1-\epsilon)F(y_i,\bm{x}_i;\bm{\Psi}_0)+\epsilon\Delta(y_i,\bm{x}_i),
\end{align}
where $F(y_i,\bm{x}_i;\bm{\Psi}_0)=F(y_i;\bm{x}_i,\bm{\Psi}_0)H(\bm{x}_i)$ is a joint distribution on $(Y_i,\bm{x}_i)$ with $F(y_i;\bm{x}_i,\bm{\Psi}_0)$ being the cdf of $Y_i|\bm{x}_i$ under the GLM in Equation (\ref{eq:glm}), whereas $\Delta(y_i,\bm{x}_i)$ is a contamination distribution function on $(Y_i,\bm{x}_i)$. The sensitivity of the estimated parameters by model contaminations can be evaluated using the influence function (IF, \cite{hampel1974influence}):
\begin{align} \label{eq:robust:if}
\text{IF}(\bm{\Psi}_0;F,\Delta)=\lim_{\epsilon\rightarrow 0}\frac{\tilde{\bm{\Psi}}^{\epsilon,\Delta}-\bm{\Psi}_0}{\epsilon},
\end{align}
where $\tilde{\bm{\Psi}}^{\epsilon,\Delta}$ are the asymptotic estimated parameters if the data generating model is $\tilde{F}(y_i,\bm{x}_i;\epsilon,\Delta,\bm{\Psi}_0)$ in Equation (\ref{eq:robust:model}). It is obvious that the influence function can be unbounded under the MLE approach when the contamination distribution function $\Delta(y_i,\bm{x}_i)$ assigns a probability mass on an arbitrarily (extremely) large $y_i$. On the other hand, we have the following theorem to ensure that the influence function is bounded under the SWLE.

\begin{theorem} \label{thm:if}
Suppose that the following assumptions are satisfied:
\begin{enumerate}[label=(\roman*)]
\item There exists a compact space $\bar{\mathcal{X}}$ such that $\int_{\mathcal{Y}\times\bar{\mathcal{X}}}d\Delta(y_i,\bm{x}_i)=1$, i.e., the covariates generated by the perturbed distribution are bounded by $\bar{\mathcal{X}}$ almost surely.
\item There exists a finite function $\mathcal{P}(\bm{x}_i)$ such that $|W(y_i,\bm{x}_i)|\leq \mathcal{P}(\bm{x}_i)$, $|W(y_i,\bm{x}_i)y_i|\leq \mathcal{P}(\bm{x}_i)$ and $|W(y_i,\bm{x}_i)g(y_i)|\leq \mathcal{P}(\bm{x}_i)$ for every $y_i\in\mathcal{Y}$.
\item The mild regularity conditions outlined in Section \ref{apx:sec:reg_com_asymp} of the Appendix are satisfied.
\end{enumerate}
Denote $\bm{\Upsilon}$ as a collection of all distribution functions on $(y_i,\bm{x}_i)$. Under the SWLE, we have
\begin{align} \label{eq:robust:if_bound}
\sup_{\Delta\in\bm{\Upsilon}}|\text{IF}(\bm{\Psi}_0;F,\Delta)|<\infty.
\end{align}
\end{theorem}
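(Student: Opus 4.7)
The SWLE is characterized by the unbiased estimating equation $\sum_{i=1}^{n}\mathcal{S}(\bm{\Psi};y_i,\bm{x}_i)=\bm{0}$ of Equation (\ref{eq:thm:score_glm}), so it is a standard M-estimator and the influence function can be obtained by the classical calculation of \cite{hampel1974influence}. Define the statistical functional $T(F)$ implicitly via $\int\mathcal{S}(T(F);y,\bm{x})\,dF(y,\bm{x})=\bm{0}$, so that $\tilde{\bm{\Psi}}^{\epsilon,\Delta}=T(\tilde{F})$, and note $T(F)=\bm{\Psi}_0$ by Fisher consistency (implicit in Theorem \ref{thm:asymp}). Differentiating $\int\mathcal{S}(T(\tilde{F}^{\epsilon,\Delta});y,\bm{x})\,d\tilde{F}^{\epsilon,\Delta}(y,\bm{x})=\bm{0}$ in $\epsilon$ at $\epsilon=0$ and rearranging yields
\begin{equation*}
\text{IF}(\bm{\Psi}_0;F,\Delta)=-\Gamma(\bm{\Psi}_0)^{-1}\int_{\mathcal{Y}\times\mathcal{X}}\mathcal{S}(\bm{\Psi}_0;y,\bm{x})\,d\Delta(y,\bm{x}),
\end{equation*}
where $\Gamma(\bm{\Psi}_0)$ is exactly the matrix from Theorem \ref{thm:asymp}.

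Since the regularity conditions (assumption (iii)) ensure that $\Gamma(\bm{\Psi}_0)$ is invertible with finite operator norm, it suffices to bound the integral $\int\mathcal{S}(\bm{\Psi}_0;y,\bm{x})\,d\Delta(y,\bm{x})$ uniformly over $\Delta\in\bm{\Upsilon}$. Assumption (i) restricts the support of $\Delta$ in $\bm{x}$ to the compact set $\bar{\mathcal{X}}$, so it is enough to prove the pointwise estimate $\sup_{(y,\bm{x})\in\mathcal{Y}\times\bar{\mathcal{X}}}\|\mathcal{S}(\bm{\Psi}_0;y,\bm{x})\|<\infty$. Inspecting Equations (\ref{eq:thm:score_glm_t}) and (\ref{eq:thm:score_glm_p}), each summand of $\mathcal{S}(\bm{\Psi}_0;y,\bm{x})$ can be expanded into a finite sum of products of two kinds of factors: (a) a function of $\bm{x}$ alone built from $\xi$, $\xi'$, $A$, $A'$, $b'$, $\theta_i(\bm{x})$, $\tilde{\theta}_i(\bm{x})$, and $\theta_i^*(\bm{x})$, all continuous in $\bm{x}$ and hence bounded on $\bar{\mathcal{X}}$; and (b) one of the three expressions $W(y,\bm{x})$, $W(y,\bm{x})y$, or $W(y,\bm{x})g(y)$, each uniformly dominated in $y$ by $\mathcal{P}(\bm{x})$ thanks to assumption (ii), which is in turn bounded on the compact $\bar{\mathcal{X}}$. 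Multiplying the two bounds and summing the finitely many terms yields a single constant $M$ with $\|\mathcal{S}(\bm{\Psi}_0;y,\bm{x})\|\leq M$ on $\mathcal{Y}\times\bar{\mathcal{X}}$, so $\sup_{\Delta\in\bm{\Upsilon}}\|\text{IF}(\bm{\Psi}_0;F,\Delta)\|\leq M\|\Gamma(\bm{\Psi}_0)^{-1}\|<\infty$, proving (\ref{eq:robust:if_bound}).

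The principal bookkeeping step, and the only place that is not entirely routine, is expanding the dispersion score $\mathcal{S}_\phi$ in Equation (\ref{eq:thm:score_glm_p}). The composite terms $W(y,\bm{x})(y-A'(\theta_i^*))\phi^*[(c-1/\tilde{\phi})\theta_i+\tilde{\theta}_i/\tilde{\phi}]$ and $W(y,\bm{x})[\theta_i^* y-A(\theta_i^*)+g(y)]$ must be multiplied out into the canonical $W\cdot h(y)\cdot q(\bm{x})$ form with $h(y)\in\{1,y,g(y)\}$, so that assumption (ii) applies piece by piece. Once that decomposition is carried out, compactness of $\bar{\mathcal{X}}$, continuity of the remaining $\bm{x}$-dependent factors, and the invertibility of $\Gamma(\bm{\Psi}_0)$ from Theorem \ref{thm:asymp} close the argument.
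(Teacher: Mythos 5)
Your proposal is correct and follows essentially the same route as the paper's proof: invoke the standard M-estimator influence function formula $\text{IF}=-\Gamma^{-1}\int\mathcal{S}\,d\Delta$, use the regularity conditions for finiteness of $\Gamma$, restrict to $\mathcal{Y}\times\bar{\mathcal{X}}$ via Assumption (i), and reduce boundedness of the score to boundedness of $W(y,\bm{x})$, $W(y,\bm{x})y$ and $W(y,\bm{x})g(y)$ via Assumption (ii) together with compactness and smoothness of the $\bm{x}$-dependent factors. Your explicit derivation of the IF formula and the term-by-term expansion of the dispersion score merely spell out steps the paper leaves to the cited reference.
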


\begin{remark} \label{rmk:robust}
The rationale of stating Assumption (i) in Theorem \ref{thm:if} is that the proposed SWLE primarily aims to ensure the robustness of the estimated parameters against the outliers on $y_i$ instead of $\bm{x}_i$. In insurance practice, it often makes sense to consider bounded covariates space only: For the European automobile dataset in Section \ref{sec:data:euro}, variables are either categorical (e.g., car fuel type) or discrete (e.g. policyholder age) with practical upper limits (e.g., no policyholders are older than 120 years).
\end{remark}

Finally, we need to check the validity of Assumption (ii) of Theorem \ref{thm:if} under some example cases:

\begin{corollary} \label{cor:if}
Suppose that $F(y_i;\bm{x}_i,\bm{\Psi}_0)$ is the cdf of the Gamma GLM, linear model or inverse-Gaussian GLM specified in Example \ref{eg:thm}, and the weight function is in the form of Equation (\ref{eq:thm:weight}), with the hyperparameters selected in accordance to Example \ref{eg:thm}. Then, Assumption (ii) of Theorem \ref{thm:if} holds.
\end{corollary}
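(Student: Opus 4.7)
The plan is to verify the three pointwise bounds in Assumption (ii) case by case, by substituting the explicit form of the weight function from Equation (\ref{eq:thm:weight}) with the hyperparameters specified in Example \ref{eg:thm}. In each case the products $|W|$, $|Wy_i|$, and $|Wg(y_i)|$ reduce to elementary functions of $y_i$ whose $\bm{x}_i$-dependence enters only through $\tilde{\theta}_i=\xi(\bm{x}_i^T\tilde{\bm{\beta}})$, so the suprema over $y_i\in\mathcal{Y}$ depend continuously on $\tilde{\theta}_i$; $\mathcal{P}(\bm{x}_i)$ can then be taken as the pointwise maximum of the three suprema.

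For the linear model on $\mathcal{Y}=\mathbb{R}$, the weight $W(y_i,\bm{x}_i)\propto\exp\{\tilde{\theta}_i y_i/\tilde{\phi}-y_i^2/(2\tilde{\phi})\}$ is Gaussian and $g(y_i)=-y_i^2/2$, so each of the three products is a polynomial in $y_i$ times a Gaussian and admits a closed-form interior maximum by elementary calculus. For the Inverse-Gaussian GLM on $\mathcal{Y}=(0,\infty)$, the weight $W(y_i,\bm{x}_i)=\exp\{\tilde{\theta}_i y_i/\tilde{\phi}-1/(2y_i\tilde{\phi})\}$ with $\tilde{\theta}_i<0$ vanishes faster than any polynomial both as $y_i\to 0^+$ (via the $-1/(2y_i\tilde{\phi})$ term) and as $y_i\to\infty$ (via the $\tilde{\theta}_i y_i/\tilde{\phi}$ term); in particular $|Wg(y_i)|=W/(2y_i)$ is controlled at $0$ by the super-polynomial decay of $W$ and at infinity by the exponential decay, and $|Wy_i|$ is controlled at infinity in the same way.

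The main obstacle is the Gamma GLM on $\mathcal{Y}=(0,\infty)$, where $g(y_i)=\log y_i$ has a logarithmic singularity at the left endpoint. Here Equation (\ref{eq:thm:weight}) gives $W(y_i,\bm{x}_i)\propto y_i^{1/\tilde{\phi}-1}\exp\{\tilde{\theta}_i y_i/\tilde{\phi}\}$, a Gamma density. For the third bound to hold as $y_i\to 0^+$, the polynomial factor must have strictly positive exponent, i.e.\ $\tilde{\phi}\in(0,1)$; under this mild restriction the elementary limit $\lim_{y\to 0^+}y^\alpha|\log y|=0$ for $\alpha>0$ kills the logarithmic singularity, while the exponential factor kills the growth as $y_i\to\infty$. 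All three products then attain finite maxima depending continuously on $\tilde{\theta}_i$, and taking $\mathcal{P}(\bm{x}_i)$ as the pointwise maximum completes the verification. The boundary choice $\tilde{\phi}=1$ from the simplification in Example \ref{eg:thm} satisfies the first two bounds but not the third, so either the restriction $\tilde{\phi}<1$ must be imposed or Assumption (ii) has to be weakened to an integrability-type condition.
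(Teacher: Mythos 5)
Your overall strategy is exactly the paper's: the published proof is a two-sentence argument asserting that $W(y,\bm{x})$, $W(y,\bm{x})y$ and $W(y,\bm{x})g(y)$ all tend to $0$ at $\inf\{\mathcal{Y}\}$ and $\sup\{\mathcal{Y}\}$ and that this "can all be easily verified" for the three examples, so boundedness follows from continuity on the interior. Your treatment of the linear model and the inverse-Gaussian GLM is correct and coincides with what the paper intends (Gaussian decay kills all polynomial factors; $e^{-1/(2y\tilde{\phi})}$ kills the $1/(2y)$ singularity at $0$ and $e^{\tilde{\theta}y/\tilde{\phi}}$ with $\tilde{\theta}<0$ kills everything at $\infty$).

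Where you go beyond the paper is the Gamma case, and your objection there is genuine and correct. Example \ref{eg:thm} particularizes $\tilde{\phi}=1$, giving $W(y)=e^{\tilde{\theta}y}$ with $\tilde{\theta}<0$, and then $W(y)g(y)=e^{\tilde{\theta}y}\log y\to-\infty$ as $y\to 0^+$, so the third bound in Assumption (ii) fails and $\sup_{y>0}|W(y)g(y)|=\infty$; this matters because $W g$ appears in the $\phi$-score (Equation (\ref{eq:thm:score_glm_p})), so a contaminating point mass near $0$ would make the influence function blow up. The paper's proof silently asserts the limit is zero here, which it is not, so the corollary as literally stated (with the Example \ref{eg:thm} hyperparameters) is not established by the paper's own argument. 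Your repair --- requiring $\tilde{\phi}\in(0,1)$ so that $W(y)\propto y^{1/\tilde{\phi}-1}e^{\tilde{\theta}y/\tilde{\phi}}$ carries a strictly positive power of $y$ and $y^{\alpha}|\log y|\to 0$ for $\alpha>0$ --- is the right fix, and your closing observation that one must either impose $\tilde{\phi}<1$ or weaken Assumption (ii) is a fair assessment of what the corollary actually needs. In short: same route as the paper, executed more carefully, and you have correctly flagged a gap in the published verification of the Gamma case.
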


\section{Model diagnostic with SWLE} \label{sec:diag}
The previous sections introduce the SWLE as a statistical inference tool that is less sensitive against outliers. In insurance practice, the modelling challenge may not arise just from model contamination but also from a more fundamental problem that the true (data generating) model systematically deviates from the GLM. Such deviations may include tail-heaviness, distributional multimodality, non-linear regression links, and dispersion heterogeneity. In this case, the GLM may produce misleading pricing recommendations, and hence actuaries must reconsider alternative modelling frameworks. The research question is, are there any quantitative measures to recommend if one is worthwhile to try alternative models. In other words, we want to test the null hypothesis that
$$
H_0:\text{The data is generated by a specified class of GLM}
$$
against the alternative hypothesis $H_1$ that $H_0$ is false. In this section, we propose a novel Wald-test statistic based on the SWLE to quantitatively assess the hypothesis. Under $H_0$, $Y_i|\bm{x}_i$ follows the GLM with density function given by Equation (\ref{eq:glm}) and true (unknown) model parameters $\bm{\Psi}_0:=(\bm{\beta}_0,\phi_0)$.
For $k=1,\ldots,K$, denote $\hat{\bm{\Psi}}^{(k)}_n$ as the estimated parameters under the SWLE where the weight function is given by Equation (\ref{eq:thm:weight}) with hyperparameters $\tilde{\bm{\Psi}}^{(k)}:=(\tilde{\bm{\beta}}^{(k)},\tilde{\phi}^{(k)})$ and $\tilde{\theta}_i=\xi(\bm{x}_i^T\tilde{\bm{\beta}}^{(k)})$. Here, $K$ represents the total number of different sets of weight functions hyperparameters we consider. 

Under $H_0$, we expect that the choice of weight function hyperparameters would not impact the estimated parameters significantly because the SWLE is asymptotically consistent (Theorem \ref{thm:asymp}) regardless of the hyperparameters chosen. In other words, large absolute difference of estimated parameters $|\hat{\bm{\Psi}}^{(k)}_n-\hat{\bm{\Psi}}^{(k')}_n|$ (for some $k\neq k'$) is an evidence to reject $H_0$. In this case, the data generating model may not be within the specified GLM class, and further explorations of alternative models are recommended. This motivates us to introduce the following two theorems, which provide a foundation on the construction of Wald statistic based on $(\hat{\bm{\Psi}}^{(2)}_n-\hat{\bm{\Psi}}^{(1)}_n,\hat{\bm{\Psi}}^{(3)}_n-\hat{\bm{\Psi}}^{(2)}_n,\ldots,\hat{\bm{\Psi}}^{(K)}_n-\hat{\bm{\Psi}}^{(K-1)}_n)$.
We first denote $\hat{\bm{\Psi}}_n^{\text{meta}}=(\hat{\bm{\Psi}}_n^{(1)},\ldots,\hat{\bm{\Psi}}_n^{(K)})$ as a $1\times(P+1)K$ horizontal vector containing all estimated parameters under various weight function hyperparameters. Also define $\bm{\Psi}_0^{\text{meta}}=(\bm{\Psi}_0,\ldots,\bm{\Psi}_0)$ as a $1\times(P+1)K$ horizontal vector containing $K$ sets of true model parameters. The following results hold:

\begin{theorem} \label{thm:diag:asymp}
Under $H_0$ with true model parameters $\bm{\Psi}_0$, and given that the mild regularity conditions outlined in Section \ref{apx:sec:reg_com_diag} of the Appendix hold, we have
\begin{align}
\sqrt{n}\left(\hat{\bm{\Psi}}_n^{\text{meta}}-\bm{\Psi}_0^{\text{meta}}\right)\overset{d}{\rightarrow}\mathcal{N}(\bm{0},\bm{\Sigma}^{\text{meta}}),
\end{align}
where $\bm{\Sigma}^{\text{meta}}$ is a $(P+1)K\times(P+1)K$ matrix given by
\begin{align} \label{eq:diag:sigma}
{\tiny
\Sigma^{\text{meta}}=
\begin{pmatrix}
\left(\left[\Gamma^{(1)}\right]^{-1}\right)\Lambda^{(1,1)}\left(\left[\Gamma^{(1)}\right]^{-1}\right)^T & \left(\left[\Gamma^{(1)}\right]^{-1}\right)\Lambda^{(1,2)}\left(\left[\Gamma^{(2)}\right]^{-1}\right)^T & 
\dots & 
\left(\left[\Gamma^{(1)}\right]^{-1}\right)\Lambda^{(1,K)}\left(\left[\Gamma^{(K)}\right]^{-1}\right)^T \\
\left(\left[\Gamma^{(2)}\right]^{-1}\right)\Lambda^{(2,1)}\left(\left[\Gamma^{(1)}\right]^{-1}\right)^T & 
\left(\left[\Gamma^{(2)}\right]^{-1}\right)\Lambda^{(2,2)}\left(\left[\Gamma^{(2)}\right]^{-1}\right)^T & 
\dots & 
\left(\left[\Gamma^{(2)}\right]^{-1}\right)\Lambda^{(2,K)}\left(\left[\Gamma^{(K)}\right]^{-1}\right)^T \\
\vdots & \vdots & \ddots & \vdots \\
\left(\left[\Gamma^{(K)}\right]^{-1}\right)\Lambda^{(K,1)}\left(\left[\Gamma^{(1)}\right]^{-1}\right)^T & 
\left(\left[\Gamma^{(K)}\right]^{-1}\right)\Lambda^{(K,2)}\left(\left[\Gamma^{(2)}\right]^{-1}\right)^T & 
\dots & 
\left(\left[\Gamma^{(K)}\right]^{-1}\right)\Lambda^{(K,K)}\left(\left[\Gamma^{(K)}\right]^{-1}\right)^T
\end{pmatrix},
}%
\end{align}
with the matrix elements analytically expressed in Section \ref{apx:sec:cov_diag} of the Appendix.
\end{theorem}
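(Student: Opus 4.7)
The plan is to treat $\hat{\bm{\Psi}}_n^{\text{meta}}$ as a single stacked M-estimator and apply the standard sandwich-form asymptotic theory, viewing the $K$ SWLE problems as one simultaneous estimating-equations system evaluated on a common dataset. Concretely, I would define the stacked score $\bm{S}_n^{\text{meta}}(\bm{\Psi}^{\text{meta}};\bm{y},\bm{X})$ as the vertical concatenation of the $K$ individual SWLE scores $\mathcal{S}_n^{(k)}(\bm{\Psi}^{(k)};\bm{y},\bm{X})$, where the $k$-th block uses the weight-function hyperparameters $\tilde{\bm{\Psi}}^{(k)}$. Because the $k$-th block depends only on $\bm{\Psi}^{(k)}$, the meta-Jacobian $\partial \bm{S}_n^{\text{meta}}/\partial \bm{\Psi}^{\text{meta}}$ is block-diagonal, which is the structural feature that ultimately produces the block form of $\bm{\Sigma}^{\text{meta}}$.

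Next, consistency of each component: Theorem \ref{thm:asymp} applied separately to each weight choice gives $\hat{\bm{\Psi}}_n^{(k)}\overset{p}{\to}\bm{\Psi}_0$ for every $k$, hence $\hat{\bm{\Psi}}_n^{\text{meta}}\overset{p}{\to}\bm{\Psi}_0^{\text{meta}}$. Then I perform a first-order Taylor expansion of each equation $\mathcal{S}_n^{(k)}(\hat{\bm{\Psi}}_n^{(k)};\bm{y},\bm{X})=\bm{0}$ around $\bm{\Psi}_0$. Under the regularity conditions of Section \ref{apx:sec:reg_com_diag}, a uniform law of large numbers for $(1/n)\partial\mathcal{S}_n^{(k)}/\partial\bm{\Psi}$ in a neighbourhood of $\bm{\Psi}_0$ gives convergence of each diagonal Jacobian block to $-\Gamma^{(k)}$; the full meta-Jacobian therefore converges in probability to $\Gamma^{\text{meta}}:=\mathrm{diag}(\Gamma^{(1)},\ldots,\Gamma^{(K)})$.

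For the distributional step, I apply a multivariate central limit theorem to $n^{-1/2}\bm{S}_n^{\text{meta}}(\bm{\Psi}_0^{\text{meta}};\bm{y},\bm{X})$, which is a sum of i.i.d.\ mean-zero $(P+1)K$-vectors obtained by stacking the $K$ single-observation scores $\mathcal{S}^{(k)}(\bm{\Psi}_0;y_i,\bm{x}_i)$. The Gaussian limit has covariance $\Lambda^{\text{meta}}$ whose $(k,k')$-block equals $\Lambda^{(k,k')}=E[\mathcal{S}^{(k)}(\bm{\Psi}_0;y_i,\bm{x}_i)\mathcal{S}^{(k')}(\bm{\Psi}_0;y_i,\bm{x}_i)^T]$. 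The crucial point is that the off-diagonal blocks $\Lambda^{(k,k')}$ ($k\neq k'$) are generally nonzero, since all $K$ scores are evaluated on the same observation; that is why the weights from different $\tilde{\bm{\Psi}}^{(k)}$ appear jointly inside the same expectation in the expressions of Section \ref{apx:sec:cov_diag}. Combining the Taylor expansion with Slutsky's theorem then yields
\begin{align*}
\sqrt{n}(\hat{\bm{\Psi}}_n^{\text{meta}}-\bm{\Psi}_0^{\text{meta}}) \overset{d}{\to} \mathcal{N}\!\left(\bm{0},\,(\Gamma^{\text{meta}})^{-1}\Lambda^{\text{meta}}\bigl((\Gamma^{\text{meta}})^{-1}\bigr)^T\right).
\end{align*}
Because $\Gamma^{\text{meta}}$ is block-diagonal, its inverse is $\mathrm{diag}((\Gamma^{(1)})^{-1},\ldots,(\Gamma^{(K)})^{-1})$, so the $(k,k')$-block of the sandwich is exactly $(\Gamma^{(k)})^{-1}\Lambda^{(k,k')}((\Gamma^{(k')})^{-1})^T$, matching (\ref{eq:diag:sigma}).

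The main obstacle I anticipate is the technical verification that a single uniform law of large numbers governs the meta-Jacobian across all $K$ weight choices simultaneously, together with control of the Taylor remainder in a neighbourhood that jointly contains all $K$ estimators. Once the regularity conditions in Section \ref{apx:sec:reg_com_diag} supply a joint dominating envelope for the Hessians of the $K$ weighted scores, the derivation of the block-sandwich covariance is a routine multivariate extension of the single-weight argument behind Theorem \ref{thm:asymp}; the only genuinely new ingredient is the explicit computation of the cross-block covariances $\Lambda^{(k,k')}$ deferred to Section \ref{apx:sec:cov_diag}.
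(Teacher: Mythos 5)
Your proposal is correct and follows essentially the same route as the paper: the paper likewise stacks the $K$ SWLE scores into a single meta M-estimating equation, invokes Theorems 5.41--5.42 of van der Vaart (2000) for consistency and the sandwich-form limit, observes that $\Gamma^{(k,k')}=\bm{0}$ for $k\neq k'$ so that $\Gamma^{\text{meta}}$ is block-diagonal, and obtains the $(k,k')$-block $(\Gamma^{(k)})^{-1}\Lambda^{(k,k')}((\Gamma^{(k')})^{-1})^T$ from the nonzero cross-covariances of the individual scores evaluated on the same observation. The only cosmetic difference is that the paper derives the general censored/truncated version (Theorem \ref{thm:censtrun:diag}) first and obtains Theorem \ref{thm:diag:asymp} as the special case $M=0$, $\mathcal{U}=\mathcal{T}=\mathbb{R}$, whereas you argue directly for complete data.
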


\begin{theorem} \label{thm:diag:chisq}
Denote $\bm{J}$ as a $Q\times(P+1)K$ design matrix. Then, we have
\begin{align}
n\left[\bm{J}\left(\hat{\bm{\Psi}}_n^{\text{meta}}-\bm{\Psi}_0^{\text{meta}}\right)^T\right]^T\left(\bm{J}\bm{\Sigma}^{\text{meta}}\bm{J}^T\right)^{-1}\left[\bm{J}\left(\hat{\bm{\Psi}}_n^{\text{meta}}-\bm{\Psi}_0^{\text{meta}}\right)^T\right]
\overset{d}{\rightarrow}\chi^2_{Q},
\end{align}
where $\chi^2_{Q}$ is a chi-square distribution with $Q$ degrees of freedom.
\end{theorem}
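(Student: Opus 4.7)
The plan is to reduce the statement to the standard Wald-type quadratic-form result for asymptotically normal estimators. First I would invoke Theorem \ref{thm:diag:asymp}, which already establishes the joint asymptotic normality
\begin{align*}
\sqrt{n}\left(\hat{\bm{\Psi}}_n^{\text{meta}}-\bm{\Psi}_0^{\text{meta}}\right)^T \overset{d}{\rightarrow} \mathcal{N}(\bm{0},\bm{\Sigma}^{\text{meta}}).
\end{align*}
Since $\bm{J}$ is a deterministic $Q\times(P+1)K$ matrix, the continuous mapping theorem (or equivalently the linear transformation property of the multivariate normal) gives
\begin{align*}
\sqrt{n}\,\bm{J}\left(\hat{\bm{\Psi}}_n^{\text{meta}}-\bm{\Psi}_0^{\text{meta}}\right)^T \overset{d}{\rightarrow} \mathcal{N}\!\left(\bm{0},\bm{J}\bm{\Sigma}^{\text{meta}}\bm{J}^T\right).
\end{align*}

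Next, provided that $\bm{J}\bm{\Sigma}^{\text{meta}}\bm{J}^T$ is a $Q\times Q$ positive-definite matrix (which I would list as an implicit regularity requirement on $\bm{J}$, typically ensured by choosing $\bm{J}$ of full row rank $Q$), a standard fact about Gaussian quadratic forms applies: if $\bm{Z}\sim\mathcal{N}(\bm{0},\bm{A})$ with $\bm{A}$ invertible of dimension $Q$, then $\bm{Z}^T\bm{A}^{-1}\bm{Z}\sim\chi^2_Q$. This follows by the Cholesky decomposition $\bm{A}=\bm{L}\bm{L}^T$, setting $\bm{U}=\bm{L}^{-1}\bm{Z}\sim\mathcal{N}(\bm{0},\bm{I}_Q)$, so that $\bm{Z}^T\bm{A}^{-1}\bm{Z}=\bm{U}^T\bm{U}$ is a sum of $Q$ independent squared standard normals. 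Applying the continuous mapping theorem once more to the quadratic functional $\bm{z}\mapsto \bm{z}^T(\bm{J}\bm{\Sigma}^{\text{meta}}\bm{J}^T)^{-1}\bm{z}$ (which is continuous on $\mathbb{R}^Q$) yields the claimed $\chi^2_Q$ limit.

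The main obstacle I anticipate is not the quadratic-form step itself but verifying that the relevant covariance $\bm{J}\bm{\Sigma}^{\text{meta}}\bm{J}^T$ is in fact nonsingular. Because all $K$ sub-vectors $\hat{\bm{\Psi}}^{(k)}_n$ are computed from the \emph{same} sample, their asymptotic covariances $\Lambda^{(k,k')}$ are generally non-zero and $\bm{\Sigma}^{\text{meta}}$ itself is typically singular — indeed, if all $\tilde{\bm{\Psi}}^{(k)}$ were chosen identically, the blocks would coincide and $\bm{\Sigma}^{\text{meta}}$ would have rank only $P+1$. The diagnostic contrasts $\hat{\bm{\Psi}}^{(k+1)}_n-\hat{\bm{\Psi}}^{(k)}_n$ that motivate the choice of $\bm{J}$ live in a quotient of this space, and I would need to verify that $\bm{J}$ is chosen so that its rows span a subspace on which $\bm{\Sigma}^{\text{meta}}$ remains positive definite. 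This amounts to a rank condition on $\bm{J}$ combined with sufficient variability among the hyperparameter sets $\{\tilde{\bm{\Psi}}^{(k)}\}_{k=1}^K$, which can be stated as an additional regularity assumption or verified directly for the contrast matrix $\bm{J}$ used in the test construction preceding the theorem. Once invertibility is granted, the proof is a one-line consequence of the two continuous mapping arguments above.
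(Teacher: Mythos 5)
Your proposal is correct and follows essentially the same route as the paper, which simply states that the result is ``an ordinary Wald statistic'' obtained from the asymptotic normality in Theorem \ref{thm:diag:asymp} via the linear-transformation and Gaussian quadratic-form arguments you spell out. Your additional observation that $\bm{J}\bm{\Sigma}^{\text{meta}}\bm{J}^T$ must be nonsingular --- which requires both full row rank of $\bm{J}$ and genuine variability among the hyperparameter sets $\{\tilde{\bm{\Psi}}^{(k)}\}$ --- is a legitimate condition that the paper leaves implicit, and making it explicit strengthens rather than departs from the argument.
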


Based on the above theorem, one can develop various versions of Wald-type statistics as follows:
\begin{enumerate}
\item Meta Wald statistic: We aggregate all the estimated parameters differences $\Delta\hat{\bm{\Psi}}^{\text{meta}}_n:=(\hat{\bm{\Psi}}^{(2)}_n-\hat{\bm{\Psi}}^{(1)}_n,\hat{\bm{\Psi}}^{(3)}_n-\hat{\bm{\Psi}}^{(2)}_n,\ldots,\hat{\bm{\Psi}}^{(K)}_n-\hat{\bm{\Psi}}^{(K-1)}_n)$ to create a combined (meta) test statistic. The $(P+1)(K-1)\times(P+1)K$ design matrix $\bm{J}$ in the above theorem is chosen as
\begin{align}
\bm{J}^{\text{meta}}=
\begin{pmatrix}
\bm{I} & -\bm{I} & \bm{0} & \dots & \bm{0} & \bm{0}\\
\bm{0} & \bm{I} & -\bm{I} & \dots & \bm{0} & \bm{0}\\
\vdots & & \ddots & & & \vdots\\
\bm{0} & \bm{0} & \bm{0} & \dots & \bm{I} & -\bm{I}
\end{pmatrix},
\end{align}
where $\bm{I}$ is a $(P+1)\times (P+1)$ diagonal matrix. The Wald-type meta statistic is given by
\begin{align} \label{eq:diag_wald_meta}
Z^{\text{meta}}_n=n\left(\Delta\hat{\bm{\Psi}}^{\text{meta}}_n\right)\left[\left(\bm{J}^{\text{meta}}\right)\left(\hat{\bm{\Sigma}}^{\text{meta}}\right)\left(\bm{J}^{\text{meta}}\right)^T\right]^{-1}\left(\Delta\hat{\bm{\Psi}}^{\text{meta}}_n\right)^T\overset{\cdot}{\sim}\chi^2_{(P+1)(K-1)},
\end{align}
where $\hat{\bm{\Sigma}}^{\text{meta}}$ is the estimated covariance matrix $\bm{\Sigma}^{\text{meta}}$ in Theorem \ref{thm:diag:asymp} evaluated at the fitted MLE parameters.
The meta Wald statistic provides a single value to quantitatively assess the overall adequateness of using GLM to fit the data.
\item Individual Wald statistic: We perform pairwise comparisons of estimated parameters between two specific sets of weight function hyperparameters (say, $\tilde{\bm{\Psi}}^{(k)}$ and $\tilde{\bm{\Psi}}^{(k')}$). In this case, the design matrix $\bm{J}^{\text{ind}}=(\bm{0},\cdots,\bm{0},\bm{I},\bm{0},\cdots,\bm{0},-\bm{I},\bm{0},\cdots,\bm{0})$ has a dimension of $(P+1)\times(P+1)K$, with only the $k$-th and $k'$-th blocks being non-zero. The individual Wald statistic is
\begin{align} \label{eq:diag_wald_ind}
Z_n^{\text{ind}}=n\left(\hat{\bm{\Psi}}_n^{(k)}-\hat{\bm{\Psi}}_n^{(k')}\right)\left[\hat{\bm{\Sigma}}^{(k,k')}\right]^{-1}\left(\hat{\bm{\Psi}}_n^{(k)}-\hat{\bm{\Psi}}_n^{(k')}\right)^T\overset{\cdot}{\sim}\chi^2_{(P+1)},
\end{align}
{\sloppy where we have $\bm{\Sigma}^{(k,k')}=\left(\left[\Gamma^{(k)}\right]^{-1}\right)\Lambda^{(k,k)}\left(\left[\Gamma^{(k)}\right]^{-1}\right)^T+\left(\left[\Gamma^{(k')}\right]^{-1}\right)\Lambda^{(k',k')}\left(\left[\Gamma^{(k')}\right]^{-1}\right)^T-\left(\left[\Gamma^{(k')}\right]^{-1}\right)\Lambda^{(k',k)}\left(\left[\Gamma^{(k)}\right]^{-1}\right)^T-\left(\left[\Gamma^{(k)}\right]^{-1}\right)\Lambda^{(k,k')}\left(\left[\Gamma^{(k')}\right]^{-1}\right)^T$, and $\hat{\bm{\Sigma}}^{(k,k')}$ is the estimated covariance $\bm{\Sigma}^{(k,k')}$ evaluated at fitted MLE parameters.}
\item Parameter-specific meta Wald statistic: We focus on a certain parameter of interest. From an insurance ratemaking perspective, actuaries are more interested in the regression coefficients to differentiate policyholders into various risk categories. From a risk management perspective, actuaries may be more interested in the dispersion parameter $\phi$, which governs the distribution's extreme losses. Suppose that the $p$-th parameter $\Psi_p$ is the parameter of interest ($p=1,\ldots,P+1$). The design matrix will then have a dimension of $(K-1)\times (P+1)K$, given by
\begin{align}
\bm{J}_p^{\text{meta}}=
\begin{pmatrix}
\bm{e}_p & -\bm{e}_p & \bm{0} & \dots & \bm{0} & \bm{0}\\
\bm{0} & \bm{e}_p & -\bm{e}_p & \dots & \bm{0} & \bm{0}\\
\vdots & & \ddots & & & \vdots\\
\bm{0} & \bm{0} & \bm{0} & \dots & \bm{e}_p & -\bm{e}_p
\end{pmatrix},
\end{align}
where $\bm{e}_p$ is a $(P+1)$ horizontal vector with the $p$-th element equals to one and zero otherwise. Denote $\Delta\hat{\Psi}^{\text{meta}}_{p,n}=(\hat{\Psi}^{(2)}_{p,n}-\hat{\Psi}^{(1)}_{p,n},\ldots,\hat{\Psi}^{(K)}_{p,n}-\hat{\Psi}^{(K-1)}_{p,n})$ as the aggregations of estimated parameter differences corresponding to the $p$-th parameter. The Wald-type statistic is
\begin{align} \label{eq:diag_wald_param_meta}
Z^{\text{meta}}_{p,n}=n\left(\Delta\hat{\Psi}^{\text{meta}}_{p,n}\right)\left[\left(\bm{J}^{\text{meta}}_p\right)\left(\hat{\bm{\Sigma}}^{\text{meta}}\right)\left(\bm{J}^{\text{meta}}_p\right)^T\right]^{-1}\left(\Delta\hat{\Psi}^{\text{meta}}_{p,n}\right)^T\overset{\cdot}{\sim}\chi^2_{(K-1)}.
\end{align}
\item Individual parameter-specific Wald statistic: We do pairwise comparisons of a single parameter between two sets of weight function hyperparameters (i.e., comparing $\hat{\Psi}_{p,n}^{(k)}$ to $\hat{\Psi}_{p,n}^{(k')}$). The Wald statistic is
\begin{align}
Z^{\text{ind}}_{p,n}=n(\hat{\Psi}_{p,n}^{(k)}-\hat{\Psi}_{p,n}^{(k')})^2\left[\hat{\bm{\Sigma}}^{(k,k')}\right]_{p,p}^{-1}\overset{\cdot}{\sim}\chi^2_{(1)},
\end{align}
where $\left[\hat{\bm{\Sigma}}^{(k,k')}\right]_{p,p}$ is the $(p,p)$-th element of $\hat{\bm{\Sigma}}^{(k,k')}$.
\end{enumerate}

\section{Extending SWLE to censored and truncated data} \label{sec:censtrun}
In general insurance practice, the actual values of (transformed) losses $\bm{y}=(y_1,\ldots,y_n)$ may not be observed in exact (censoring) and may not be fully observed (truncation) due to coverage modifications of insurance policies including deductibles and policy limits. As a result, extending the above SWLE framework is vital for random censored and truncated regression data.

We formulate the censoring and truncation mechanisms in accordance to \cite{FUNG2020MoECensTrun}. Denote $\mathcal{T}_i\subseteq\mathcal{Y}$ as a random truncation interval of observation $i$, meaning that a loss $Y_i$ is observed conditioned on $Y_i\in\mathcal{T}_i$. Further, define $\mathcal{U}_i\subseteq\mathcal{T}_i$ and $\mathcal{C}_i=\mathcal{T}_i\symbol{92}\mathcal{U}_i$ as the random uncensoring and censoring regions respectively. Denote $\{\mathcal{I}_{i1},\ldots,\mathcal{I}_{iM_i}\}$ as $M_i$ disjoint random censoring intervals of observation $i$ with $\cup_{m=1}^{M_i}\mathcal{I}_{im}=\mathcal{C}_i$. Then, $\mathcal{R}_i:=(\mathcal{U}_i,\mathcal{C}_i,M_i,\{\mathcal{I}_{i1},\ldots,\mathcal{I}_{iM_i}\})$ is called the censoring mechanism of observation $i$. Under censoring framework, the loss $Y_i$ is observed in exact if $Y_i\in\mathcal{U}_i$, while we would only know which censoring interval the loss belongs to (i.e. $1\{Y_i\in\mathcal{I}_{i1}\},\ldots,1\{Y_i\in\mathcal{I}_{iM_i}\}$) if $Y_i\in\mathcal{C}_i$. As a result, the observed (incomplete) information for loss $i$ is given by $\mathcal{D}_i:=(\mathcal{R}_i,\mathcal{T}_i,y_i1\{y_i\in\mathcal{U}_i\},\{1\{y_i\in\mathcal{I}_{im}\}\}_{m=1,\ldots,M_i})$. Denote $\mathcal{D}:=\{\mathcal{D}_i\}_{i=1,\ldots,n}$ as the observed information of all losses.

Note that the censoring and truncation mechanisms $(\mathcal{R}_i,\mathcal{T}_i)$ are observed but may differ across $i$, so the data is censored and truncated in random. This makes sense from an insurance perspective because different policyholders may choose different deductibles or policy limits which affect the censoring and truncation points. Also, the above formalism represents a general framework that includes left, right, and interval censoring and truncation. See \cite{FUNG2020MoECensTrun} for more details.

We now extend the SWLE score function in Equation (\ref{eq:inference:swle}) such that statistical inference is still possible under the above random censoring and truncation mechanisms. We propose that the extended SWLE score function is given by
\begin{align} \label{eq:censtrun:swle}
\mathcal{S}_n(\bm{\Psi};\mathcal{D},\bm{X})
&:=\sum_{i=1}^n\mathcal{S}(\bm{\Psi};\mathcal{D}_i,\bm{x}_i)\nonumber\\
&=\sum_{i=1}^{n}\left(\int_{\mathcal{Y}}f_{\mathcal{T}_i}(u;\bm{x}_i,\bm{\Psi})W(u,\bm{x}_i)du\right)\frac{f^*_{\mathcal{T}_i}(y_i;\bm{x}_i,\bm{\Psi})}{f_{\mathcal{T}_i}(y_i;\bm{x}_i,\bm{\Psi})}\frac{\partial}{\partial\bm{\Psi}}\log f^*_{\mathcal{T}_i}(y_i;\bm{x}_i,\bm{\Psi})1\{y_i\in\mathcal{U}_i\}\nonumber\\
&\quad + \sum_{i=1}^{n}\left(\int_{\mathcal{Y}}f_{\mathcal{T}_i}(u;\bm{x}_i,\bm{\Psi})W(u,\bm{x}_i)du\right)\sum_{m=1}^{M_i}\frac{F^*_{\mathcal{T}_i}(\mathcal{I}_{im};\bm{x}_i,\bm{\Psi})}{F_{\mathcal{T}_i}(\mathcal{I}_{im};\bm{x}_i,\bm{\Psi})}\frac{\partial}{\partial\bm{\Psi}}\log F^*_{\mathcal{T}_i}(\mathcal{I}_{im};\bm{x}_i,\bm{\Psi})1\{y_i\in\mathcal{I}_{im}\}\\
&=\sum_{i=1}^nW(y_i,\bm{x}_i)\frac{\partial}{\partial\bm{\Psi}}\log f^*_{\mathcal{T}_i}(y_i;\bm{x}_i,\bm{\Psi})1\{y_i\in\mathcal{U}_i\}\nonumber\\
&\quad +\sum_{i=1}^n\sum_{m=1}^{M_i}\lambda^{*}(\bm{\Psi};\bm{x}_i)\frac{F^*(\mathcal{I}_{im};\bm{x}_i,\bm{\Psi})}{F(\mathcal{I}_{im};\bm{x}_i,\bm{\Psi})}\frac{\partial}{\partial\bm{\Psi}}\log F^*_{\mathcal{T}_i}(\mathcal{I}_{im};\bm{x}_i,\bm{\Psi})1\{y_i\in\mathcal{I}_{im}\},
\end{align}
where $f_{\mathcal{T}_i}(y_i;\bm{x}_i,\bm{\Psi})$ and $f^*_{\mathcal{T}_i}(y_i;\bm{x}_i,\bm{\Psi})$ are the truncated density functions given by
\begin{equation} \label{eq:censtrun:density}
f_{\mathcal{T}_i}(y_i;\bm{x}_i,\bm{\Psi})=\frac{f(y_i;\bm{x}_i,\bm{\Psi})}{F(\mathcal{T}_i;\bm{x}_i,\bm{\Psi})}1\{y_i\in\mathcal{T}_i\},\qquad f^*_{\mathcal{T}_i}(y_i;\bm{x}_i,\bm{\Psi})=\frac{f^*(y_i;\bm{x}_i,\bm{\Psi})}{F^*(\mathcal{T}_i;\bm{x}_i,\bm{\Psi})}1\{y_i\in\mathcal{T}_i\},
\end{equation}
$F(\cdot;\bm{x}_i,\bm{\Psi})$, $F^*(\cdot;\bm{x}_i,\bm{\Psi})$, $F_{\mathcal{T}_i}(\cdot;\bm{x}_i,\bm{\Psi})$ and $F^*_{\mathcal{T}_i}(\cdot;\bm{x}_i,\bm{\Psi})$ are the distribution functions of $f(\cdot;\bm{x}_i,\bm{\Psi})$, $f^*(\cdot;\bm{x}_i,\bm{\Psi})$, $f_{\mathcal{T}_i}(\cdot;\bm{x}_i,\bm{\Psi})$ and $f^*_{\mathcal{T}_i}(\cdot;\bm{x}_i,\bm{\Psi})$ respectively, i.e., $F(\mathcal{A};\bm{x}_i,\bm{\Psi})=\int_{\mathcal{A}}f(u;\bm{x}_i,\bm{\Psi})du$ for any $\mathcal{A}\subseteq\mathbb{R}$. 

The extended SWLE makes two major modifications compared to the original score function in Equation (\ref{eq:inference:swle}) for complete data. Firstly, the density functions (e.g., $f(y_i;\bm{x}_i,\bm{\Psi})$) are changed to truncated density functions (e.g., $f_{\mathcal{T}_i}(y_i;\bm{x}_i,\bm{\Psi})$), reflecting that $Y_i$ is observed conditioned on $Y_i\in\mathcal{T}_i$. Secondly, Equation (\ref{eq:censtrun:swle}) is segregated into two terms. The first term is very similar to the original score function, reflecting that full information $y_i$ is used for the evaluation of score function if the observation is uncensored. The second term represents the modified score function as the observation falls into the censoring region, where the only information known is the identification of the censoring interval $\mathcal{I}_{im}$ an observation belongs to. In this case, the density functions in the score function are changed to distribution functions evaluated at the censoring interval $\mathcal{I}_{im}$.

The following result theoretically justifies the consistency and asymptotic normality of the extended SWLE under the GLM framework.
\begin{theorem} \label{thm:censtrun:asymp}
Suppose that $\tilde{Y}_i|\bm{x}_i$ (the loss random variable before truncation) independently follows the GLM with density function given by Equation (\ref{eq:glm}) and true model parameters $\bm{\Psi}_0:=(\bm{\beta}_0,\phi_0)$ for $i=1,\ldots,n$. Each loss $i$ is also equipped by censoring and truncation mechanisms $(\mathcal{R}_i,\mathcal{T}_i)$ described above. Assume that $\tilde{Y}_i$ is independent of $(\mathcal{R}_i,\mathcal{T}_i)$ conditioned on $\bm{x}_i$ for every $i=1,\ldots,n$. Also denote $Y_i=\tilde{Y}_i|\tilde{Y}_i\in\mathcal{T}_i$ as the truncated loss random variable. Suppose that the observed information is $\mathcal{D}:=\{\mathcal{D}_i\}_{i=1,\ldots,n}$ described above. If the mild regularity conditions outlined in Section \ref{apx:sec:reg_incom_asymp} of the Appendix are satisfied, then there exists a solution $\hat{\bm{\Psi}}_n:=(\hat{\bm{\beta}}_n,\hat{\phi}_n)$ satisfying the extended SWLE score equations $\mathcal{S}_{n}(\hat{\bm{\Psi}}_n;\mathcal{D},\bm{X})=\bm{0}$ (Equations (\ref{eq:censtrun:swle})) such that \begin{align}
\sqrt{n}(\hat{\bm{\Psi}}_n-\bm{\Psi}_0)\overset{d}{\rightarrow}\mathcal{N}(\bm{0},\bm{\Sigma}),
\end{align}
where $\bm{\Sigma}:=\bm{\Sigma}(\bm{\Psi}_0)=(\Gamma^{-1})\Lambda(\Gamma^{-1})^T$, with $\Gamma$ and $\Lambda$ being $(P+1)\times (P+1)$ matrices given by Section \ref{apx:thm:censtrun:asymp} of the Appendix.
\end{theorem}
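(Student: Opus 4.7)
The plan is to treat this as a standard M-estimator problem (in the sense of Huber), with the extra wrinkles that the summands are independent but non-identically distributed (through $\bm{x}_i$ and through the random censoring/truncation mechanism $(\mathcal{R}_i,\mathcal{T}_i)$) and that the ``score'' $\mathcal{S}_n$ is not a gradient of any observable log-likelihood. The key preliminary step is to verify the unbiasedness identity $E[\mathcal{S}(\bm{\Psi}_0;\mathcal{D}_i,\bm{x}_i)\mid \bm{x}_i,\mathcal{R}_i,\mathcal{T}_i]=\bm{0}$ at the true parameter value $\bm{\Psi}_0$. Once unbiasedness is in hand, existence of a consistent root and its asymptotic normality follow from the usual Taylor-expansion recipe.

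I would first establish unbiasedness as follows. Conditional on $(\bm{x}_i,\mathcal{R}_i,\mathcal{T}_i)$, the observed loss $Y_i$ has density $f_{\mathcal{T}_i}(\cdot;\bm{x}_i,\bm{\Psi}_0)$ by the conditional independence assumption on $\tilde{Y}_i$ and $(\mathcal{R}_i,\mathcal{T}_i)$. Using the algebraic identity, already implicit in Section \ref{sec:swle:const}, that $\lambda^*_{\mathcal{T}_i}(\bm{\Psi};\bm{x}_i)\,f^*_{\mathcal{T}_i}(y;\bm{x}_i,\bm{\Psi})/f_{\mathcal{T}_i}(y;\bm{x}_i,\bm{\Psi})=W(y,\bm{x}_i)$, the conditional expectation of the uncensored contribution equals
\begin{align*}
\int_{\mathcal{U}_i} W(y,\bm{x}_i)\frac{\partial}{\partial\bm{\Psi}}\log f^*_{\mathcal{T}_i}(y;\bm{x}_i,\bm{\Psi}_0)\, f_{\mathcal{T}_i}(y;\bm{x}_i,\bm{\Psi}_0)\,dy,
\end{align*}
while the expectation of the censored contribution, using $P(Y_i\in\mathcal{I}_{im}\mid \bm{x}_i,\mathcal{R}_i,\mathcal{T}_i)=F(\mathcal{I}_{im};\bm{x}_i,\bm{\Psi}_0)/F(\mathcal{T}_i;\bm{x}_i,\bm{\Psi}_0)$, collapses to $\sum_m W(\cdot)$-weighted derivatives of $\log F^*_{\mathcal{T}_i}(\mathcal{I}_{im})$. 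Splitting $\mathcal{T}_i=\mathcal{U}_i\cup\bigcup_m \mathcal{I}_{im}$ and interchanging derivative and integral (licensed by the dominated-convergence regularity conditions in Section \ref{apx:sec:reg_incom_asymp}) reassembles the sum into $\lambda^*_{\mathcal{T}_i}(\bm{\Psi}_0;\bm{x}_i)\,\partial/\partial\bm{\Psi}\int_{\mathcal{T}_i}f^*(u;\bm{x}_i,\bm{\Psi}_0)du/F^*(\mathcal{T}_i;\bm{x}_i,\bm{\Psi}_0)=\lambda^*_{\mathcal{T}_i}(\bm{\Psi}_0;\bm{x}_i)\,\partial\log 1/\partial\bm{\Psi}=\bm{0}$. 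This is the core calculation that justifies the specific form of the censoring-adjusted score, and I expect it to be the main technical obstacle.

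With unbiasedness established, existence of a consistent root follows from a Foutz-type argument: by a pointwise weak law of large numbers combined with the regularity and identifiability conditions, $n^{-1}\mathcal{S}_n(\bm{\Psi};\mathcal{D},\bm{X})$ converges in probability to a continuous vector field that vanishes at $\bm{\Psi}_0$, and its Jacobian at $\bm{\Psi}_0$ is nonsingular (this is $-\Gamma$). The inverse-function theorem applied pathwise then yields a solution $\hat{\bm{\Psi}}_n\to\bm{\Psi}_0$ in probability. For asymptotic normality, I expand
\begin{align*}
\bm{0}=\mathcal{S}_n(\hat{\bm{\Psi}}_n;\mathcal{D},\bm{X})=\mathcal{S}_n(\bm{\Psi}_0;\mathcal{D},\bm{X})+\left[\frac{\partial\mathcal{S}_n(\bm{\Psi}^{\dagger};\mathcal{D},\bm{X})}{\partial\bm{\Psi}^T}\right](\hat{\bm{\Psi}}_n-\bm{\Psi}_0)
\end{align*}
for some $\bm{\Psi}^\dagger$ between $\hat{\bm{\Psi}}_n$ and $\bm{\Psi}_0$, apply a uniform law of large numbers to show $n^{-1}\partial\mathcal{S}_n/\partial\bm{\Psi}^T\overset{p}{\to}-\Gamma$, and apply the Lindeberg--Feller CLT to the independent, mean-zero, uniformly bounded (by Assumption (ii)-type conditions on $W$) summands $\mathcal{S}(\bm{\Psi}_0;\mathcal{D}_i,\bm{x}_i)$ to get $n^{-1/2}\mathcal{S}_n(\bm{\Psi}_0;\mathcal{D},\bm{X})\overset{d}{\to}\mathcal{N}(\bm{0},\Lambda)$. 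Slutsky's theorem then delivers $\sqrt{n}(\hat{\bm{\Psi}}_n-\bm{\Psi}_0)\overset{d}{\to}\mathcal{N}(\bm{0},\Gamma^{-1}\Lambda(\Gamma^{-1})^T)$.

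The main obstacle, as noted, is the unbiasedness calculation: unlike the complete-data SWLE where unbiasedness follows from the Bartlett identity applied to $f^*$, here the censored contribution uses $\partial\log F^*_{\mathcal{T}_i}(\mathcal{I}_{im})/\partial\bm{\Psi}$ rather than a density score, and the ratio $F^*/F$ together with $\lambda^*$ has been engineered specifically so that the expectation telescopes to zero. A secondary, mostly bookkeeping obstacle is deriving the explicit $\Gamma$ and $\Lambda$ matrices analytically; these will decompose into an uncensored part (as in Theorem \ref{thm:asymp}) plus a censored part involving $F^*$, $F$, and their first and second derivatives with respect to $\bm{\Psi}$, all of which remain tractable under the EDM parameterization guaranteed by Lemma \ref{lem:density}.
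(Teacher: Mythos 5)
Your proposal is correct and follows essentially the same route as the paper: the paper likewise reduces the problem to verifying the conditional unbiasedness identity $E_Y[\mathcal{S}(\bm{\Psi}_0;\mathcal{D},\bm{x})\mid\mathcal{R},\mathcal{T},\bm{x}]=\bm{0}$ by exactly the telescoping computation you describe (splitting $\mathcal{T}$ into $\mathcal{U}$ and the $\mathcal{I}_m$ and using $\lambda^*$ and the $F^*/F$ ratio to reassemble the derivative of $\int_{\mathcal{T}}f^*_{\mathcal{T}}=1$), and then invokes Theorems 5.41--5.42 of van der Vaart for the M-estimation asymptotics, which is just the packaged form of your Taylor-expansion, uniform-LLN, and CLT argument. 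The remainder of the paper's proof is the bookkeeping you correctly anticipate: explicit analytical expressions for $\Gamma$ and $\Lambda$ decomposed into uncensored and censored parts.
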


\begin{remark}
From Theorem \ref{thm:censtrun:asymp} above and Section \ref{apx:thm:censtrun:asymp} of the Appendix, the covariance matrix $\bm{\Sigma}$ depends on the first and second derivatives of the cdf when the observations are censored and truncated. These terms can still be expressed analytically for the linear model and inverse-Gaussian GLM. For Gamma GLM, these terms can be expressed as incomplete di-gamma and tri-gamma functions, which can be computed using \texttt{\textup{pgamma.deriv}} function within the \texttt{\textup{heavy}} package in \texttt{\textup{R}}. 
\end{remark}

Analogous to Section \ref{sec:diag}, one can construct a Wald-based test statistic to assess whether a specified class of GLM is appropriate for a dataset with censoring and truncation mechanisms. We here denote $\hat{\bm{\Psi}}^{(k)}_n$ as the solution satisfying the extended SWLE equations $\mathcal{S}_{n}(\hat{\bm{\Psi}}_n^{(k)};\mathcal{D},\bm{X})=\bm{0}$ in Equation (\ref{eq:censtrun:swle}) with weight function hyperparameters chosen as $\tilde{\bm{\Psi}}^{(k)}$ for $k=1,\ldots,K$. Also recall that $\hat{\bm{\Psi}}_n^{\text{meta}}=(\hat{\bm{\Psi}}_n^{(1)},\ldots,\hat{\bm{\Psi}}_n^{(K)})$ and $\bm{\Psi}_0^{\text{meta}}=(\bm{\Psi}_0,\ldots,\bm{\Psi}_0)$ defined in Section \ref{sec:diag}. The following theorem holds:

\begin{theorem} \label{thm:censtrun:diag}
Under $H_0$ with true model parameters $\bm{\Psi}_0$, and given that the mild regularity conditions outlined in Section \ref{apx:sec:reg_incom_diag} of the Appendix hold, we have
\begin{align} \label{eq:censtrun:thm_asymp}
\sqrt{n}\left(\hat{\bm{\Psi}}_n^{\text{meta}}-\bm{\Psi}_0^{\text{meta}}\right)\overset{d}{\rightarrow}\mathcal{N}(\bm{0},\bm{\Sigma}^{\text{meta}}),
\end{align}
where $\bm{\Sigma}^{\text{meta}}$ is a $(P+1)K\times(P+1)K$ matrix given by Section \ref{apx:thm:censtrun:diag} of the Appendix, and hence
\begin{align} \label{eq:censtrun:thm_wald}
n\left[\bm{J}\left(\hat{\bm{\Psi}}_n^{\text{meta}}-\bm{\Psi}_0^{\text{meta}}\right)^T\right]^T\left(\bm{J}\bm{\Sigma}^{\text{meta}}\bm{J}^T\right)^{-1}\left[\bm{J}\left(\hat{\bm{\Psi}}_n^{\text{meta}}-\bm{\Psi}_0^{\text{meta}}\right)^T\right]
\overset{d}{\rightarrow}\chi^2_{Q}
\end{align}
for a $Q\times(P+1)K$ design matrix $\bm{J}$.
\end{theorem}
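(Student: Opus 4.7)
The plan is to recognize that this result is the natural joint and transformed extension of Theorem \ref{thm:censtrun:asymp}: the single-weight asymptotic normality is lifted to a joint statement over the $K$ weight configurations, and then the chi-square statement follows from elementary linear-algebraic manipulation of a multivariate normal. I would organize the argument in three stages: (i) establish joint asymptotic normality of the stacked estimator $\hat{\bm{\Psi}}_n^{\text{meta}}$; (ii) identify the block structure of $\bm{\Sigma}^{\text{meta}}$; (iii) derive the quadratic-form convergence in \eqref{eq:censtrun:thm_wald}.

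For stage (i), I would treat each $\hat{\bm{\Psi}}_n^{(k)}$ as an M-estimator solving $\mathcal{S}_n(\bm{\Psi}^{(k)};\mathcal{D},\bm{X})=\bm{0}$ with its own weight hyperparameters $\tilde{\bm{\Psi}}^{(k)}$. By Theorem \ref{thm:censtrun:asymp}, each admits a first-order representation
\begin{equation*}
\sqrt{n}\bigl(\hat{\bm{\Psi}}_n^{(k)}-\bm{\Psi}_0\bigr)=-\bigl[\Gamma^{(k)}\bigr]^{-1}\frac{1}{\sqrt{n}}\sum_{i=1}^{n}\mathcal{S}^{(k)}(\bm{\Psi}_0;\mathcal{D}_i,\bm{x}_i)+o_p(1),
\end{equation*}
where $\Gamma^{(k)}$ is the expected negative Jacobian of $\mathcal{S}^{(k)}$ at $\bm{\Psi}_0$ under the $k$-th weight. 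Stacking across $k$ gives a single linear expansion whose noise is the sum of i.i.d.\ vectors
$\mathcal{S}^{\text{meta}}(\bm{\Psi}_0;\mathcal{D}_i,\bm{x}_i)=\bigl(\mathcal{S}^{(1)},\ldots,\mathcal{S}^{(K)}\bigr)^T$ premultiplied by the block-diagonal matrix $\operatorname{diag}([\Gamma^{(k)}]^{-1})$. Applying the multivariate Lindeberg--Feller CLT to $n^{-1/2}\sum_i \mathcal{S}^{\text{meta}}(\bm{\Psi}_0;\mathcal{D}_i,\bm{x}_i)$ yields \eqref{eq:censtrun:thm_asymp}.

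For stage (ii), the $(k,k')$ block of $\bm{\Sigma}^{\text{meta}}$ follows from the sandwich form
\begin{equation*}
\bm{\Sigma}^{(k,k')}=\bigl[\Gamma^{(k)}\bigr]^{-1}\Lambda^{(k,k')}\bigl[\Gamma^{(k')}\bigr]^{-T},\qquad \Lambda^{(k,k')}=E\!\left[\mathcal{S}^{(k)}(\bm{\Psi}_0;\mathcal{D}_i,\bm{x}_i)\,\mathcal{S}^{(k')}(\bm{\Psi}_0;\mathcal{D}_i,\bm{x}_i)^T\right],
\end{equation*}
where the expectation decomposes over the uncensored and censored components of the score using conditional independence of $\tilde{Y}_i$ and $(\mathcal{R}_i,\mathcal{T}_i)$ given $\bm{x}_i$. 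The diagonal entries reduce to the single-weight formulas already appearing in Theorem \ref{thm:censtrun:asymp}; the off-diagonal entries $\Lambda^{(k,k')}$ with $k\neq k'$ are new and require explicitly writing the product of the two scores in the form of Equation (\ref{eq:censtrun:swle}) (using $W^{(k)}$ and $W^{(k')}$ respectively) and then integrating against the truncated density and censoring-region probabilities. This is the main obstacle: bookkeeping the cross-weight moments so that they fit the closed-form block structure quoted in Section \ref{apx:thm:censtrun:diag}, and verifying that the relevant dominated-convergence / differentiation-under-the-integral steps carry through under the regularity conditions of Section \ref{apx:sec:reg_incom_diag} (these are the only place where the extension beyond Theorem \ref{thm:censtrun:asymp} is non-mechanical).

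For stage (iii), the chi-square statement is routine. The continuous mapping theorem applied to \eqref{eq:censtrun:thm_asymp} gives $\bm{J}\sqrt{n}(\hat{\bm{\Psi}}_n^{\text{meta}}-\bm{\Psi}_0^{\text{meta}})^T\overset{d}{\to}\mathcal{N}(\bm{0},\bm{J}\bm{\Sigma}^{\text{meta}}\bm{J}^T)$, where the regularity conditions ensure $\bm{J}\bm{\Sigma}^{\text{meta}}\bm{J}^T$ is of full row rank $Q$. The claim \eqref{eq:censtrun:thm_wald} then follows from the standard fact that if $\bm{Z}\sim\mathcal{N}(\bm{0},\bm{V})$ with $\bm{V}\in\mathbb{R}^{Q\times Q}$ positive definite then $\bm{Z}^T\bm{V}^{-1}\bm{Z}\sim\chi^2_Q$, combined with Slutsky's theorem to replace the population $\bm{\Sigma}^{\text{meta}}$ by any consistent plug-in estimator when it is used in practice (though the statement of the theorem itself only requires the population version).
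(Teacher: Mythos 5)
Your proposal is correct and follows essentially the same route as the paper: the paper stacks the $K$ extended SWLE score functions into a meta estimating equation, applies van der Vaart's M-estimator theorems (which yield exactly the first-order linear representation you write down), exploits the fact that $\Gamma^{(k,k')}=\bm{0}$ for $k\neq k'$ so that $\Gamma^{\text{meta}}$ is block diagonal, and identifies the $(k,k')$ block of $\bm{\Sigma}^{\text{meta}}$ as the sandwich $[\Gamma^{(k)}]^{-1}\Lambda^{(k,k')}[\Gamma^{(k')}]^{-T}$ with $\Lambda^{(k,k')}$ the cross-weight score covariance, before dismissing the chi-square step as an ordinary Wald argument. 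You also correctly locate where the real work lies, namely the explicit evaluation of the off-diagonal $\Lambda^{(k,k')}$ over the uncensored and censored regions, which is what occupies the bulk of the paper's appendix.
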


With the above theorem, the Wald-type diagnostic test statistic for censored and truncated data can be constructed as described by Section \ref{sec:diag}.

\section{Simulation studies} \label{sec:sim}

\subsection{Simulation 1: Various GLMs} \label{sec:sim1}
This study aims to empirically verify the asymptotic properties of the proposed SWLE and evaluate its finite-sample performance. In each simulation, we generate $n$ observations $\{(y_i,\bm{x}_i)\}_{i=1,\ldots,n}$ with $P=2$ (for simplicity) so that $\bm{x}_i=(x_{i1},x_{i2})$. We set $x_{i1}=1$ as an intercept term and generate $x_{i2}$ iid from $N(0,1)$. 
The simulation design is as follows:
\begin{enumerate}
\item Data generating model for $Y_i|\bm{x}_i$: We consider Gamma GLM, linear model and inverse-Gaussian GLM with the following link function and parameter settings:
\begin{itemize}
\item Gamma GLM: A log-link $\mu_i=-1/\theta_i=\exp\{\bm{x}_i^T\bm{\beta}\}$ is selected. The parameters are specified as $\bm{\beta}=(1,0.5)^T$ and $\phi=0.5$.
\item Linear model: A linear link $\mu_i=\theta_i=\bm{x}_i^T\bm{\beta}$ is selected. The parameters are specified as $\bm{\beta}=(1,0.5)^T$ and $\phi=0.25$.
\item Inverse-Gaussian GLM: A log-link $\mu_i=(-2\theta_i)^{-1/2}=\exp\{\bm{x}_i^T\bm{\beta}\}$ is selected. The parameters are specified as $\bm{\beta}=(1,0.5)^T$ and $\phi=0.1$.
\end{itemize}
\item Sample size: $n=250,1000,2500,10000,25000$, aligning with the range of sample sizes for insurance loss data, from a few hundred data points for e.g. Secura-Re (\texttt{ReIns} package in \texttt{R}) to 10,000+ data points for French automobile insurance (\texttt{CASdatasets} package in \texttt{R}).
\item Weight functions $W(y_i,\bm{x}_i)$: Selected in accordance to Example \ref{eg:thm}.
\item Weight function hyperparameters $\tilde{\bm{\Psi}}$ and the number of hyperparameter sets $K$ considered for evaluating meta Wald statistic: The hyperparameter $\tilde{\theta}$ or $\tilde{\phi}$ is determined by solving
\begin{align} \label{eq:sim_wgt_ratio}
\frac{E_{Y,\bm{x}}[W(Y,\bm{x})|Y>q_{\alpha}]}{E_{Y,\bm{x}}[W(Y,\bm{x})]}=\delta
\end{align}
for some fixed $0<\alpha<1$ and $0<\delta\leq 1$, where $q_{\alpha}$ is the $\alpha$-percentile of $Y$. An interpretation of the above equation is as follows: We want to choose the hyperparameter such that the average weight assigned to the extreme observations (larger than $q_{\alpha}$) is only $\delta\leq 1$ times as the overall average weight. Smaller $\delta$ means extreme observations are down-weighted more. Note that the above equation is easy to compute (analytically or through simulation) since the true model is known. For example, when $\alpha=0.99$ and $\delta=0.1$ for Gamma GLM, then solving Equation (\ref{eq:sim_wgt_ratio}) we have $(\tilde{\theta},\tilde{\phi})=(6.53,1)$. Note that if $\delta=1$, the weight function will be flat, and hence the proposed SWLE will be equivalent to the MLE. We select $\alpha=0.99$ and consider the following choices of $K$ for model diagnostic purposes:
\begin{itemize}
\item $K=2$: the two sets of hyperparameters are constructed based on $\delta=1,0.001$.
\item $K=3$: the three sets of hyperparameters are constructed based on $\delta=1,0.1,0.001$.
\item $K=5$: the five sets of hyperparameters are constructed based on $\delta=1,0.5,0.1,0.01,0.001$.
\end{itemize}
\item Censoring and truncation: Not considered in this experiment for simplicity.
\end{enumerate}

Each combination of data-generating model, $n$ and $K$ selected above results in a simulated dataset with a sample size of $n$. We replicate (simulate) each combination by $B=500$ times to ensure thorough investigation on the adequateness of SWLE. We then fit each simulated dataset into Gamma GLM, linear model, and inverse-Gaussian GLM, respectively. If the fitted model class matches with the data generating model, then we are fitting a correct model class, and hence we should expect that the rejection rate of the Wald-type diagnostic test presented in Section \ref{sec:diag} is low. Otherwise, the fitted model class is misspecified, and a high rejection rate is expected.

We first examine the case of correct model specification to verify the consistencies of the SWLE fitted parameters. Table \ref{table:sim1_est} depicts the true model parameters $(\beta_1,\beta_2,\phi)$ versus the fitted SWLE parameters $(\hat{\beta}_1,\hat{\beta}_2,\hat{\phi})$ (averaged across the $B=500$ replications), with weight function hyperparameters selected based on Equation (\ref{eq:sim_wgt_ratio}) with $\delta=1$ (this simply reduces to MLE), $\delta=0.1$ and $\delta=0.01$ respectively. The SWLE fitted parameters are very close to the true values for all settings even if the sample size is relatively small ($n=250$), empirically justifying the adequacy of the proposed SWLE in recovering the true model parameters.

We then analyze the performance of the meta Wald-type diagnostic tool presented in Equation (\ref{eq:diag_wald_meta}) of Section \ref{sec:diag} based on SWLE, considering both cases of correct and misspecified fitted models. Table \ref{table:sim1_prob} presents the rejection probabilities at 5\% significance level using the meta Wald statistic across different true (data-generating) models, fitted models, $n$ and $K$. As expected, the rejection rates are almost equal to 1 for most cases when the fitted model class is misspecified. Exceptions are when the sample size is small ($n=250$) enough to hinder the power of the proposed diagnostic test. Further, the rejection probabilities are mostly close to the desired level of 5\% when the model is correctly specified. Exceptions are when $K=5$ with a small sample size $n<2500$ (the rejection probabilities are inflated). An interpretation is that as $K$ grows large, the meta Wald-type statistics become more mathematically complicated, and hence a larger sample size is needed for convergence to the asymptotic results. 

\begin{table}[h]
\centering
\resizebox{\textwidth}{!}{%
\begin{tabular}{llc|ccc}
\hline
\multicolumn{1}{c}{} & \multicolumn{1}{c}{} & true parameters & \multicolumn{3}{c}{mean estimates under SWLE} \\
\multicolumn{1}{c}{} & \multicolumn{1}{c}{} & $(\beta_1,\beta_2,\phi)$ & \multicolumn{3}{c}{$(\hat{\beta}_1,\hat{\beta}_2,\hat{\phi})$} \\ \hline
\multicolumn{1}{c|}{Model} & \multicolumn{1}{c|}{} &  & \multicolumn{1}{c|}{$\delta=1$ (MLE)} & \multicolumn{1}{c|}{$\delta=0.1$} & $\delta=0.001$ \\ \hline
\multicolumn{1}{l|}{} & \multicolumn{1}{l|}{$n=250$} &  & \multicolumn{1}{c|}{(0.999, 0.502, 0.496)} & \multicolumn{1}{c|}{(1.001, 0.504, 0.496)} & (1.004, 0.506, 0.496) \\
\multicolumn{1}{l|}{} & \multicolumn{1}{l|}{$n=1000$} &  & \multicolumn{1}{c|}{(1.000, 0.500, 0.499)} & \multicolumn{1}{c|}{(1.000, 0.500, 0.499)} & (1.001, 0.500, 0.499) \\
\multicolumn{1}{l|}{Gamma} & \multicolumn{1}{l|}{$n=2500$} & (1, 0.5, 0.5) & \multicolumn{1}{c|}{(1.000, 0.501, 0.499)} & \multicolumn{1}{c|}{(1.000, 0.501, 0.499)} & (1.000, 0.502, 0.499) \\
\multicolumn{1}{l|}{} & \multicolumn{1}{l|}{$n=10000$} &  & \multicolumn{1}{c|}{(1.000, 0.500, 0.500)} & \multicolumn{1}{c|}{(1.000, 0.500, 0.500)} & (0.999, 0.500, 0.499) \\
\multicolumn{1}{l|}{} & \multicolumn{1}{l|}{$n=25000$} &  & \multicolumn{1}{c|}{(1.000, 0.500, 0.500)} & \multicolumn{1}{c|}{(1.000, 0.500, 0.500)} & (1.000, 0.500, 0.500) \\ \hline
\multicolumn{1}{l|}{} & \multicolumn{1}{l|}{$n=250$} &  & \multicolumn{1}{c|}{(1.002, 0.500, 0.249)} & \multicolumn{1}{c|}{(1.002, 0.500, 0.249)} & (1.002, 0.504, 0.251) \\
\multicolumn{1}{l|}{} & \multicolumn{1}{l|}{$n=1000$} &  & \multicolumn{1}{c|}{(1.001, 0.500, 0.249)} & \multicolumn{1}{c|}{(1.001, 0.501, 0.249)} & (1.001, 0.502, 0.249) \\
\multicolumn{1}{l|}{Normal} & \multicolumn{1}{l|}{$n=2500$} & (1, 0.5, 0.25) & \multicolumn{1}{c|}{(1.000, 0.500, 0.249)} & \multicolumn{1}{c|}{(1.000, 0.500, 0.249)} & (1.001, 0.501, 0.250) \\
\multicolumn{1}{l|}{} & \multicolumn{1}{l|}{$n=10000$} &  & \multicolumn{1}{c|}{(1.000, 0.500, 0.250)} & \multicolumn{1}{c|}{(1.000, 0.500, 0.250)} & (1.000, 0.500, 0.250) \\
\multicolumn{1}{l|}{} & \multicolumn{1}{l|}{$n=25000$} &  & \multicolumn{1}{c|}{(1.000, 0.500, 0.250)} & \multicolumn{1}{c|}{(1.000, 0.500, 0.250)} & (1.000, 0.500, 0.250) \\ \hline
\multicolumn{1}{l|}{} & \multicolumn{1}{l|}{$n=250$} &  & \multicolumn{1}{c|}{(1.001, 0.501, 0.099)} & \multicolumn{1}{c|}{(1.001, 0.501, 0.100)} & (1.002, 0.503, 0.100) \\
\multicolumn{1}{l|}{} & \multicolumn{1}{l|}{$n=1000$} &  & \multicolumn{1}{c|}{(0.999, 0.500, 0.100)} & \multicolumn{1}{c|}{(1.000, 0.500, 0.100)} & (1.000, 0.501, 0.099) \\
\multicolumn{1}{l|}{Inv-Gauss} & \multicolumn{1}{l|}{$n=2500$} & (1, 0.5, 0.1) & \multicolumn{1}{c|}{(0.999, 0.499, 0.100)} & \multicolumn{1}{c|}{(0.999, 0.499, 0.100)} & (1.000, 0.500, 0.100) \\
\multicolumn{1}{l|}{} & \multicolumn{1}{l|}{$n=10000$} &  & \multicolumn{1}{c|}{(1.000, 0.500, 0.100)} & \multicolumn{1}{c|}{(1.000, 0.500, 0.100)} & (1.000, 0.500, 0.100) \\
\multicolumn{1}{l|}{} & \multicolumn{1}{l|}{$n=25000$} &  & \multicolumn{1}{c|}{(1.000, 0.500, 0.100)} & \multicolumn{1}{c|}{(1.000, 0.500, 0.100)} & (1.000, 0.500, 0.100) \\ \hline
\end{tabular}
}
\caption{[Simulation 1] True versus SWLE fitted parameters with correct model specifications.}
\label{table:sim1_est}
\end{table}

\begin{table}[h]
\centering
\resizebox{\textwidth}{!}{%
\begin{tabular}{lllllllllll}
\hline
 &  & \multicolumn{9}{c}{True model} \\ \cline{3-11} 
 &  & \multicolumn{3}{c}{Gamma} & \multicolumn{3}{c}{Normal} & \multicolumn{3}{c}{Inverse-Gaussian} \\ \hline
\multicolumn{1}{c|}{Fitted} & \multicolumn{1}{l|}{} & $K=2$ & $K=3$ & \multicolumn{1}{l|}{$K=5$} & $K=2$ & $K=3$ & \multicolumn{1}{l|}{$K=5$} & $K=2$ & $K=3$ & $K=5$ \\ \hline
\multicolumn{1}{l|}{} & \multicolumn{1}{l|}{$n=250$} & 0.056 & 0.112 & \multicolumn{1}{l|}{0.292} & 0.654 & 0.580 & \multicolumn{1}{l|}{0.724} & 0.998 & 0.998 & 0.998 \\
\multicolumn{1}{l|}{} & \multicolumn{1}{l|}{$n=1000$} & 0.048 & 0.052 & \multicolumn{1}{l|}{0.134} & 0.998 & 0.998 & \multicolumn{1}{l|}{0.998} & 1.000 & 1.000 & 1.000 \\
\multicolumn{1}{l|}{Gamma} & \multicolumn{1}{l|}{$n=2500$} & 0.034 & 0.042 & \multicolumn{1}{l|}{0.094} & 1.000 & 1.000 & \multicolumn{1}{l|}{1.000} & 1.000 & 1.000 & 1.000 \\
\multicolumn{1}{l|}{} & \multicolumn{1}{l|}{$n=10000$} & 0.046 & 0.048 & \multicolumn{1}{l|}{0.096} & 1.000 & 1.000 & \multicolumn{1}{l|}{1.000} & 1.000 & 1.000 & 1.000 \\
\multicolumn{1}{l|}{} & \multicolumn{1}{l|}{$n=25000$} & 0.048 & 0.062 & \multicolumn{1}{l|}{0.064} & 1.000 & 1.000 & \multicolumn{1}{l|}{1.000} & 1.000 & 1.000 & 1.000 \\ \hline
\multicolumn{1}{l|}{} & \multicolumn{1}{l|}{$n=250$} & 0.664 & 0.648 & \multicolumn{1}{l|}{0.818} & 0.050 & 0.052 & \multicolumn{1}{l|}{0.108} & 0.786 & 0.724 & 0.642 \\
\multicolumn{1}{l|}{} & \multicolumn{1}{l|}{$n=1000$} & 1.000 & 1.000 & \multicolumn{1}{l|}{1.000} & 0.042 & 0.046 & \multicolumn{1}{l|}{0.088} & 1.000 & 1.000 & 0.998 \\
\multicolumn{1}{l|}{Normal} & \multicolumn{1}{l|}{$n=2500$} & 1.000 & 1.000 & \multicolumn{1}{l|}{1.000} & 0.060 & 0.050 & \multicolumn{1}{l|}{0.054} & 1.000 & 1.000 & 1.000 \\
\multicolumn{1}{l|}{} & \multicolumn{1}{l|}{$n=10000$} & 1.000 & 1.000 & \multicolumn{1}{l|}{1.000} & 0.056 & 0.048 & \multicolumn{1}{l|}{0.048} & 1.000 & 1.000 & 1.000 \\
\multicolumn{1}{l|}{} & \multicolumn{1}{l|}{$n=25000$} & 1.000 & 1.000 & \multicolumn{1}{l|}{1.000} & 0.038 & 0.048 & \multicolumn{1}{l|}{0.056} & 1.000 & 1.000 & 1.000 \\ \hline
\multicolumn{1}{l|}{} & \multicolumn{1}{l|}{$n=250$} & 1.000 & 1.000 & \multicolumn{1}{l|}{1.000} & 0.922 & 0.880 & \multicolumn{1}{l|}{0.846} & 0.042 & 0.052 & 0.080 \\
\multicolumn{1}{l|}{} & \multicolumn{1}{l|}{$n=1000$} & 1.000 & 1.000 & \multicolumn{1}{l|}{1.000} & 1.000 & 1.000 & \multicolumn{1}{l|}{1.000} & 0.056 & 0.044 & 0.088 \\
\multicolumn{1}{l|}{Inv-Gauss} & \multicolumn{1}{l|}{$n=2500$} & 1.000 & 1.000 & \multicolumn{1}{l|}{1.000} & 1.000 & 1.000 & \multicolumn{1}{l|}{1.000} & 0.056 & 0.048 & 0.064 \\
\multicolumn{1}{l|}{} & \multicolumn{1}{l|}{$n=10000$} & 1.000 & 1.000 & \multicolumn{1}{l|}{1.000} & 1.000 & 1.000 & \multicolumn{1}{l|}{1.000} & 0.038 & 0.044 & 0.050 \\
\multicolumn{1}{l|}{} & \multicolumn{1}{l|}{$n=25000$} & 1.000 & 1.000 & \multicolumn{1}{l|}{1.000} & 1.000 & 1.000 & \multicolumn{1}{l|}{1.000} & 0.032 & 0.038 & 0.082 \\ \hline
\end{tabular}
}
\caption{[Simulation 1] Meta Wald statistic rejection probabilities at 5\% significance level.}
\label{table:sim1_prob}
\end{table}

\subsection{Simulation 2: Heavy-tail contaminated linear model} \label{sec:sim2}
This study reveals how model contamination leads to unstable MLE estimates and how the proposed SWLE approach detects and addresses the robustness issues. In each simulation, we generate $n=5000$ observations $\{(y_i,\bm{x}_i)\}_{i=1,\ldots,n}$ with $P=2$. $\bm{x}_i$ is generated by the same distribution as the previous study. $Y_i|\bm{x}_i$ is simulated by a contaminated regression model with the following density function:

\begin{align}
f^c(y_i;\bm{x}_i)=(1-\epsilon)f(y_i;\bm{x}_i,\bm{\Psi})+\epsilon\gamma(y_i;\bm{x}_i),
\end{align}
where $f(y_i;\bm{x}_i,\bm{\Psi})$ is chosen as a linear model with parameters $\bm{\Psi}=(\bm{\beta},\phi)=(1,0.5,0.25)$, $\epsilon$ is the contamination probability, and $\gamma(y_i;\bm{x}_i)$ is a contamination density function. We choose $\epsilon=0.1$ to be small so that the linear model is only slightly perturbed. The contamination density $\gamma(y_i;\bm{x}_i)$ is chosen as a scaled and translated Student's t-distribution with 2.5 degrees of freedom, scaled and translated in a way such that the mean is $\mu_i=1+0.5x_{i2}$ and the variance is $\sigma^2=0.25$ (aligning with the linear model $f(y_i;\bm{x}_i,\bm{\Psi})$). Outliers will be more prevalent in the simulated data with such heavy-tailed contamination. We ignore the censoring and truncation effects.

Similar to the previous simulation study, the simulation is replicated by $B=500$ times. For each replication, the resulting simulated dataset is fitted to the linear model, using the SWLE approach and considering $K=5$ sets of weight function hyperparameters constructed based on Equation (\ref{eq:sim_wgt_ratio}) with $\delta=1,0.5,0.1,0.01,0.001$. 

Table \ref{table:sim2_param} presents the estimated parameters (averaged across $B=500$ replications) and the corresponding standard errors (SE) for each of the five weight function hyperparameter sets considered. In the table, $k$ is the hyperparameter set index. For example, when $k=1$, the hyperparameters are selected based on $\delta=1$, leading to a standard MLE approach. $\delta$ is reduced as $k$ increases, resulting in more substantial down-weightings on the outliers. The MLE approach results in an unrobust estimated dispersion parameter $\hat{\phi}$, as evidenced by an abnormally large SE. The outliers severely distort the estimated parameters under the MLE approach. This issue can be effectively mitigated by the proposed SWLE approach: As $k=2$, the SE of $\hat{\phi}$ is reduced significantly from 0.023 to 0.005. On the other hand, the standard errors of any estimated parameters generally increase as $k$ increases. This is natural as more substantial down-weightings (i.e., larger $k$) often imply that more data points are effectively discarded for model estimation purposes, leading to a higher SE. Overall, Table \ref{table:sim2_param} reveals a trade-off between estimation robustness and efficiency when the data-generating model is contaminated. In this case, the choice of $k=2$ or $k=3$ may result in the best fitted model as the stability of estimated parameters is guaranteed without substantially inflating the SE.

We then perform extensive diagnostic tests on the fitted SWLE models. The meta Wald test in Equation (\ref{eq:diag_wald_meta}) shows that the linear model is rejected in 448 out of the 500 replications (89.6\%), suggesting that our proposed SWLE-based diagnostic tool is quite powerful in detecting model contaminations. We further perform individual Wald tests (Equation (\ref{eq:diag_wald_ind}) for each pair of weight function hyperparameter sets $(k,k')$) to carefully examine how the simulated dataset deviates from the fitted linear model. The left panel of Table \ref{table:sim2_matrix} showcases the rejection rates of the individual Wald tests for each pair of hyperparameter sets $(k,k')$. While the rejection rate is very high (0.908) when $(k,k')=(1,2)$, it gradually decreases as $k$ and $k'$ increase. As we note that $k=1$ represents the MLE approach, we may conclude that after reducing the influence of extreme observations (by choosing $k\geq 2$), the simulated data behaves less significantly deviated from the linear model. To showcase an example, we report the individual Wald statistics and the p-values for one specific representative simulation replication in the right panel of Table \ref{table:sim2_matrix}. In this case, we observe that the individual Wald tests fail to reject the linear model, provided that the outliers are already sufficiently down-weighted (with $k,k'\geq 2$). As a result, there is no evidence that the simulated data systematically deviates from the linear model. Instead, the deviation is solely caused by the few outliers caused by model perturbations.

\begin{table}[h]
\centering
\begin{tabular}{crrrrrrr}
\hline
 & \multicolumn{1}{c}{} & \multicolumn{2}{c}{$\hat{\beta}_1$} & \multicolumn{2}{c}{$\hat{\beta}_2$} & \multicolumn{2}{c}{$\hat{\phi}$} \\ \hline
$k$ & \multicolumn{1}{c}{$\delta$} & \multicolumn{1}{c}{Mean} & \multicolumn{1}{c}{SE} & \multicolumn{1}{c}{Mean} & \multicolumn{1}{c}{SE} & \multicolumn{1}{c}{Mean} & \multicolumn{1}{c}{SE} \\ \hline
1 & 1 & 1.000 & 0.007 & 0.500 & 0.007 & 0.248 & 0.023 \\
2 & 0.5 & 1.000 & 0.007 & 0.500 & 0.007 & 0.237 & 0.005 \\
3 & 0.1 & 1.000 & 0.008 & 0.500 & 0.008 & 0.233 & 0.006 \\
4 & 0.01 & 1.000 & 0.009 & 0.500 & 0.011 & 0.230 & 0.007 \\
5 & 0.001 & 1.000 & 0.010 & 0.500 & 0.014 & 0.228 & 0.008 \\ \hline
\end{tabular}
\caption{[Simulation 2] Mean estimated parameters and the corresponding standard errors under SWLE across five different sets of weight function hyperparameters.}
\label{table:sim2_param}
\end{table}

\begin{table}[h]
\centering
\begin{tabular}{cllllllclllll}
\cline{1-6} \cline{8-13}
\multicolumn{6}{c}{Individual Wald statistic rejection rate} & \multicolumn{1}{c}{} & \multicolumn{6}{c}{Individual Wald statistic \& p-value} \\ \cline{1-6} \cline{8-13} 
$k~\backslash ~k'$ & \multicolumn{1}{c}{1} & \multicolumn{1}{c}{2} & \multicolumn{1}{c}{3} & \multicolumn{1}{c}{4} & \multicolumn{1}{c}{5} & \multicolumn{1}{c}{} & $k~\backslash ~k'$ & \multicolumn{1}{c}{1} & \multicolumn{1}{c}{2} & \multicolumn{1}{c}{3} & \multicolumn{1}{c}{4} & \multicolumn{1}{c}{5} \\ \cline{1-6} \cline{8-13} 
1 & \multicolumn{1}{c}{---} & 0.908 & 0.864 & 0.778 & 0.678 &  & 1 & \multicolumn{1}{c}{---} & 0.000 & 0.000 & 0.000 & 0.003 \\
2 &  & \multicolumn{1}{c}{---} & 0.536 & 0.38 & 0.264 &  & 2 & 157.549 & \multicolumn{1}{c}{---} & 0.070 & 0.255 & 0.409 \\
3 &  &  & \multicolumn{1}{c}{---} & 0.202 & 0.146 &  & 3 & 42.145 & 7.060 & \multicolumn{1}{c}{---} & 0.617 & 0.735 \\
4 &  &  &  & \multicolumn{1}{c}{---} & 0.090 &  & 4 & 20.743 & 4.065 & 1.789 & \multicolumn{1}{c}{---} & 0.850 \\
5 &  &  &  &  & \multicolumn{1}{c}{---} &  & 5 & 13.799 & 2.889 & 1.277 & 0.797 & \multicolumn{1}{c}{---} \\ \cline{1-6} \cline{8-13} 
\end{tabular}
\caption{[Simulation 2] Left panel: Rejection rates of the individual Wald test in Equation (\ref{eq:diag_wald_ind}) for each pair of weight function hyperparameter sets. Right panel: Individual Wald statistics (bottom left triangle) and the corresponding $p$-values (top right triangle) under a representative simulation replication.}
\label{table:sim2_matrix}
\end{table}

\subsection{Simulation 3: GLM with varying dispersion} \label{sec:sim3}
This study analyzes the case when the data generating model systematically deviates from the GLM. We generate $B=500$ replications of simulated samples $\{(y_i,\bm{x}_i)\}_{i=1,\ldots,n}$ with $n=5000$, and $\bm{x}_i$ is generated by the same scheme as the previous studies. $Y_i|\bm{x}_i$ is simulated by a Gamma GLM with varying dispersion. Its density is given by Equation (\ref{eq:glm}) except that the dispersion parameter $\phi$ also depends on $\bm{x}_i$, linked by $\phi=\exp\{\bm{x}_i^T\bm{\alpha}\}$, where $\bm{\alpha}$ represents the regression parameters. We choose a log-link as the mean function, and set $\bm{\beta}=(1,0.5)^T$. We consider the following two cases for $\bm{\alpha}$:
\begin{itemize}
\item Case I: We choose $\bm{\alpha}=(\log 0.5,0)$ so that the dispersion index $\phi=0.5$ does not depend on $\bm{x}_i$. In this case, the simulated model is reduced to the standard Gamma GLM and is exactly the same as that considered in Simulation study 1.
\item Case II: We choose $\bm{\alpha}=(\log 0.5, 0.25)$ so that the covariate $x_{i2}$ has a substantial positive impact on the dispersion of the Gamma distribution.
\end{itemize}

We consider the effect of left truncation and right censoring. 
The left truncation point $T_i$ is sampled as $T_i=0$ or $0.5$ with equal probability, so that any loss $Y_i$ smaller than $T_i$ will not be observed. The right censoring point $C_i$ is sampled as $C_i=10$ or $20$ with equal probability, so that any loss $Y_i$ greater than $C_i$ will be observed as $C_i$ only. In accordance to Section \ref{sec:censtrun}, we have $M_i=1$, the truncation interval $\mathcal{T}_i=(T_i,\infty)$, uncensored interval $\mathcal{U}_i=(T_i,C_i]$ and censored interval $\mathcal{C}_i=\mathcal{I}_{i1}=(C_i,\infty)$.

The resulting modified observed data (for each replication) will be fitted to a Gamma GLM using the proposed SWLE approach with the extended SWLE score function given by Equation (\ref{eq:censtrun:swle}). Therefore, Case I corresponds to a correct model specification, while the model is systematically misspecified in Case II. In accordance to the previous studies, we also consider $K=5$ weight function hyperparameter sets, with the indices $k=1,2,3,4,5$ corresponding to $\delta=1,0.5,0.1,0.01,0.001$ respectively.

Table \ref{table:sim3_param} exhibits the estimated parameters and SE across the $K=5$ hyperparameter sets and two cases for the data generating models. As expected, the estimated parameters are very close to the true parameters in Case I (correctly specified model). In Case II, the regression parameter $\hat{\beta}_2$ decreases most substantially as $k$ increases among all three parameters $(\hat{\beta}_1,\hat{\beta}_2,\hat{\phi})$. As $k$ increases, weights are more concentrated on smaller claims, and hence $\hat{\beta}_2$ better reflects the influence of the covariate $x_{i2}$ on the smaller claims. Note that in Case II, $x_{i2}$ has a positive effect on the dispersion of the distribution. Higher dispersion would decrease the lower quartile yet increase the upper quartile of the distribution. Hence, as $k$ is large, it would bring a negative impact of $x_{i2}$ on the claims, offsetting the positive effect of $x_{i2}$ on the mean function. This explains why $\hat{\beta}_2$ decreases as $k$ increases.

We further perform individual Wald tests in both cases, and the two rejection rate matrices (at 5\% significance level) are presented in Table \ref{table:sim3_matrix}. As expected, the rejection rates in Case I are close to 0.05 (the significance level). Therefore, the asymptotic theories in Theorems \ref{thm:censtrun:asymp} and \ref{thm:censtrun:diag} empirically work well for censored and truncated data. In Case II, the rejection rates are close to 1 for all pairs of $(k,k')$. This reflects a systematic deviation of the data-generating model in both body and tail distributional parts from the Gamma GLM. The above results are different from those in Simulation study 2, where the model misspecification is reflected on the few outliers only so that the rejection rate can be greatly reduced as we choose $k,k'\geq 2$.

\begin{table}[h]
\centering
\begin{tabular}{cllllllllllll}
\hline
 & \multicolumn{6}{c}{Case I   (Correct)} & \multicolumn{6}{c}{Case II   (Misspecified)} \\ \cline{2-13} 
 & \multicolumn{2}{c}{$\hat{\beta}_1$} & \multicolumn{2}{c}{$\hat{\beta}_2$} & \multicolumn{2}{c}{$\hat{\phi}$} & \multicolumn{2}{c}{$\hat{\beta}_1$} & \multicolumn{2}{c}{$\hat{\beta}_2$} & \multicolumn{2}{c}{$\hat{\phi}$} \\ \hline
$k$ & \multicolumn{1}{c}{Mean} & \multicolumn{1}{c}{SE} & \multicolumn{1}{c}{Mean} & \multicolumn{1}{c}{SE} & \multicolumn{1}{c}{Mean} & \multicolumn{1}{c}{SE} & \multicolumn{1}{c}{Mean} & \multicolumn{1}{c}{SE} & \multicolumn{1}{c}{Mean} & \multicolumn{1}{c}{SE} & \multicolumn{1}{c}{Mean} & \multicolumn{1}{c}{SE} \\ \hline
1 & 1.000 & 0.010 & 0.500 & 0.010 & 0.500 & 0.011 & 0.989 & 0.011 & 0.512 & 0.010 & 0.520 & 0.012 \\
2 & 1.000 & 0.011 & 0.500 & 0.011 & 0.500 & 0.011 & 0.986 & 0.011 & 0.501 & 0.010 & 0.515 & 0.012 \\
3 & 1.000 & 0.011 & 0.500 & 0.012 & 0.500 & 0.012 & 0.975 & 0.011 & 0.477 & 0.011 & 0.506 & 0.012 \\
4 & 1.000 & 0.014 & 0.500 & 0.015 & 0.500 & 0.013 & 0.959 & 0.013 & 0.447 & 0.013 & 0.496 & 0.013 \\
5 & 1.000 & 0.019 & 0.500 & 0.019 & 0.500 & 0.015 & 0.942 & 0.016 & 0.421 & 0.016 & 0.489 & 0.015 \\ \hline
\end{tabular}
\caption{[Simulation 3] Mean estimated parameters and standard errors under SWLE across five different weight function hyperparameter sets and two data generating models.}
\label{table:sim3_param}
\end{table}

\begin{table}[h]
\centering
\begin{tabular}{cllllllclllll}
\cline{1-6} \cline{8-13}
\multicolumn{6}{c}{Individual Wald statistic rejection rate} & \multicolumn{1}{c}{} & \multicolumn{6}{c}{Individual Wald statistic rejection rate} \\
\multicolumn{6}{c}{Case I   (Correct)} & \multicolumn{1}{c}{} & \multicolumn{6}{c}{Case II (Misspecified)} \\ \cline{1-6} \cline{8-13} 
$k~\backslash ~k'$ & \multicolumn{1}{c}{1} & \multicolumn{1}{c}{2} & \multicolumn{1}{c}{3} & \multicolumn{1}{c}{4} & \multicolumn{1}{c}{5} & \multicolumn{1}{c}{} & $k~\backslash ~k'$ & \multicolumn{1}{c}{1} & \multicolumn{1}{c}{2} & \multicolumn{1}{c}{3} & \multicolumn{1}{c}{4} & \multicolumn{1}{c}{5} \\ \cline{1-6} \cline{8-13} 
1 & \multicolumn{1}{c}{---} & 0.056 & 0.044 & 0.054 & 0.056 &  & 1 & \multicolumn{1}{c}{---} & 0.994 & 0.998 & 1.000 & 1.000 \\
2 &  & \multicolumn{1}{c}{---} & 0.052 & 0.058 & 0.056 &  & 2 &  & \multicolumn{1}{c}{---} & 1.000 & 1.000 & 1.000 \\
3 &  &  & \multicolumn{1}{c}{---} & 0.046 & 0.038 &  & 3 &  &  & \multicolumn{1}{c}{---} & 1.000 & 1.000 \\
4 &  &  &  & \multicolumn{1}{c}{---} & 0.048 &  & 4 &  &  &  & \multicolumn{1}{c}{---} & 1.000 \\
5 &  &  &  &  & \multicolumn{1}{c}{---} &  & 5 &  &  &  &  & \multicolumn{1}{c}{---} \\ \cline{1-6} \cline{8-13} 
\end{tabular}
\caption{[Simulation 3] Rejection rate matrices (at 5\% significance level) of the individual Wald test for correctly specified (Case I, left panel) and misspecified (Case II, right panel) data generating models.}
\label{table:sim3_matrix}
\end{table}

\section{Real insurance data analysis} \label{sec:data}
This section showcases the applications of the proposed SWLE fitting and diagnostic methods to two real insurance datasets: US indemnity losses and the European automobile insurance dataset. The data-generating model is unknown for a real dataset, and the data is also censored and truncated. Hence, it is challenging to determine the weight function hyperparameters by computing and solving Equation (\ref{eq:sim_wgt_ratio}) directly. We propose approximating Equation (\ref{eq:sim_wgt_ratio}) semi-analytically as follows. To begin with, defining $\mathcal{Q}_{\alpha}=(q_{\alpha},\infty)\cap\mathcal{Y}$, the numerator and denominator in the left-hand side of the equation can respectively be approximated and analytically expressed as (we refer readers to Section \ref{apx:thm:exp_w_data} of the Appendix for the derivations):
\begin{align} \label{eq:data_wgt_ratio_num}
E_{\mathcal{D},\bm{x}}[W(Y,\bm{x})|Y>q_{\alpha}]
=E_{\mathcal{D},\bm{x}}\left[\lambda^{*}(\bm{\Psi};\bm{x})\frac{F^*(\mathcal{Q}_{\alpha}\cap\mathcal{T};\bm{x},\bm{\Psi})}{F(\mathcal{Q}_{\alpha}\cap\mathcal{T};\bm{x},\bm{\Psi})}\right],
\end{align}
\begin{align} \label{eq:data_wgt_ratio_den}
E_{\mathcal{D},\bm{x}}[W(Y,\bm{x})]
=E_{\mathcal{D},\bm{x}}\left[\lambda^{*}(\bm{\Psi};\bm{x})\frac{F^*(\mathcal{T};\bm{x},\bm{\Psi})}{F(\mathcal{T};\bm{x},\bm{\Psi})}\right].
\end{align}

After obtaining an MLE of parameters $\hat{\bm{\Psi}}$, Equations (\ref{eq:data_wgt_ratio_num}) and (\ref{eq:data_wgt_ratio_den}) are approximated by
\begin{align}
\widehat{E_{\mathcal{D},\bm{x}}}[W(Y,\bm{x})|Y>q_{\alpha}]
=\frac{1}{n}\sum_{i=1}^{n}\lambda^{*}(\hat{\bm{\Psi}};\bm{x}_i)\frac{F^*(\mathcal{Q}_{\alpha}\cap\mathcal{T}_i;\bm{x}_i,\hat{\bm{\Psi}})}{F(\mathcal{Q}_{\alpha}\cap\mathcal{T}_i;\bm{x}_i,\hat{\bm{\Psi}})},
\end{align}
\begin{align}
\widehat{E_{\mathcal{D},\bm{x}}}[W(Y,\bm{x})]
=\frac{1}{n}\sum_{i=1}^{n}\lambda^{*}(\hat{\bm{\Psi}};\bm{x}_i)\frac{F^*(\mathcal{T}_i;\bm{x}_i,\hat{\bm{\Psi}})}{F(\mathcal{T}_i;\bm{x}_i,\hat{\bm{\Psi}})}.
\end{align}

Finally, we solve the following equation, which serves as a semi-analytical approximation to Equation (\ref{eq:sim_wgt_ratio}), to determine the appropriate weight function hyperparameters:
\begin{align} \label{eq:data:wgt_ratio}
\frac{\widehat{E_{\mathcal{D},\bm{x}}}[W(Y,\bm{x})|Y>q_{\alpha}]}{\widehat{E_{\mathcal{D},\bm{x}}}[W(Y,\bm{x})]}=\delta.
\end{align}

\subsection{US indemnity losses}
Consider a publicly available dataset of $n=1500$ indemnity losses in the US (\texttt{copula} package in \texttt{R}) extensively studied in several actuarial papers, including \cite{punzo2018compound} and \cite{poudyal2021robust}. Among the 1500 losses, 1352 of them are each accompanied by a maximum benefit ranging from US\$5,000 to US\$7,500,000. The indemnity losses exceeding their maximum benefits will only be recorded as their maximum benefits, so the loss dataset is right-censored. There is no deductible, so all losses are observed (no truncation effects). Each loss accompanies no explanatory variables. As a preliminary analysis, we first present in Figure \ref{fig:real1_prelim} the density plot and normal Q-Q plot for the log-transformed indemnity losses, ignoring the censoring effect. The log-normal model decently captures the indemnity losses, except for a few outliers on the left (minor losses). These outliers may distort the estimated parameters of the log-normal model, which is unreasonable because such immaterial losses should bring little or no impact on the portfolio risk characteristics. 

\begin{figure}[!h]
\begin{center}
\includegraphics[width=0.8\linewidth]{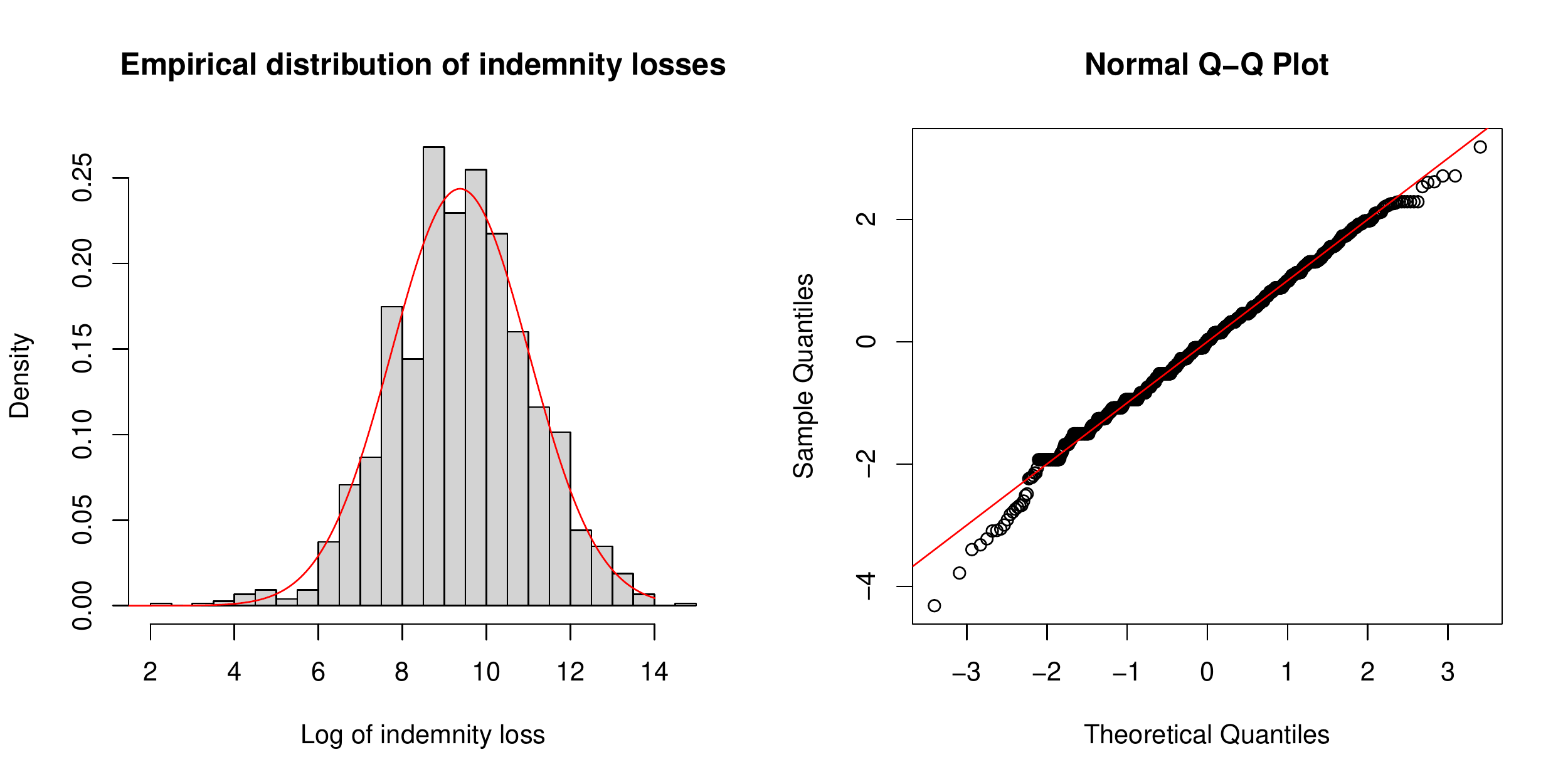}
\end{center}
\vspace{-0.5cm}
\caption{Density plot (left panel) and normal Q-Q plot (right panel) of the log indemnity losses.}
\label{fig:real1_prelim}
\end{figure}

We first define $Y_i$ as the log-transformed indemnity loss. We then construct the censoring-truncation mechanism following Section \ref{sec:censtrun}: The right censoring point $C_i$ is given by the log-transformed maximum benefit. For the remaining 148 losses with no limits, we have $C_i=\infty$. Therefore, the random uncensoring and censoring intervals are given by $\mathcal{U}_i=(-\infty,C_i]$ and $\mathcal{C}_i=\mathcal{I}_{iM_i}=(C_i,\infty)$ with $M_i=1$. Since the data is untruncated, the truncation interval is $\mathcal{T}_i=(-\infty,\infty)$.

To perform robust modelling and diagnostic analysis on the indemnity loss using the log-normal model, the transformed loss data is fitted to a linear model with $P=1$ (no covariates) by solving the extended SWLE score function given by Equation (\ref{eq:censtrun:swle}). We consider $K=5$ weight function hyperparameter sets, indexed by $k=1,2,3,4,5$ and determined by solving Equation (\ref{eq:data:wgt_ratio}) with $\alpha=0.99$ and $\delta=1,0.5,0.1,0.01,0.001$.

The left panel of Table \ref{table:real1_param} depicts the fitted parameters $\hat{\beta}_1$ and $\hat{\phi}$ (equivalent to the mean $\hat{\mu}$ and variance $\hat{\sigma^2}$ parameters of the normal distribution) with the corresponding SE. While the estimated mean parameter $\hat{\mu}$ is stable across $k=1$ to $k=5$, the estimated variance $\hat{\sigma^2}$ drops substantially (from 2.777 to 2.691) as the outliers are down-weighted (i.e., from $k=1$ to $k=2$). Such a drop is significant enough for both individual (right panel of Table \ref{table:real1_param}) and individual parameter-specific (right panel of Table \ref{table:real1_matrix}) Wald tests to reject the log-normal model, choosing $(k,k')=(1,2)$ and a significance level of 5\%. On the other hand, after sufficiently down-weighting the outliers, both estimated parameters $(\hat{\mu},\hat{\sigma^2})$ are stable across various chosen sets of weight function hyperparameters (from $k=2$ to $k=5$). The individual Wald tests fail to reject the log-normal model when we choose $k,k'\geq 2$.

Therefore, the following conclusions are made: The log-normal distribution is overall a suitable model for fitting the indemnity loss data because the individual Wald tests accept the log-normal model in most cases. However, without weighting the observations, the fitted MLE dispersion (variance) parameter (by choosing $k=1$) can be substantially distorted by a few outliers on the left tail.
To mitigate this robustness issue, we should consider the proposed SWLE to fit the indemnity loss data because the estimated parameters become reliable after sufficiently de-emphasizing the outliers' influences. The choice of $k=2$ seems quite desirable as robust estimated parameters are obtained with high efficiencies (the SE of estimated parameters under $k=2$ are very close to those under the MLE with $k=1$, see the left panel of Table \ref{table:real1_param}). The analysis and conclusion here are pretty similar to those obtained in Section \ref{sec:sim2}, suggesting that the indemnity losses may follow a contaminated log-normal distribution.

We also fit the indemnity losses to Gamma and Inverse-Gaussian models using the SWLE approach. The resulting Wald tests strongly reject both models. Therefore, neither of these distributions are appropriate. We refrain from bogging down all details for conciseness purpose.

\begin{table}[h]
\centering
\begin{tabular}{clllllclllll}
\cline{1-5} \cline{7-12}
 & \multicolumn{2}{c}{$\hat{\beta}_1=\hat{\mu}$} & \multicolumn{2}{c}{$\hat{\phi}=\hat{\sigma^2}$} & \multicolumn{1}{c}{} & \multicolumn{6}{c}{Individual Wald statistic \& p-value} \\ \cline{1-5} \cline{7-12} 
$k$ & \multicolumn{1}{c}{Fitted} & \multicolumn{1}{c}{SE} & \multicolumn{1}{c}{Fitted} & \multicolumn{1}{c}{SE} & \multicolumn{1}{c}{} & $k~\backslash ~k'$ & \multicolumn{1}{c}{1} & \multicolumn{1}{c}{2} & \multicolumn{1}{c}{3} & \multicolumn{1}{c}{4} & \multicolumn{1}{c}{5} \\ \cline{1-5} \cline{7-12} 
1 & 9.392 & 0.043 & 2.777 & 0.103 &  & 1 & \multicolumn{1}{c}{---} & 0.018 & 0.255 & 0.591 & 0.776 \\
2 & 9.396 & 0.044 & 2.691 & 0.108 &  & 2 & 8.082 & \multicolumn{1}{c}{---} & 0.442 & 0.638 & 0.809 \\
3 & 9.380 & 0.050 & 2.641 & 0.136 &  & 3 & 2.734 & 1.631 & \multicolumn{1}{c}{---} & 0.705 & 0.899 \\
4 & 9.365 & 0.060 & 2.653 & 0.192 &  & 4 & 1.052 & 0.898 & 0.699 & \multicolumn{1}{c}{---} & 0.998 \\
5 & 9.365 & 0.071 & 2.659 & 0.257 &  & 5 & 0.508 & 0.423 & 0.214 & 0.005 & \multicolumn{1}{c}{---} \\ \cline{1-5} \cline{7-12} 
\end{tabular}
\caption{[US indemnity losses] Left panel: Fitted log-normal parameters and standard errors. Right panel: Individual Wald statistics (bottom left triangle) and the corresponding $p$-values (top right triangle).}
\label{table:real1_param}
\end{table}

\begin{table}[h]
\centering
\begin{tabular}{cllllllclllll}
\cline{1-6} \cline{8-13}
\multicolumn{6}{c}{Individual Wald statistic for $\hat{\beta}_1$ \& p-value} & \multicolumn{1}{c}{} & \multicolumn{6}{c}{Individual Wald statistic for $\hat{\phi}$ \& p-value} \\ \cline{1-6} \cline{8-13} 
$k~\backslash ~k'$ & \multicolumn{1}{c}{1} & \multicolumn{1}{c}{2} & \multicolumn{1}{c}{3} & \multicolumn{1}{c}{4} & \multicolumn{1}{c}{5} & \multicolumn{1}{c}{} & $k~\backslash ~k'$ & \multicolumn{1}{c}{1} & \multicolumn{1}{c}{2} & \multicolumn{1}{c}{3} & \multicolumn{1}{c}{4} & \multicolumn{1}{c}{5} \\ \cline{1-6} \cline{8-13} 
1 & \multicolumn{1}{c}{---} & 0.736 & 0.625 & 0.507 & 0.617 &  & 1 & \multicolumn{1}{c}{---} & 0.004 & 0.118 & 0.440 & 0.614 \\
2 & 0.113 & \multicolumn{1}{c}{---} & 0.348 & 0.366 & 0.527 &  & 2 & 8.075 & \multicolumn{1}{c}{---} & 0.397 & 0.780 & 0.880 \\
3 & 0.239 & 0.879 & \multicolumn{1}{c}{---} & 0.410 & 0.653 &  & 3 & 2.438 & 0.717 & \multicolumn{1}{c}{---} & 0.884 & 0.913 \\
4 & 0.441 & 0.816 & 0.678 & \multicolumn{1}{c}{---} & 0.993 &  & 4 & 0.596 & 0.078 & 0.021 & \multicolumn{1}{c}{---} & 0.945 \\
5 & 0.250 & 0.400 & 0.202 & 0.000 & \multicolumn{1}{c}{---} &  & 5 & 0.255 & 0.023 & 0.012 & 0.005 & \multicolumn{1}{c}{---} \\ \cline{1-6} \cline{8-13} 
\end{tabular}
\caption{[US indemnity losses] Rejection rate matrices (at 5\% significance level) of the individual parameter-specific Wald test for $\hat{\beta}_1$ and $\hat{\phi}$ respectively.}
\label{table:real1_matrix}
\end{table}

\subsection{European automobile insurance data} \label{sec:data:euro}
Consider a European automobile insurance dataset with $n=10,032$ car damage claim losses during 2016. This dataset is also analyzed by \cite{FUNG2020MoECensTrun}. The empirical (log-transformed) loss distribution is depicted by Figure \ref{fig:real2_disn}. Each claim is supplemented by policyholder information denoted by $x_{i2}$ to $x_{i11}$ described in Table \ref{table:real2_cov} of the Appendix, a policy limit (right censoring point) ranging from 900 to 183,610 Euros, and a deductible (left truncation point) ranging from 0 to 1,000 Euros. As a preliminary analysis, we fit the loss amounts to the Gamma, log-normal, and inverse Gaussian distributions without considering the effects of the covariates, deductibles, and policy limits. The goodness-of-fit is assessed by the three Q-Q plots in Figure \ref{fig:real2_qq} of the Appendix. Both Gamma and inverse Gaussian distributions fit the loss data poorly. The log-normal distribution decently fits the body part of the empirical distribution, but the right tail seems slightly under-extrapolated. Note that with the inclusion of the covariates' effects and consideration of data incompleteness, it may be possible to improve the goodness-of-fit for the right tail. Hence, it is reasonable to consider a log-normal GLM as a baseline benchmark model using the proposed SWLE approach to assess the appropriateness of the log-normal model and recommend suitable model improvements. The MLE parameters and the corresponding SE are listed in the second and third columns of Table \ref{table:real2_param}.

\begin{figure}[!h]
\begin{center}
\includegraphics[width=0.9\linewidth]{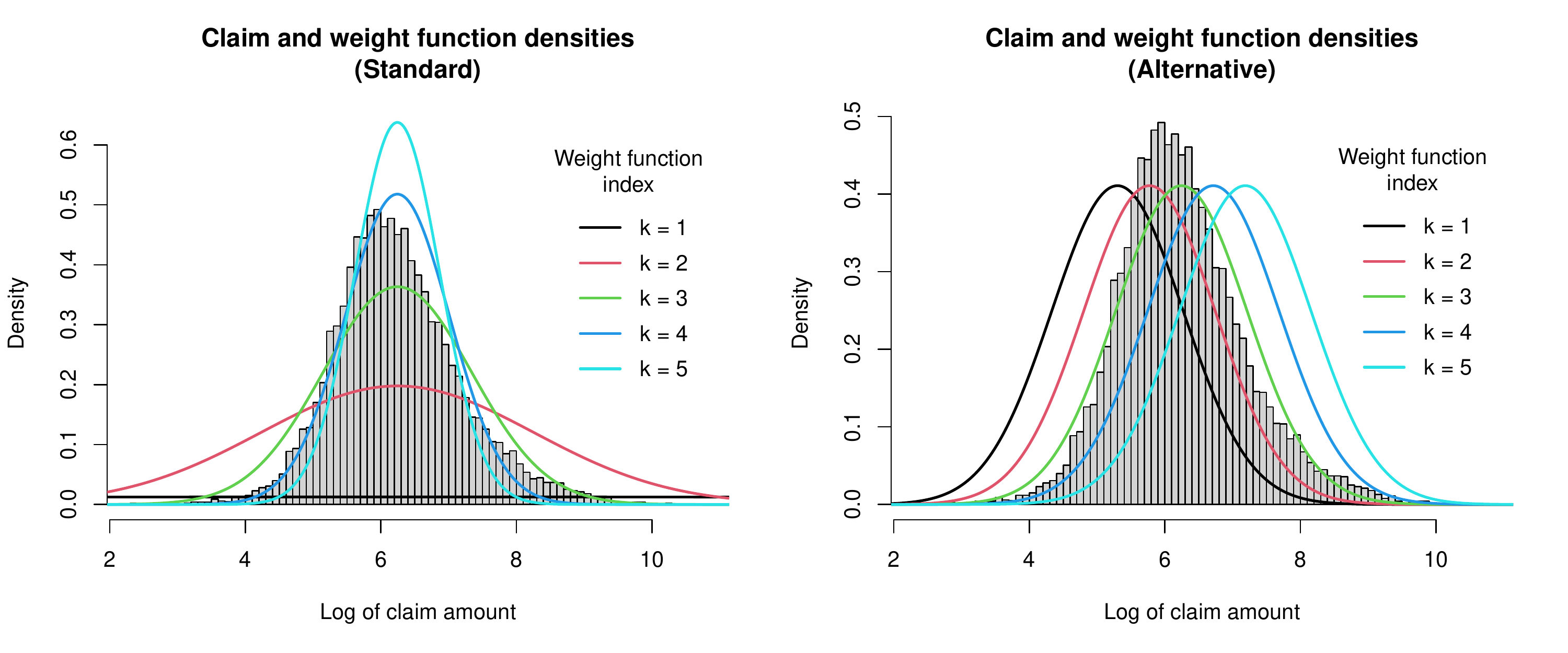}
\end{center}
\vspace{-0.5cm}
\caption{[European automobile claims] Empirical distribution of the log-transformed European car damages claim amount and weight function densities with standard (left panel) and alternative (right panel) hyperparameter settings.}
\label{fig:real2_disn}
\end{figure}

Considering the log-normal regression model, we first define $Y_i$ as the log-transformed car damage loss and $\bm{x}_i$ as the covariate vector with length $P=11$. Setting $C_i$ and $T_i$ respectively as the log-transformed policy limit and deductible, we have $M_i=1$, $\mathcal{U}_i=(-\infty,C_i]$, $\mathcal{C}_i=\mathcal{I}_{iM_i}=(C_i,\infty)$, and $\mathcal{T}_i=(T_i,\infty)$ according to Section \ref{sec:censtrun}.

Same as the previous studies, we first consider $K=5$ ``standard" sets of weight function hyperparameters chosen in accordance to Example \ref{eg:thm} by solving Equation (\ref{eq:data:wgt_ratio}) with $\alpha=0.99$ and $\delta=1,0.5,0.1,0.01,0.001$. The five resulting weight functions are plotted against the log-transformed loss in the left panel of Figure \ref{fig:real2_disn}. To thoroughly examine the impacts of the SWLE weight functions on the estimated parameters and draw legitimate conclusions, we introduce a quantity called standardized parameter deviance residual, defined as
\begin{align} \label{eq:data:residual}
\hat{r}_p^{(k)}=\frac{\hat{\Psi}_p^{(k)}-\hat{\Psi}_p^{(k_0)}}{SE(\hat{\Psi}_p^{(k_0)})},\qquad k=1,\ldots,K~~\text{and}~~p=1,\ldots,P+1,
\end{align}
where $\hat{\Psi}_p^{(k)}$ is the estimated $p$-th parameter of $\bm{\Psi}$, using the SWLE approach with the $k$-th set of weight function hyperparameters selected. $k_0\in\{1,\ldots,K\}$ is the weight function index selected as the benchmark, and $SE(\hat{\Psi}_p^{(k_0)})$ is the SE of $\hat{\Psi}_p^{(k_0)}$. In this case, we choose $k_0=1$ because it represents the MLE approach. The proposed residual statistic reflects the sensitivity of the estimated parameters to the choice of weight function hyperparameters. Suppose $\hat{r}_p^{(k)}$ shows a systematic trend (increasing or decreasing) as $k$ increases and $\hat{r}_p^{(k)}$ significantly differs from zero for $k\neq k_0$. In that case, the fitted model class (i.e., linear model) systematically deviates from the empirical dataset, and hence considerations of alternative model classes are necessary.

Figure \ref{fig:real2_param1} plots $\hat{r}_p^{(k)}$ against $k$ for each parameter $p=1,\ldots,12$ with the ``standard" hyperparameter setting. The 95\% confidence intervals of $\hat{r}_p^{(k)}$ are also constructed, appearing as the grey shallows in the figure. Below are some of the observations and recommendations for model improvements:
\begin{itemize}
\item The residuals for the dispersion parameter $\phi$ significantly decrease as $k$ increases from 1 to 5. As the tail observations are more severely down-weighted, the estimated dispersion parameter is reduced. This suggests that the tails implied by the loss dataset are too heavy that the estimated dispersion parameter under the MLE approach is inflated. In other words, the log-normal model still under-extrapolates the tail-heaviness of the empirical distribution even after incorporating the effects of covariates, censoring, and truncation, so one should fit a heavier-tailed model.
\item While the MLE suggests that the car age negatively impacts the loss amounts ($\hat{\beta}_3<0$), Figure \ref{fig:real2_param1} shows that the residuals for $\beta_3$ (the regression coefficient of car age) are significantly positive when $k>1$. As the weights are more centralized to the body part of the loss distribution, the estimated coefficient becomes less negative. This implies that the car age affects the body part of the distribution less negatively than the tail part. Oppositely, the residuals for $\beta_4$, $\beta_6$ and $\beta_8$ are significantly negative as $k>1$, meaning that the influences of these variables are more negative (or less positive) to the body part of the loss distribution than to the tail part. As a result, one should consider modeling the heterogeneity of covariate influence to various parts of the loss distribution.
\end{itemize}

The above ``standard" weight function setting weights the observations symmetrically, i.e., the losses from both tails are under-weighted. Therefore, the study above does not tell whether the model misfit comes from the left or right tail. Hence, one may also consider an ``alternative" weight function setting, which allocates asymmetric weights to the observations from the left and right tails. Considering also $K=5$ sets of hyperparameters, we set the weight function hyperparameters as $\tilde{\beta}_1^{(k)}=\hat{\mu}_Y+0.5(k-3)\hat{\sigma}_Y$, $\tilde{\beta}_p^{(k)}=0$ for $p=2,3,\ldots,P$ and $\tilde{\phi}^{(k)}=\hat{\sigma}_Y^2$, where $\hat{\mu}_Y$ and $\hat{\sigma}_Y$ are respectively the empirical mean and standard deviation of the log-transformed losses, so that the resulting weight functions are plotted against the log-transformed loss in the right panel of Figure \ref{fig:real2_disn}. As $k$ increases, more weights are assigned to larger losses. We choose $k_0=3$ in Equation (\ref{eq:data:residual}) when a maximum weight is assigned to an average loss. Figure \ref{fig:real2_param2} plots $\hat{r}_p^{(k)}$ against $k$ for each parameter using the ``alternative" weight function hyperparameter setting. The observations are as follows:
\begin{itemize}
\item The residual for the dispersion parameter $\phi$ is significantly positive when $k>3$ (i.e., larger weights on the larger losses) and significantly negative when $k<3$. This means that the log-normal model under-estimates the heaviness of the right tail yet over-extrapolates the left tail.
\item The residuals for $\beta_4$ to $\beta_8$ are significantly positive when $k>3$ and negative when $k<3$. This reflects that the influences of these variables are more positive (or less negative) to the upper quartiles of the loss distribution and vice versa to the lower quartiles. This also echos with the results obtained by Simulation study 3, where the fitted regression parameters show a clear trend when more and more weights are assigned to larger losses. Therefore, a possible model improvement is to allow for a varying dispersion in the regression model.
\end{itemize}

For completeness, we conduct the parameter-specific meta Wald tests (Equation (\ref{eq:diag_wald_param_meta})) under both ``standard" and ``alternative" hyperparameter settings. The resulting Wald statistics and p-values for each parameter are displayed in the four rightmost columns of Table \ref{table:real2_param}. Not surprisingly, the p-values are very small for many parameters, and hence the log-normal regression model is strongly rejected. This confirms the necessity of considering the suggested model improvements to the log-normal model.

\begin{figure}[!h]
\begin{center}
\includegraphics[width=\linewidth]{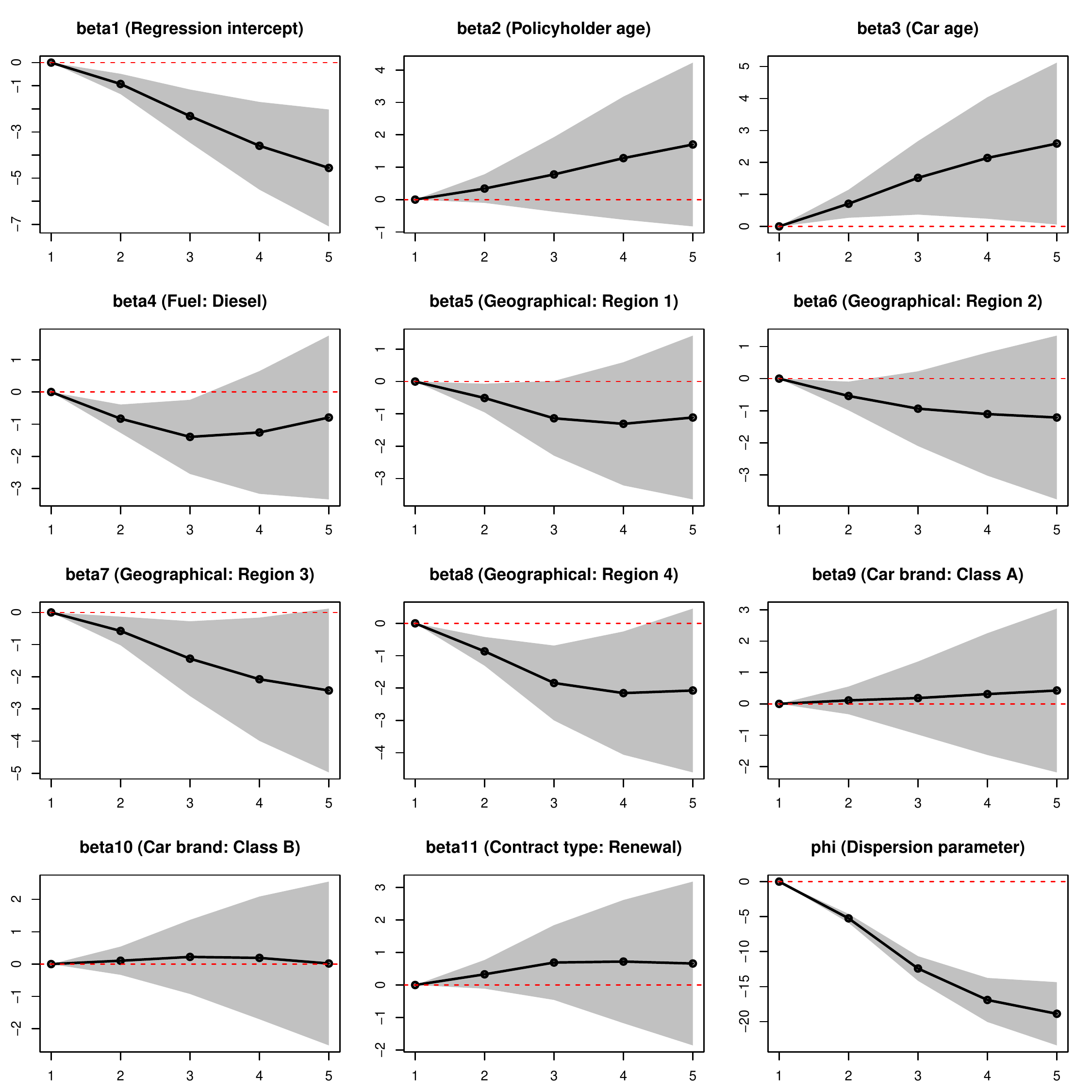}
\end{center}
\vspace{-0.5cm}
\caption{[European automobile claims] Standardized parameter deviance residuals (vertical axis) versus weight function hyperparameter index $k$ (horizontal axis) with standard hyperparameter setting.}
\label{fig:real2_param1}
\end{figure}

\begin{figure}[!h]
\begin{center}
\includegraphics[width=\linewidth]{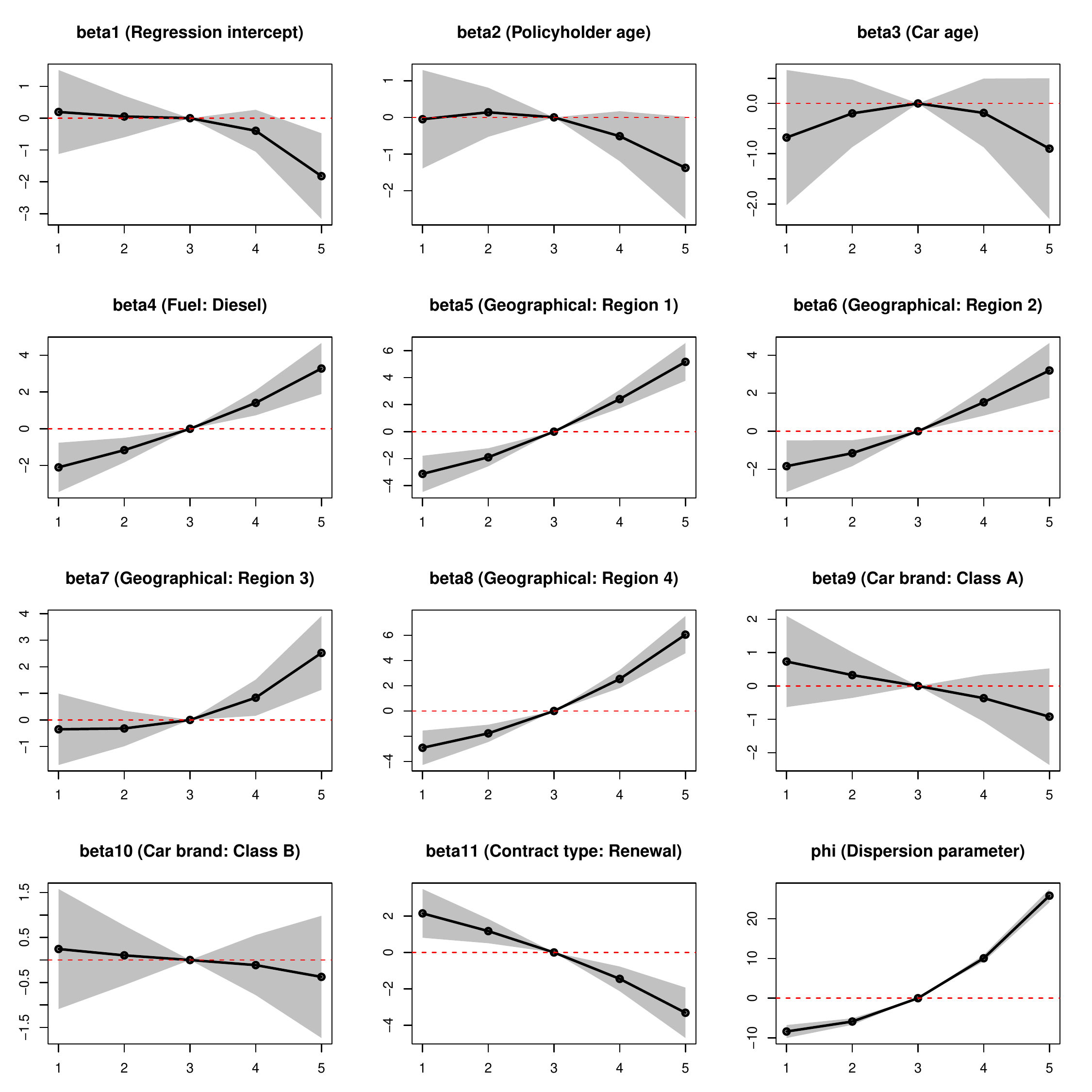}
\end{center}
\vspace{-0.5cm}
\caption{[European automobile claims] Standardized parameter deviance residuals (vertical axis) versus weight function hyperparameter index $k$ (horizontal axis) with alternative hyperparameter setting.}
\label{fig:real2_param2}
\end{figure}

\begin{table}[h]
\centering
\begin{tabular}{crrrrrr}
\hline
 & \multicolumn{2}{c}{Parameter Estimates} & \multicolumn{4}{c}{Parameter specific meta Wald test} \\ \cline{2-7} 
 & \multicolumn{2}{c}{MLE} & \multicolumn{2}{c}{Standard} & \multicolumn{2}{c}{Alternative} \\ \cline{2-7} 
 & \multicolumn{1}{c}{Estimates} & \multicolumn{1}{c}{SE} & \multicolumn{1}{c}{Wald Stat} & \multicolumn{1}{c}{p-value} & \multicolumn{1}{c}{Wald Stat} & \multicolumn{1}{c}{p-value} \\ \hline
$\hat{\beta}_1$ & 6.504 & 0.048 & 19.766 & 0.001 & 113.637 & 0.000 \\
$\hat{\beta}_2$ & -0.003 & 0.001 & 8.041 & 0.090 & 7.026 & 0.135 \\
$\hat{\beta}_3$ & -0.014 & 0.003 & 17.013 & 0.002 & 15.803 & 0.003 \\
$\hat{\beta}_4$ & 0.153 & 0.020 & 39.060 & 0.000 & 69.599 & 0.000 \\
$\hat{\beta}_5$ & 0.040 & 0.032 & 8.238 & 0.083 & 62.907 & 0.000 \\
$\hat{\beta}_6$ & -0.070 & 0.031 & 17.710 & 0.001 & 20.479 & 0.000 \\
$\hat{\beta}_7$ & -0.011 & 0.032 & 6.807 & 0.146 & 64.203 & 0.000 \\
$\hat{\beta}_8$ & -0.070 & 0.035 & 19.679 & 0.001 & 119.188 & 0.000 \\
$\hat{\beta}_9$ & -0.350 & 0.025 & 3.813 & 0.432 & 7.791 & 0.100 \\
$\hat{\beta}_{10}$ & -0.176 & 0.023 & 1.284 & 0.864 & 8.320 & 0.081 \\
$\hat{\beta}_{11}$ & -0.020 & 0.023 & 7.301 & 0.121 & 42.344 & 0.000 \\
$\hat{\phi}$ & 0.902 & 0.013 & 285.230 & 0.000 & 2291.140 & 0.000 \\ \hline
\end{tabular}
\caption{[European automobile claims] Second and third columns: The MLE parameter estimates and the corresponding SE. Rightmost four columns: The Wald statistics and p-values for the parameter specific meta Wald test under standard and alternative settings of weight function hyperparameters.}
\label{table:real2_param}
\end{table}

\section{Concluding remarks} \label{sec:conclude}
This paper introduces a score-based weighted likelihood estimation (SWLE), which incorporates weights to reduce the impact of the outliers, to estimate the parameters of the GLM robustly. With a specially designed weight function, closed-form expressions are obtained for both the score function and asymptotic covariance matrix, making it computationally appealing to estimate parameters and determine parameter uncertainties. The robustness of the SWLE is also theoretically justified by a bounded influence function (IF). Apart from robust estimations, the SWLE also serves as a diagnostic tool to quantitatively assess the overall appropriateness of fitting the GLM. We further extend the SWLE to cater to random censored and truncated regression data prevalent in the insurance losses subjected to coverage modifications. The proposed tool is exemplified on three simulation studies and two real insurance datasets, revealing the usefulness of the SWLE in the following three aspects:
\begin{itemize}
\item If the data-generating model is the GLM (Simulation study 1), the SWLE will provide consistent estimations;
\item If the data-generating model is the GLM contaminated by a few outliers (Simulation study 2 and US indemnity loss data), the SWLE will provide a more reliable estimate of parameters as compared to the MLE;
\item If the data-generating model deviates systematically from the GLM (Simulation study 3 and European automobile insurance data), the SWLE Wald test will detect the model misspecifications with very high power and suggest model improvements.
\end{itemize} 

The SWLE is applicable not only to the GLM but also to other more complex model classes, including the generalized additive models (GAM) for non-linear regression links and finite mixture models for distributional multimodalities. Therefore, it is worthwhile to explore alternative weight functions such that the SWLE is computationally appealing for broader model classes. Another potential research direction is to extend the SWLE to cater to multivariate and longitudinal data. This is useful in insurance practice because insurance companies often contain multiple business lines (multivariate losses), and policyholders often have multiple years of claim history (longitudinal data). Finally, as discussed in Remark \ref{rmk:robust}, the current study does not consider the effects of covariate outliers. To do so, Assumption (i) of Theorem \ref{thm:if} needs to be removed, and hence extra conditions on the weight function $W(y_i,\bm{x}_i)$ will be needed. We leverage this study to the future research direction.

\bibliographystyle{abbrvnat}
\bibliography{reference}

\pagebreak
\begin{appendices}
\section{Regularity conditions}

\subsection{Regularity conditions for Theorems \ref{thm:asymp} and \ref{thm:if} (complete data)} \label{apx:sec:reg_com_asymp}
Recall that the (individual) SWLE score function for complete data is given by $\mathcal{S}(\bm{\Psi};y,\bm{x})$ in Equation (\ref{eq:thm:score_glm}). We denote $\Omega$ as the parameter space of $\bm{\Psi}$. The regularity conditions for Theorems \ref{thm:asymp} and \ref{thm:if} are:

\begin{enumerate}
\item The functions $A(\cdot)$, $g(\cdot)$, $b(\cdot)$ and $\xi(\cdot)$ are three times continuously differentiable.
\item $E_{Y,\bm{x}}\left[\|\mathcal{S}(\bm{\Psi};Y,\bm{x})\|^2\right]<\infty$ for $\bm{\Psi}\in\Omega$.
\item $E_{Y,\bm{x}}\left[\frac{\partial}{\partial\bm{\Psi}}\mathcal{S}(\bm{\Psi};Y,\bm{x})^T\right]$ exists and is finite for $\bm{\Psi}\in\Omega$.
\item $\left|\frac{\partial^2}{\partial\psi_{p_1}\partial\psi_{p_2}}[\mathcal{S}(\bm{\Psi};Y,\bm{x})]_{p_3}\right|$ is dominated by a fixed integrable function of $(Y,\bm{x})$ for $p_1,p_2,p_3=1,\ldots,P+1$, where $\psi_p$ is the $p$-th element of $\bm{\Psi}$ and $[\mathcal{S}(\bm{\Psi};Y,\bm{x})]_{p_3}$ is the $p_3$-th element of the score function vector.
\end{enumerate}

\subsection{Regularity conditions for Theorems \ref{thm:diag:asymp} and \ref{thm:diag:chisq} (complete data)} \label{apx:sec:reg_com_diag}
We define the meta individual score function $\mathcal{S}^{\text{meta}}(\bm{\Psi};Y,\bm{x})$ as
\begin{align}
\mathcal{S}^{\text{meta}}(\bm{\Psi}^{\text{meta}};Y,\bm{x})=
\begin{pmatrix}
\mathcal{S}^{(1)}(\bm{\Psi}^{(1)};Y,\bm{x})\\
\vdots\\
\mathcal{S}^{(K)}(\bm{\Psi}^{(K)};Y,\bm{x})
\end{pmatrix},
\end{align}
with $\mathcal{S}^{(k)}(\bm{\Psi};Y,\bm{x})$ being the individual score function (Equation (\ref{eq:thm:score_glm})) evaluated with hyperparameters $\tilde{\bm{\Psi}}^{(k)}$. Again, we let $\Omega$ be the common parameter space of $\bm{\Psi}^{(k)}$, $k=1,\ldots,K$. The regularity conditions for Theorems \ref{thm:diag:asymp} and \ref{thm:diag:chisq} are as follows for $k_1,k_2,k_3=1,\ldots,K$:
\begin{enumerate}
\item The functions $A(\cdot)$, $g(\cdot)$, $b(\cdot)$ and $\xi(\cdot)$ are three times continuously differentiable.
\item $E_{Y,\bm{x}}\left[\|\mathcal{S}^{\text{meta}}(\bm{\Psi}^{\text{meta}};Y,\bm{x})\|^2\right]<\infty$ for $\bm{\Psi}^{\text{meta}}\in\Omega^K$.
\item $E_{Y,\bm{x}}\left[\frac{\partial}{\partial\bm{\Psi}^{(k_1)}}\mathcal{S}^{(k_2)}(\bm{\Psi}^{(k_2)};Y,\bm{x})^T\right]$ exists and is finite for $\bm{\Psi}^{(k_1)},\bm{\Psi}^{(k_2)}\in\Omega$.
\item $\left|\frac{\partial^2}{\partial\psi^{(k_1)}_{p_1}\partial\psi^{(k_2)}_{p_2}}[\mathcal{S}^{(k_3)}(\bm{\Psi}^{(k_3)};Y,\bm{x})]_{p_3}\right|$ is dominated by a fixed integrable function of $(Y,\bm{x})$ for $p_1,p_2,p_3=1,\ldots,P+1$, where $\psi_p^{(k)}$ is the $p$-th element of $\bm{\Psi}^{(k)}$ and $[\mathcal{S}^{(k_3)}(\bm{\Psi}^{(k_3)};Y,\bm{x})]_{p_3}$ is the $p_3$-th element of the individual score function vector.
\end{enumerate}

\subsection{Regularity conditions for Theorem \ref{thm:censtrun:asymp} (incomplete data)} \label{apx:sec:reg_incom_asymp}
Recall that the (individual) SWLE score function for incomplete data is given by $\mathcal{S}(\bm{\Psi};\mathcal{D},\bm{x})$ in Equation (\ref{eq:censtrun:swle}). We denote $\Omega$ as the parameter space of $\bm{\Psi}$. Similar to the previous subsections, the regularity conditions for Theorem \ref{thm:censtrun:asymp} are:

\begin{enumerate}
\item The functions $A(\cdot)$, $g(\cdot)$, $b(\cdot)$ and $\xi(\cdot)$ are three times continuously differentiable.
\item $E_{\mathcal{D},\bm{x}}\left[\|\mathcal{S}(\bm{\Psi};\mathcal{D},\bm{x})\|^2\right]<\infty$ for $\bm{\Psi}\in\Omega$.
\item $E_{\mathcal{D},\bm{x}}\left[\frac{\partial}{\partial\bm{\Psi}}\mathcal{S}(\bm{\Psi};\mathcal{D},\bm{x})^T\right]$ exists and is finite for $\bm{\Psi}\in\Omega$.
\item $\left|\frac{\partial^2}{\partial\psi_{p_1}\partial\psi_{p_2}}[\mathcal{S}(\bm{\Psi};\mathcal{D},\bm{x})]_{p_3}\right|$ is dominated by a fixed integrable function of $(\mathcal{D},\bm{x})$ for $p_1,p_2,p_3=1,\ldots,P+1$, where $\psi_p$ is the $p$-th element of $\bm{\Psi}$ and $[\mathcal{S}(\bm{\Psi};\mathcal{D},\bm{x})]_{p_3}$ is the $p_3$-th element of the score function vector.
\end{enumerate}

\subsection{Regularity conditions for Theorem \ref{thm:censtrun:diag} (incomplete data)} \label{apx:sec:reg_incom_diag}
We define the meta individual score function $\mathcal{S}^{\text{meta}}(\bm{\Psi};\mathcal{D},\bm{x})$ for incomplete data as
\begin{align}
\mathcal{S}^{\text{meta}}(\bm{\Psi}^{\text{meta}};\mathcal{D},\bm{x})=
\begin{pmatrix}
\mathcal{S}^{(1)}(\bm{\Psi}^{(1)};\mathcal{D},\bm{x})\\
\vdots\\
\mathcal{S}^{(K)}(\bm{\Psi}^{(K)};\mathcal{D},\bm{x})
\end{pmatrix},
\end{align}
with $\mathcal{S}^{(k)}(\bm{\Psi};\mathcal{D},\bm{x})$ being the individual score function (Equation (\ref{eq:censtrun:swle})) evaluated with hyperparameters $\tilde{\bm{\Psi}}^{(k)}$. Again, we let $\Omega$ be the common parameter space of $\bm{\Psi}^{(k)}$, $k=1,\ldots,K$. The regularity conditions for Theorem \ref{thm:censtrun:diag} are as follows for $k_1,k_2,k_3=1,\ldots,K$:
\begin{enumerate}
\item The functions $A(\cdot)$, $g(\cdot)$, $b(\cdot)$ and $\xi(\cdot)$ are three times continuously differentiable.
\item $E_{\mathcal{D},\bm{x}}\left[\|\mathcal{S}^{\text{meta}}(\bm{\Psi}^{\text{meta}};\mathcal{D},\bm{x})\|^2\right]<\infty$ for $\bm{\Psi}^{\text{meta}}\in\Omega^K$.
\item $E_{\mathcal{D},\bm{x}}\left[\frac{\partial}{\partial\bm{\Psi}^{(k_1)}}\mathcal{S}^{(k_2)}(\bm{\Psi}^{(k_2)};\mathcal{D},\bm{x})^T\right]$ exists and is finite for $\bm{\Psi}^{(k_1)},\bm{\Psi}^{(k_2)}\in\Omega$.
\item $\left|\frac{\partial^2}{\partial\psi^{(k_1)}_{p_1}\partial\psi^{(k_2)}_{p_2}}[\mathcal{S}^{(k_3)}(\bm{\Psi}^{(k_3)};\mathcal{D},\bm{x})]_{p_3}\right|$ is dominated by a fixed integrable function of $(\mathcal{D},\bm{x})$ for $p_1,p_2,p_3=1,\ldots,P+1$, where $\psi_p^{(k)}$ is the $p$-th element of $\bm{\Psi}^{(k)}$ and $[\mathcal{S}^{(k_3)}(\bm{\Psi}^{(k_3)};\mathcal{D},\bm{x})]_{p_3}$ is the $p_3$-th element of the individual score function vector.
\end{enumerate}

\section{Covariance matrix of the SWLE in Theorem \ref{thm:asymp}} \label{apx:sec:cov_asymp}
Theorem \ref{thm:asymp} states that the SWLE $\hat{\bm{\Psi}}_n$ satisfies
\begin{align}
\sqrt{n}(\hat{\bm{\Psi}}_n-\bm{\Psi}_0)\overset{d}{\rightarrow}\mathcal{N}(\bm{0},\bm{\Sigma}),
\end{align}
where $\bm{\Sigma}:=\bm{\Sigma}(\bm{\Psi}_0)=(\Gamma^{-1})\Lambda(\Gamma^{-1})^T$, with $\Gamma$ and $\Lambda$ being $(P+1)\times (P+1)$ matrices given by
\begin{align}
\Gamma:=\Gamma(\bm{\Psi}_0)=
E_{Y,\bm{x}}\left[\frac{\partial}{\partial\bm{\Psi}}\mathcal{S}(\bm{\Psi};Y,\bm{x})^T\right]\Bigg|_{\bm{\Psi}=\bm{\Psi}_0}=
\begin{pmatrix}
\Gamma_{\theta\theta}(\bm{\Psi}_0) & \Gamma_{\theta\phi}(\bm{\Psi}_0) \\
\Gamma_{\phi\theta}(\bm{\Psi}_0) & \Gamma_{\phi\phi}(\bm{\Psi}_0)
\end{pmatrix},
\end{align}
where
\begin{align}
\Gamma_{\theta\theta}(\bm{\Psi})
:=E_{\bm{x}}\left[W_{\theta\theta}(\bm{\Psi},\bm{x})\bm{x}\bm{x}^T\right],
\end{align}
\begin{align}
\Gamma_{\theta\phi}(\bm{\Psi})=\Gamma_{\phi\theta}(\bm{\Psi})^T
:=E_{\bm{x}}\left[W_{\theta\phi}(\bm{\Psi},\bm{x})\bm{x}\right],
\end{align}
\begin{align}
\Gamma_{\phi\phi}(\bm{\Psi})
:=E_{\bm{x}}\left[W_{\phi\phi}(\bm{\Psi},\bm{x})\right],
\end{align}
and hence
\begin{align} \label{apx:eq:w_tt}
W_{\theta\theta}(\bm{\Psi},\bm{x})=-\lambda^{*}(\bm{\Psi};\bm{x})\frac{\phi^{*}}{\phi^2}A''(\theta^*)\left(\xi'(\bm{x}^T\bm{\beta})\right)^2,
\end{align}
\begin{align} \label{apx:eq:w_tp}
W_{\theta\phi}(\bm{\Psi},\bm{x})=-\lambda^{*}(\bm{\Psi};\bm{x})\frac{\phi^{*2}}{\phi^3}A''(\theta^*)\xi'(\bm{x}^T\bm{\beta})\left(\left(c-\frac{1}{\tilde{\phi}}\right)\theta+\frac{\tilde{\theta}}{\tilde{\phi}}\right),
\end{align}
\begin{align} \label{apx:eq:w_pp}
W_{\phi\phi}(\bm{\Psi},\bm{x})=\lambda^{*}(\bm{\Psi};\bm{x})\frac{\phi^{*4}}{\phi^4}\left\{-\frac{1}{\phi^*}A''(\theta^*)\left(\left(c-\frac{1}{\tilde{\phi}}\right)\theta+\frac{\tilde{\theta}}{\tilde{\phi}}\right)^2+\left(\frac{2}{\phi^*}b'(\phi^*)+b''(\phi^*)\right)\right\},
\end{align}
and
\begin{align} 
\Lambda:=\Lambda(\bm{\Psi}_0)=
E_{Y,\bm{x}}\left[\mathcal{S}(\bm{\Psi};Y,\bm{x})\mathcal{S}(\bm{\Psi};Y,\bm{x})^T\right]\Bigg|_{\bm{\Psi}=\bm{\Psi}_0}=
\begin{pmatrix}
\Lambda_{\theta\theta}(\bm{\Psi}_0) & \Lambda_{\theta\phi}(\bm{\Psi}_0) \\
\Lambda_{\phi\theta}(\bm{\Psi}_0) & \Lambda_{\phi\phi}(\bm{\Psi}_0)
\end{pmatrix},
\end{align}
where
\begin{align}
\Lambda_{\theta\theta}(\bm{\Psi})
:=E_{\bm{x}}\left[V_{\theta\theta}(\bm{\Psi},\bm{x})\bm{x}\bm{x}^T\right],
\end{align}
\begin{align}
\Lambda_{\theta\phi}(\bm{\Psi})
=\Lambda_{\phi\theta}(\bm{\Psi})^T
&:=E_{\bm{x}}\left[V_{\theta\phi}(\bm{\Psi},\bm{x})\bm{x}\right],
\end{align}
\begin{align}
\Lambda_{\phi\phi}(\bm{\Psi})
&:=E_{\bm{x}}\left[V_{\phi\phi}(\bm{\Psi},\bm{x})\right],
\end{align}
and hence
\begin{align} \label{apx:eq:v_tt}
V_{\theta\theta}(\bm{\Psi},\bm{x})=\lambda^{**}(\bm{\Psi};\bm{x})B_{\theta\theta}(\bm{\Psi};\bm{x})\frac{\phi^{*2}}{\phi^2}\xi'(\bm{x}^T\bm{\beta})^2,
\end{align}
\begin{align} \label{apx:eq:v_tp}
V_{\theta\phi}(\bm{\Psi},\bm{x})=\lambda^{**}(\bm{\Psi};\bm{x})\frac{\phi^{*3}}{\phi^3}\left[B_{\theta\theta}(\bm{\Psi};\bm{x})\left(\left(c-\frac{1}{\tilde{\phi}}\right)\theta+\frac{\tilde{\theta}}{\tilde{\phi}}\right)+B_{\theta\phi}(\bm{\Psi};\bm{x})\right]\xi'(\bm{x}^T\bm{\beta}),
\end{align}
\begin{align} \label{apx:eq:v_pp}
V_{\phi\phi}(\bm{\Psi},\bm{x})=\lambda^{**}(\bm{\Psi};\bm{x})\frac{\phi^{*4}}{\phi^4}\left[B_{\theta\theta}(\bm{\Psi};\bm{x})\left(\left(c-\frac{1}{\tilde{\phi}}\right)\theta+\frac{\tilde{\theta}}{\tilde{\phi}}\right)^2+2B_{\theta\phi}(\bm{\Psi};\bm{x})\left(\left(c-\frac{1}{\tilde{\phi}}\right)\theta+\frac{\tilde{\theta}}{\tilde{\phi}}\right)+B_{\phi\phi}(\bm{\Psi};\bm{x})\right],
\end{align}
with
\begin{align}
B_{\theta\theta}(\bm{\Psi};\bm{x})
=\frac{1}{\phi^{*2}}\left[\phi^{**}A''(\theta^{**})+\left(A'(\theta^{**})-A'(\theta^{*})\right)^2\right],
\end{align}
\begin{align}
B_{\theta\phi}(\bm{\Psi};\bm{x})
=-\frac{1}{\phi^{*3}}\Big\{(\theta^*-\theta^{**})\phi^{**}A''(\theta^{**})
-\left(A'(\theta^*)-A'(\theta^{**})\right)\big[
&(\theta^*-\theta^{**})A'(\theta^{**})-\left(A(\theta^*)-A(\theta^{**})\right) \nonumber\\
&~-\left(\phi^{*2}b'(\phi^*)-\phi^{**2}b'(\phi^{**})\right)\big]\Big\},
\end{align}
\begin{align}
B_{\phi\phi}(\bm{\Psi};\bm{x})
&=-\frac{\phi^{**4}}{\phi^{*4}}\left(\frac{2}{\phi^{**}}b'(\phi^{**})+b''(\phi^{**})\right)\nonumber\\
&\hspace{1em}+\frac{1}{\phi^{*4}}\Big\{(\theta^*-\theta^{**})^2\phi^{**}A''(\theta^{**})
+\Big[(\theta^*-\theta^{**})A'(\theta^{**})-\left(A(\theta^*)-A(\theta^{**})\right) \nonumber\\
&\hspace{18em}-\left(\phi^{*2}b'(\phi^*)-\phi^{**2}b'(\phi^{**})\right)\Big]^2\Big\}.
\end{align}
Here, we have $\theta=\theta(\bm{x},\bm{\beta})=\xi(\bm{x}^T\bm{\beta})$, $\phi^*=(\phi^{-1}+\tilde{\phi}^{-1}-c)^{-1}$, $\theta^*=(\theta/\phi+\tilde{\theta}/\tilde{\phi})\phi^*$, $\phi^{**}=(\phi^{-1}+2\tilde{\phi}^{-1}-2c)^{-1}$ and $\theta^{**}=(\theta/\phi+2\tilde{\theta}/\tilde{\phi})\phi^{**}$. Also, $\lambda^{*}(\bm{\Psi};\bm{x})$ and $\lambda^{**}(\bm{\Psi};\bm{x})$ are the bias adjustment terms defined as $\lambda^*(\bm{\Psi};\bm{x})=\exp\left\{{A(\theta^*)}/{\phi^*}-b(\phi^*)-{A(\theta)}/{\phi}+b(\phi)\right\}$ and $\lambda^{**}(\bm{\Psi};\bm{x})=\exp\left\{{A(\theta^{**})}/{\phi^{**}}-b(\phi^{**})-{A(\theta)}/{\phi}+b(\phi)\right\}$.

\section{Covariance meta matrix for the SWLE diagnostic test in Theorem \ref{thm:diag:asymp}} \label{apx:sec:cov_diag}
Recall Theorem \ref{thm:diag:asymp} that the estimated parameters $\hat{\bm{\Psi}}_n^{\text{meta}}$ satisfy
\begin{align}
\sqrt{n}\left(\hat{\bm{\Psi}}_n^{\text{meta}}-\bm{\Psi}_0^{\text{meta}}\right)\overset{d}{\rightarrow}\mathcal{N}(\bm{0},\bm{\Sigma}^{\text{meta}}),
\end{align}
where the covariance meta matrix is given by
\begin{align} \label{apx:eq:diag:sigma}
{\tiny
\Sigma^{\text{meta}}=
\begin{pmatrix}
\left(\left[\Gamma^{(1)}\right]^{-1}\right)\Lambda^{(1,1)}\left(\left[\Gamma^{(1)}\right]^{-1}\right)^T & \left(\left[\Gamma^{(1)}\right]^{-1}\right)\Lambda^{(1,2)}\left(\left[\Gamma^{(2)}\right]^{-1}\right)^T & 
\dots & 
\left(\left[\Gamma^{(1)}\right]^{-1}\right)\Lambda^{(1,K)}\left(\left[\Gamma^{(K)}\right]^{-1}\right)^T \\
\left(\left[\Gamma^{(2)}\right]^{-1}\right)\Lambda^{(2,1)}\left(\left[\Gamma^{(1)}\right]^{-1}\right)^T & 
\left(\left[\Gamma^{(2)}\right]^{-1}\right)\Lambda^{(2,2)}\left(\left[\Gamma^{(2)}\right]^{-1}\right)^T & 
\dots & 
\left(\left[\Gamma^{(2)}\right]^{-1}\right)\Lambda^{(2,K)}\left(\left[\Gamma^{(K)}\right]^{-1}\right)^T \\
\vdots & \vdots & \ddots & \vdots \\
\left(\left[\Gamma^{(K)}\right]^{-1}\right)\Lambda^{(K,1)}\left(\left[\Gamma^{(1)}\right]^{-1}\right)^T & 
\left(\left[\Gamma^{(K)}\right]^{-1}\right)\Lambda^{(K,2)}\left(\left[\Gamma^{(2)}\right]^{-1}\right)^T & 
\dots & 
\left(\left[\Gamma^{(K)}\right]^{-1}\right)\Lambda^{(K,K)}\left(\left[\Gamma^{(K)}\right]^{-1}\right)^T
\end{pmatrix},
}%
\end{align}
with $\Gamma^{(k)}:=\Gamma^{(k)}(\bm{\Psi}_0)$ and $\Lambda^{(k,k)}:=\Lambda^{(k,k)}(\bm{\Psi}_0)$ respectively being $\Gamma(\bm{\Psi}_0)$ and $\Lambda(\bm{\Psi}_0)$ in Equation (\ref{eq:asymp:matrix}) of Theorem \ref{thm:asymp}, evaluated at weight function hyperparameters $\tilde{\bm{\Psi}}^{(k)}$. For $k\neq k'$, $\Lambda^{(k,k')}$ is a $(P+1)\times (P+1)$ matrix given by
\begin{align} \label{eq:diag:lambda}
\Lambda^{(k,k')}:=\Lambda^{(k,k')}(\bm{\Psi}_0)=
\begin{pmatrix}
\Lambda_{\theta\theta}^{(k,k')}(\bm{\Psi}_0) & \Lambda_{\theta\phi}^{(k,k')}(\bm{\Psi}_0) \\
\Lambda_{\phi\theta}^{(k,k')}(\bm{\Psi}_0) & \Lambda_{\phi\phi}^{(k,k')}(\bm{\Psi}_0)
\end{pmatrix},
\end{align}
where
\begin{align}
\Lambda_{\theta\theta}^{(k,k')}(\bm{\Psi})
=E_{\bm{x}}\left[\lambda^{(k,k')}(\bm{\Psi};\bm{x})B_{\theta\theta}^{(k,k')}(\bm{\Psi};\bm{x})\frac{\phi^{(k)}\phi^{(k')}}{\phi^2}\xi'(\bm{x}^T\bm{\beta})^2\bm{x}\bm{x}^T\right]
:=E_{\bm{x}}\left[V_{\theta\theta}^{(k,k')}(\bm{\Psi},\bm{x})\bm{x}\bm{x}^T\right],
\end{align}
\begin{align}
\Lambda_{\theta\phi}^{(k,k')}(\bm{\Psi})
&=E_{\bm{x}}\left[\lambda^{(k,k')}(\bm{\Psi};\bm{x})\frac{\phi^{(k)}\phi^{(k')2}}{\phi^3}\left[B^{(k,k')}_{\theta\theta}(\bm{\Psi};\bm{x})\left(\left(c-\frac{1}{\tilde{\phi}^{(k')}}\right)\theta+\frac{\tilde{\theta}^{(k')}}{\tilde{\phi}^{(k')}}\right)+B^{(k,k')}_{\theta\phi}(\bm{\Psi};\bm{x})\right]\xi'(\bm{x}^T\bm{\beta})\bm{x}\right]\nonumber\\
&:=E_{\bm{x}}\left[V^{(k,k')}_{\theta\phi}(\bm{\Psi},\bm{x})\bm{x}\right],
\end{align}
\begin{align}
\Lambda_{\phi\theta}^{(k,k')}(\bm{\Psi})=\Lambda_{\theta\phi}^{(k',k)}(\bm{\Psi})^T,
\end{align}
\begin{align}
\Lambda_{\phi\phi}(\bm{\Psi})
&=E_{\bm{x}}\Bigg[\lambda^{(k,k')}(\bm{\Psi};\bm{x})\frac{\phi^{(k)2}\phi^{(k')2}}{\phi^4}\Bigg[B^{(k,k')}_{\theta\theta}(\bm{\Psi};\bm{x})\left(\left(c-\frac{1}{\tilde{\phi}^{(k)}}\right)\theta+\frac{\tilde{\theta}^{(k)}}{\tilde{\phi}^{(k)}}\right)\left(\left(c-\frac{1}{\tilde{\phi}^{(k')}}\right)\theta+\frac{\tilde{\theta}^{(k')}}{\tilde{\phi}^{(k')}}\right)\nonumber\\
&\hspace{3em} +B^{(k,k')}_{\theta\phi}(\bm{\Psi};\bm{x})\left(\left(c-\frac{1}{\tilde{\phi}^{(k')}}\right)\theta+\frac{\tilde{\theta}^{(k')}}{\tilde{\phi}^{(k')}}\right)+B^{(k,k')}_{\phi\theta}(\bm{\Psi};\bm{x})\left(\left(c-\frac{1}{\tilde{\phi}^{(k)}}\right)\theta+\frac{\tilde{\theta}^{(k)}}{\tilde{\phi}^{(k)}}\right)
+B^{(k,k')}_{\phi\phi}(\bm{\Psi};\bm{x})\Bigg]\Bigg]\nonumber\\
&:=E_{\bm{x}}\left[V^{(k,k')}_{\phi\phi}(\bm{\Psi},\bm{x})\right],
\end{align}
with
\begin{align}
B^{(k,k')}_{\theta\theta}(\bm{\Psi};\bm{x})
=\frac{1}{\phi^{(k)}\phi^{(k')}}\left[\phi^{(k,k')}A''(\theta^{(k,k')})+\left(A'(\theta^{(k,k')})-A'(\theta^{(k)})\right)\left(A'(\theta^{(k,k')})-A'(\theta^{(k')})\right)\right],
\end{align}
\begin{align}
B^{(k,k')}_{\theta\phi}(\bm{\Psi};\bm{x})
&=-\frac{1}{\phi^{(k)}\phi^{(k')2}}\bigg\{(\theta^{(k')}-\theta^{(k,k')})\phi^{(k,k')}A''(\theta^{(k,k')})\nonumber\\
&\hspace{7em}-\left(A'(\theta^{(k)})-A'(\theta^{(k,k')})\right)
\Big[(\theta^{(k')}-\theta^{(k,k')})A'(\theta^{(k,k')})-\left(A(\theta^{(k')})-A(\theta^{(k,k')})\right) \nonumber\\
&\hspace{20em}-\left(\phi^{(k')2}b'(\phi^{(k')})-\phi^{(k,k')2}b'(\phi^{(k,k')})\right)\Big]\bigg\},
\end{align}
\begin{align}
B^{(k,k')}_{\phi\theta}(\bm{\Psi};\bm{x})=B^{(k',k)}_{\theta\phi}(\bm{\Psi};\bm{x}),
\end{align}
\begin{align}
B^{(k,k')}_{\phi\phi}(\bm{\Psi};\bm{x})
&=-\frac{\phi^{(k,k')4}}{\phi^{(k)2}\phi^{(k')2}}\left(\frac{2}{\phi^{(k,k')}}b'(\phi^{(k,k')})+b''(\phi^{(k,k')})\right)\nonumber\\
&\hspace{1em}+\frac{1}{\phi^{(k)2}\phi^{(k')2}}\bigg\{(\theta^{(k)}-\theta^{(k,k')})(\theta^{(k')}-\theta^{(k,k')})\phi^{**}A''(\theta^{(k,k')})\nonumber\\
&\hspace{1em}+\Big[(\theta^{(k)}-\theta^{(k,k')})A'(\theta^{(k,k')})-\left(A(\theta^{(k)})-A(\theta^{(k,k')})\right) -\left(\phi^{(k)2}b'(\phi^{(k)})-\phi^{(k,k')2}b'(\phi^{(k,k')})\right)\Big]\nonumber\\
&\hspace{2em}\times\Big[(\theta^{(k')}-\theta^{(k,k')})A'(\theta^{(k,k')})-\left(A(\theta^{(k')})-A(\theta^{(k,k')})\right) -\left(\phi^{(k')2}b'(\phi^{(k')})-\phi^{(k,k')2}b'(\phi^{(k,k')})\right)\Big]\bigg\}.
\end{align}
Here, $\lambda^{(k,k')}(\bm{\Psi};\bm{x})=\exp\{A(\theta^{(k,k')})/\phi^{(k,k')}-b(\phi^{(k,k')})-A(\theta)/\phi+b(\phi)\}$ is the bias adjustment term, $\phi^{(k)}=(\phi^{-1}+\tilde{\phi}^{(k)-1}-c)^{-1}$, $\theta^{(k)}=(\theta/\phi+\tilde{\theta}^{(k)}/\tilde{\phi}^{(k)})\phi^{(k)}$, $\phi^{(k,k')}=(\phi^{-1}+\tilde{\phi}^{(k)-1}+\tilde{\phi}^{(k')-1}-2c)^{-1}$ and $\theta^{(k,k')}=(\theta/\phi+\tilde{\theta}^{(k)}/\tilde{\phi}^{(k)}+\tilde{\theta}^{(k')}/\tilde{\phi}^{(k')})\phi^{(k,k')}$.

\section{Proofs}
\subsection{Proof of Lemma \ref{lem:density} and Corollary \ref{cor:thm:density}} \label{apx:lemma:wgt}
From Equations (\ref{eq:glm}), (\ref{eq:glm_C}) and (\ref{eq:thm:weight}), we have
\begin{align}
f(y_i;\bm{x}_i,\bm{\Psi})W(y_i,\bm{x}_i)
&\propto\exp\left\{\frac{\theta_iy_i-A(\theta_i)}{\phi}+\left(\frac{1}{\phi}-c\right)g(y_i)+a(y_i)+b(\phi)+\frac{\tilde{\theta}_iy_i}{\tilde{\phi}}+\left(\frac{1}{\tilde{\phi}}-c\right)g(y_i)\right\}\nonumber\\
&=\exp\left\{\frac{A(\theta^*_i)}{\phi^*}-b(\phi^*)-\frac{A(\theta_i)}{\phi}+b(\phi)\right\}\times\exp\left\{\frac{\theta_i^*y_i-A(\theta_i^*)}{\phi^*}+C(y_i,\phi^*)\right\}
\end{align}
with $\phi^*=(\phi^{-1}+\tilde{\phi}^{-1}-c)^{-1}$ and $\theta_i^*=(\theta_i/\phi+\tilde{\theta}_i/\tilde{\phi})\phi^*$. Integrating the above expression, we have
\begin{align}
\lambda_i^*(\bm{\Psi};\bm{x}_i)
&=\int_{\mathcal{Y}}f(u;\bm{x}_i,\bm{\Psi})W(u,\bm{x}_i)du\nonumber\\
&\propto\exp\left\{\frac{A(\theta^*_i)}{\phi^*}-b(\phi^*)-\frac{A(\theta_i)}{\phi}+b(\phi)\right\}\times\int_{\mathcal{Y}}\exp\left\{\frac{\theta_i^*u-A(\theta_i^*)}{\phi^*}+C(u,\phi^*)\right\}du\nonumber\\
&=\exp\left\{\frac{A(\theta^*_i)}{\phi^*}-b(\phi^*)-\frac{A(\theta_i)}{\phi}+b(\phi)\right\}
\end{align}
and
\begin{align}
f^*(y_i;\bm{x}_i,\bm{\Psi})
&=\frac{f(y_i;\bm{x}_i,\bm{\Psi})W(y_i,\bm{x}_i)}{\int_{\mathcal{Y}}f(u;\bm{x}_i,\bm{\Psi})W(u,\bm{x}_i)du}
=\exp\left\{\frac{\theta_i^*y_i-A(\theta_i^*)}{\phi^*}+C(y_i,\phi^*)\right\}.
\end{align}

As a result, Lemma \ref{lem:density} holds. Statement 1 of Corollary \ref{cor:thm:density} holds immediately by the definition of $\theta_i^{*}$. For statement 2, given that a canonical link is selected for the GLM, we can write
\begin{align}
\theta^*_i=\left(\frac{\bm{x}_i^T\bm{\beta}}{\phi}+\frac{\bm{x}_i^T\tilde{\bm{\beta}}}{\tilde{\phi}}\right)\phi^*=\bm{x}_i^T\left(\frac{\bm{\beta}}{\phi}+\frac{\tilde{\bm{\beta}}}{\tilde{\phi}}\right)\phi^*
:=\bm{x}_i^T\bm{\beta}^*.
\end{align}

The GLM link is still canonical after transformation, with the transformed regression coefficients given by $\bm{\beta}^*=(\bm{\beta}/\phi+\tilde{\bm{\beta}}/\tilde{\phi})\phi^*$. The result then follows.

\subsection{Proof of Theorem \ref{thm:if} and Corollary \ref{cor:if}}
The influence function (IF) in Equation (\ref{eq:robust:if}) is evaluated by (huber 1981 robust statistics???) as (we here omit the subscript $i$ for $y_i$ and $\bm{x}_i$):
\begin{align} \label{apx:eq:if}
\text{IF}(\bm{\Psi}_0;F,\Delta)=-\Gamma^{-1}\int_{\mathcal{Y}\times\mathcal{X}}\mathcal{S}(\bm{\Psi}_0;y,\bm{x})d\Delta(y,\bm{x}),
\end{align}
where $\Gamma=E_{Y,\bm{x}}\left[\frac{\partial}{\partial\bm{\Psi}}\mathcal{S}(\bm{\Psi}_0;Y,\bm{x})^T\Big|_{\bm{\Psi}=\bm{\Psi}_0}\right]$ is a $(P+1)\times(P+1)$ Hessian matrix and $E_{Y,\bm{x}}[\cdot]$ is an expectation taken on $(Y,\bm{x})$ assuming that $Y|\bm{x}$ follows the unperturbed GLM parameterized by $\bm{\Psi}_0$. The regularity condition 3 in Section \ref{apx:sec:reg_com_asymp} implies that $\Gamma$ exists and is finite. To prove that Equation (\ref{eq:robust:if_bound}) holds (i.e. Equation (\ref{apx:eq:if}) above is bounded), it suffices to show that the score function $\mathcal{S}(\bm{\Psi}_0;y,\bm{x})$ is bounded for $y\in\mathcal{Y}$ and $\bm{x}\in\bar{\mathcal{X}}$. Given that regularity condition 1 in Section \ref{apx:sec:reg_com_asymp} holds and $\bar{\mathcal{X}}$ is compact, we can conclude that $\theta$, $\theta^*$, $A(\theta^*)$, $A'(\theta^*)$, $\xi'(\bm{x}^T\bm{\beta})$ and $\bm{x}$ are all bounded. Observing Equations (\ref{eq:thm:score_glm_t}) and (\ref{eq:thm:score_glm_p}), it suffices to show that $W(y,\bm{x})$, $W(y,\bm{x})y$ and $W(y,\bm{x})g(y)$ are bounded for $y\in\mathcal{Y}$ and $\bm{x}\in\bar{\mathcal{X}}$, which is implied by Assumption (ii) of Theorem \ref{thm:if}. Hence, the result of Theorem \ref{thm:if} follows.

For Corollary \ref{cor:if}, it suffices to show that for every $\bm{x}\in\mathcal{X}$, all $W(y,\bm{x})$, $W(y,\bm{x})y$ and $W(y,\bm{x})g(y)$ converges to 0 as $y\rightarrow\inf\{\mathcal{Y}\}$ or $y\rightarrow\sup\{\mathcal{Y}\}$. These can all be easily verified for the case when $F(y;\bm{x},\bm{\Psi}_0)$ follows Gamma GLM, linear model or inverse-Gaussian GLM, the weight function is in the form of Equation (\ref{eq:thm:weight}) and the hyperparameters are selected according to Example \ref{eg:thm}.

\subsection{A technical lemma} \label{apx:lemma}
\begin{lemma} \label{apx:lemma:D}
Let $f(y;\bm{\Psi})$ be an exponential dispersion density function given by
\begin{equation}
f(y;\bm{\Psi})=\exp\left\{\frac{\theta y-A(\theta)}{\phi}+\left(\frac{1}{\phi}+c\right)g(y)+a(y)+b(\phi)\right\}.
\end{equation}
Denote $\mathcal{U}\subset\mathbb{R}$ as the uncensoring region, $F(y;\bm{\Psi})$ as the corresponding distribution function and $F(\mathcal{U};\bm{\Psi})=\int_{\mathcal{U}}f(y;\bm{\Psi})dy$. Further denote the following expressions:
\begin{equation}
D_{\theta}(\mathcal{U};\bm{\Psi}):=\int_{\mathcal{U}}(y-A'(\theta))f(y;\bm{\Psi})dy,
\end{equation}
\begin{equation}
D_{\phi}(\mathcal{U};\bm{\Psi}):=\int_{\mathcal{U}}(\theta y-A(\theta)+g(y)-\phi^2b'(\phi))f(y;\bm{\Psi})dy
\end{equation}
\begin{equation}
D_{\theta\theta}(\mathcal{U};\bm{\Psi}):=\int_{\mathcal{U}}(y-A'(\theta))^2f(y;\bm{\Psi})dy,
\end{equation}
\begin{equation}
D_{\phi\phi}(\mathcal{U};\bm{\Psi}):=\int_{\mathcal{U}}(\theta y-A(\theta)+g(y)-\phi^2b'(\phi))^2f(y;\bm{\Psi})dy
\end{equation}
\begin{equation}
D_{\theta\phi}(\mathcal{U};\bm{\Psi}):=\int_{\mathcal{U}}(y-A'(\theta))(\theta y-A(\theta)+g(y)-\phi^2b'(\phi))f(y;\bm{\Psi})dy
\end{equation}

Then, all five expressions above can be analytically simplified to Equations (\ref{apx:lemma:Dt}) to (\ref{apx:lemma:Dtp}) presented in the proof below respectively.
\end{lemma}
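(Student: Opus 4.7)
The plan is to recognise the five quantities $D_\theta, D_\phi, D_{\theta\theta}, D_{\theta\phi}, D_{\phi\phi}$ as (partial) integrals of score-type and Hessian-type transforms of the EDM density over the uncensoring region $\mathcal{U}$, and then re-express them in terms of first- and second-order partial derivatives of $F(\mathcal{U};\bm{\Psi})=\int_{\mathcal{U}}f(y;\bm{\Psi})\,dy$ with respect to the natural parameters $\theta$ and $\phi$. The regularity conditions listed in Section~\ref{apx:sec:reg_com_asymp} (smoothness of $A,b,g$, plus integrability) justify differentiation under the integral sign throughout.

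The first step is to record the log-density derivatives
\[
\frac{\partial \log f}{\partial \theta}=\frac{y-A'(\theta)}{\phi},
\qquad
\frac{\partial \log f}{\partial \phi}=-\frac{1}{\phi^{2}}\bigl[\theta y-A(\theta)+g(y)-\phi^{2}b'(\phi)\bigr],
\]
so that the bracketed integrands defining $D_\theta$ and $D_\phi$ are exactly $\phi\,\partial_\theta\log f$ and $-\phi^{2}\,\partial_\phi\log f$ respectively. Interchanging the derivative with the integral over $\mathcal{U}$ then yields
\[
D_\theta(\mathcal{U};\bm{\Psi})=\phi\,\frac{\partial F(\mathcal{U};\bm{\Psi})}{\partial \theta},
\qquad
D_\phi(\mathcal{U};\bm{\Psi})=-\phi^{2}\,\frac{\partial F(\mathcal{U};\bm{\Psi})}{\partial \phi},
\]
which will be the expressions labelled~(\ref{apx:lemma:Dt}) and the companion formula for $D_\phi$.

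For the second-order quantities I will use the identity $\partial^{2}_{\alpha\beta}f=f\bigl[\partial^{2}_{\alpha\beta}\log f+(\partial_\alpha\log f)(\partial_\beta\log f)\bigr]$ together with
\[
\frac{\partial^{2}\log f}{\partial \theta^{2}}=-\frac{A''(\theta)}{\phi},\qquad
\frac{\partial^{2}\log f}{\partial \theta\,\partial \phi}=-\frac{y-A'(\theta)}{\phi^{2}},\qquad
\frac{\partial^{2}\log f}{\partial \phi^{2}}=\frac{2}{\phi^{3}}\bigl[\theta y-A(\theta)+g(y)-\phi^{2}b'(\phi)\bigr]+\frac{2b'(\phi)}{\phi}+b''(\phi).
\]
Integrating $\partial^{2}_{\theta\theta}f$, $\partial^{2}_{\theta\phi}f$ and $\partial^{2}_{\phi\phi}f$ over $\mathcal{U}$ and using the already-established expressions for $D_\theta$ and $D_\phi$ to eliminate the cross terms produces closed-form representations
\[
D_{\theta\theta}=\phi^{2}\,\partial^{2}_{\theta\theta}F+\phi A''(\theta)F,\qquad
D_{\theta\phi}=-\phi^{3}\,\partial^{2}_{\theta\phi}F-\phi^{2}\,\partial_{\theta}F,\qquad
D_{\phi\phi}=\phi^{4}\,\partial^{2}_{\phi\phi}F+2\phi^{3}\,\partial_{\phi}F-\phi^{4}\!\left[\tfrac{2b'(\phi)}{\phi}+b''(\phi)\right]\!F,
\]
each evaluated at $\mathcal{U}$ and $\bm{\Psi}$.

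The derivation is essentially bookkeeping once the differentiation-under-the-integral exchange is justified; the only subtle step is handling $\partial^{2}_{\phi\phi}\log f$, where the $\phi^{2}b'(\phi)$ piece inside the $\phi$-score contributes to both $\partial^{2}_{\phi\phi}\log f$ (via $\partial_\phi$ of $-\phi^{2}b'(\phi)/\phi^{2}$) and to the cross term $(\partial_\phi\log f)^{2}$, so one must carefully collect the $h(y)^{2}/\phi^{4}$, $h(y)/\phi^{3}$ and pure-$\phi$ pieces before re-identifying $D_\phi$ inside the resulting expression. All other manipulations are direct substitutions and rearrangements, which I will record as Equations~(\ref{apx:lemma:Dt})--(\ref{apx:lemma:Dtp}).
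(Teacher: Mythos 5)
Your proposal is correct and follows essentially the same route as the paper: differentiate $F(\mathcal{U};\bm{\Psi})$ under the integral sign with respect to $\theta$ and $\phi$ (once and twice), identify the resulting integrands as combinations of the $D$-quantities, and solve; your final expressions for $D_\theta$, $D_\phi$, $D_{\theta\theta}$, $D_{\theta\phi}$, $D_{\phi\phi}$ agree with Equations (\ref{apx:lemma:Dt})--(\ref{apx:lemma:Dtp}). In fact your $D_{\phi\phi}$ formula is the corrected form of Equation (\ref{apx:lemma:Dpp}), which in the paper contains a stray extra factor of $F(\mathcal{U};\bm{\Psi})$ inside the bracket.
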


\begin{proof}
Given that $F(\mathcal{U};\bm{\Psi})=\int_{\mathcal{U}}f(y;\bm{\Psi})dy$, we take derivative with respect to various parameters on both sides of the equations and the following results follow:

\begin{enumerate}
    \item Differentiating w.r.t. $\theta$:
    \begin{equation*}
    \frac{\partial}{\partial\theta}F(\mathcal{U};\bm{\Psi})=\int_{\mathcal{U}}\frac{1}{\phi}(y-A'(\theta))f(y;\bm{\Psi})dy=\frac{1}{\phi}D_{\theta}(\mathcal{U};\bm{\Psi})
    \end{equation*}
    \begin{equation} \label{apx:lemma:Dt}
    \Rightarrow ~ D_{\theta}(\mathcal{U};\bm{\Psi})=\phi\frac{\partial}{\partial\theta}F(\mathcal{U};\bm{\Psi}).
    \end{equation}
    \item Differentiating w.r.t. $\phi$:
    \begin{equation*}
    \frac{\partial}{\partial\phi}F(\mathcal{U};\bm{\Psi})=\int_{\mathcal{U}}\left\{-\frac{1}{\phi^2}[\theta y-A(\theta)+g(y)]+b'(\phi)\right\}f(y;\bm{\Psi})dy=-\frac{1}{\phi^2}D_{\phi}(\mathcal{U};\bm{\Psi})
    \end{equation*}
    \begin{equation} \label{apx:lemma:Dp}
    \Rightarrow ~ D_{\phi}(\mathcal{U};\bm{\Psi})=-\phi^2\frac{\partial}{\partial\phi}F(\mathcal{U};\bm{\Psi}).
    \end{equation}
    \item Differentiating w.r.t. $\theta$ twice:
    \begin{equation*}
    \frac{\partial^2}{\partial\theta^2}F(\mathcal{U};\bm{\Psi})=\int_{\mathcal{U}}\left\{\frac{1}{\phi^2}(y-A'(\theta))^2-\frac{1}{\phi}A''(\theta)\right\}f(y;\bm{\Psi})dy=\frac{1}{\phi^2}D_{\theta\theta}(\mathcal{U};\bm{\Psi})-\frac{1}{\phi}A''(\theta)F(\mathcal{U};\bm{\Psi})
    \end{equation*}
    \begin{equation} \label{apx:lemma:Dtt}
    \Rightarrow ~ D_{\theta\theta}(\mathcal{U};\bm{\Psi})=\phi A''(\theta)F(\mathcal{U};\bm{\Psi})+\phi^2\frac{\partial^2}{\partial\theta^2}F(\mathcal{U};\bm{\Psi}).
    \end{equation}
    \item Differentiating w.r.t. $\phi$ twice:
    \begin{align*}
    \frac{\partial^2}{\partial\phi^2}F(\mathcal{U};\bm{\Psi})
    &=\int_{\mathcal{U}}\left[\left\{\frac{2}{\phi^3}[\theta y-A(\theta)+g(y)]+b''(\phi)\right\}+\left\{-\frac{1}{\phi^2}[\theta y-A(\theta)+g(y)]+b'(\phi)\right\}^2\right]f(y;\bm{\Psi})dy\\
    &=\frac{2}{\phi^3}D_{\phi}(\mathcal{U};\bm{\Psi})+\frac{2}{\phi}b'(\phi)F(\mathcal{U};\bm{\Psi})+b''(\phi)F(\mathcal{U};\bm{\Psi})+\frac{1}{\phi^4}D_{\phi\phi}(\mathcal{U};\bm{\Psi})
    \end{align*}
    \begin{equation} \label{apx:lemma:Dpp}
    \Rightarrow ~ D_{\phi\phi}(\mathcal{U};\bm{\Psi})=-\left[2\phi^3b'(\phi)F(\mathcal{U};\bm{\Psi})+\phi^4b''(\phi)\right]F(\mathcal{U};\bm{\Psi})+2\phi^3\frac{\partial}{\partial\phi}F(\mathcal{U};\bm{\Psi})+\phi^4\frac{\partial^2}{\partial\phi^2}F(\mathcal{U};\bm{\Psi}).
    \end{equation}
    \item Differentiating w.r.t. $\theta$ and $\phi$:
    \begin{align*}
    \frac{\partial^2}{\partial\theta\partial\phi}F(\mathcal{U};\bm{\Psi})
    &=\int_{\mathcal{U}}\left[-\frac{1}{\phi^2}(y-A'(\theta))\right]f(y;\bm{\Psi})dy\\
    &\hspace{1em}+\int_{\mathcal{U}}\left[-\frac{1}{\phi^2}(\theta y-A(\theta)+g(y))+b'(\phi)\right]\times\left[\frac{1}{\phi}(y-A'(\theta))\right]f(y;\bm{\Psi})dy\\
    &=-\frac{1}{\phi^2}D_{\theta}(\mathcal{U};\bm{\Psi})-\frac{1}{\phi^3}D_{\theta\phi}(\mathcal{U};\bm{\Psi})
    \end{align*}
    \begin{equation} \label{apx:lemma:Dtp}
    \Rightarrow ~ D_{\theta\phi}(\mathcal{U};\bm{\Psi})
    =-\phi^2\frac{\partial}{\partial\theta}F(\mathcal{U};\bm{\Psi})-\phi^3\frac{\partial^2}{\partial\theta\partial\phi}F(\mathcal{U};\bm{\Psi}).
    \end{equation}
\end{enumerate}
\end{proof}

\subsection{Proof of Theorem \ref{thm:censtrun:asymp}} \label{apx:thm:censtrun:asymp}
Since SWLE is within a class of M-estimators, consistency and asymptotic normality can be proved by using Theorems 5.41 and 5.42 of \cite{van2000asymptotic}. The proof of consistency requires that $\bm{\Psi}_0$ is the solution of the expected individual score function $E_{\mathcal{D},\bm{x}}[\mathcal{S}(\bm{\Psi};\mathcal{D},\bm{x})]=\bm{0}$, where $E_{\mathcal{D},\bm{x}}[\cdot]$ is the expectation taken on $\mathcal{D}$ and $\bm{x}$. Note that we have made a slight abuse of notations, because $\mathcal{D}:=(\mathcal{R},\mathcal{T},y1\{y\in\mathcal{U}\},\{1\{y\in\mathcal{I}_{m}\}\}_{m=1,\ldots,M})$ here represents the individual observed information with subscript $i$ dropped, in opposed to the main text of this paper where $\mathcal{D}$ represents the observed information across all losses. Taking a double expectation conditioned on the censoring and truncation mechanisms $(\mathcal{R},\mathcal{T})$ and covariates $\bm{x}$, it suffices to show that
\begin{equation} \label{apx:asymp:consis}
E_Y[\mathcal{S}(\bm{\Psi};\mathcal{D},\bm{x})|\mathcal{R},\mathcal{T},\bm{x}]:=\int_{\mathcal{Y}}\mathcal{S}(\bm{\Psi};\mathcal{D},\bm{x})f_{\mathcal{T}}(y;\bm{x},\bm{\Psi})dy=\bm{0}
\end{equation}
Evaluating the above expression, we have:
\begin{align}
E_Y[\mathcal{S}(\bm{\Psi};\mathcal{D},\bm{x})|\mathcal{R},\mathcal{T},\bm{x}]
&=\int_{\mathcal{U}}W(y;\bm{x})\left\{\frac{\partial}{\partial\bm{\Psi}}\log f^*_{\mathcal{T}}(y;\bm{x},\bm{\Psi})\right\}f_{\mathcal{T}}(y;\bm{x},\bm{\Psi})dy\nonumber\\
&\quad +\left(\int_{\mathcal{Y}}f_{\mathcal{T}}(u;\bm{x},\bm{\Psi})W(u,\bm{x})du\right)\sum_{m=1}^{M}\frac{F^*_{\mathcal{T}}(\mathcal{I}_{m};\bm{x},\bm{\Psi})}{F_{\mathcal{T}}(\mathcal{I}_{m};\bm{x},\bm{\Psi})}\left\{\frac{\partial}{\partial\bm{\Psi}}\log F^*_{\mathcal{T}}(\mathcal{I}_{m};\bm{x},\bm{\Psi})\right\}F_{\mathcal{T}}(\mathcal{I}_{m};\bm{x},\bm{\Psi})\nonumber\\
&=\left(\int_{\mathcal{Y}}f_{\mathcal{T}}(u;\bm{x},\bm{\Psi})W(u,\bm{x})du\right)\int_{\mathcal{U}}f^*_{\mathcal{T}}(y;\bm{x},\bm{\Psi})\frac{\partial}{\partial\bm{\Psi}}\log f^*_{\mathcal{T}}(y;\bm{x},\bm{\Psi})dy\nonumber\\
&\quad +\left(\int_{\mathcal{Y}}f_{\mathcal{T}}(u;\bm{x},\bm{\Psi})W(u,\bm{x})du\right)\sum_{m=1}^{M}F^*_{\mathcal{T}}(\mathcal{I}_{m};\bm{x},\bm{\Psi})\frac{\partial}{\partial\bm{\Psi}}\log F^*_{\mathcal{T}}(\mathcal{I}_{m};\bm{x},\bm{\Psi})\nonumber\\
&=\left(\int_{\mathcal{Y}}f_{\mathcal{T}}(u;\bm{x},\bm{\Psi})W(u,\bm{x})du\right)
\left\{\int_{\mathcal{U}}\frac{\partial}{\partial\bm{\Psi}}f^*_{\mathcal{T}}(y;\bm{x},\bm{\Psi})dy+\sum_{m=1}^{M}\frac{\partial}{\partial\bm{\Psi}}\int_{\mathcal{I}_m}f^*_{\mathcal{T}}(y;\bm{x},\bm{\Psi})dy\right\}\nonumber\\
&=\left(\int_{\mathcal{Y}}f_{\mathcal{T}}(u;\bm{x},\bm{\Psi})W(u,\bm{x})du\right)\frac{\partial}{\partial\bm{\Psi}}\int_{\mathcal{T}}f^*_{\mathcal{T}}(y;\bm{x},\bm{\Psi})dy=\bm{0},
\end{align}
which proves the consistency results. For asymptotic normality, we denote the following notations before proceeding. First, define the transformed density function
\begin{align}
f^{**}(y;\bm{x},\bm{\Psi})=\exp\left\{\frac{\theta^{**}y-A(\theta^{**})}{\phi^{**}}+C(y,\phi^{**})\right\},
\end{align}
and the corresponding truncated density function $f^{**}_{\mathcal{T}}(y;\bm{x},\bm{\Psi})$ represented in the form of Equation (\ref{eq:censtrun:density}). The corresponding distributions are then denoted as $F^{**}(y;\bm{x},\bm{\Psi})$ or $F^{**}_{\mathcal{T}}(y;\bm{x},\bm{\Psi})$. Also, we denote $D^{**}_{\theta}(\mathcal{U};\bm{x},\bm{\Psi})$, $D^{**}_{\phi}(\mathcal{U};\bm{x},\bm{\Psi})$, $D^{**}_{\theta\theta}(\mathcal{U};\bm{x},\bm{\Psi})$, $D^{**}_{\theta\phi}(\mathcal{U};\bm{x},\bm{\Psi})$ and $D^{**}_{\phi\phi}(\mathcal{U};\bm{x},\bm{\Psi})$ as $D_{\theta}(\mathcal{U};\bm{\Psi})$, $D_{\phi}(\mathcal{U};\bm{\Psi})$, $D_{\theta\theta}(\mathcal{U};\bm{\Psi})$, $D_{\theta\phi}(\mathcal{U};\bm{\Psi})$ and $D_{\phi\phi}(\mathcal{U};\bm{\Psi})$ in Equations (\ref{apx:lemma:Dt}) to (\ref{apx:lemma:Dtp}) evaluated at $\theta^{**}$ and $\phi^{**}$ with covariates $\bm{x}$.

With the regularity conditions satisfied, Theorem 5.41 of \cite{van2000asymptotic} shows that $\sqrt{n}(\hat{\bm{\Phi}}_n-\bm{\Phi}_0)\overset{d}{\rightarrow}\mathcal{N}(\bm{0},\bm{\Sigma})$ with 
$\bm{\Sigma}=\Gamma^{-1}\Lambda(\Gamma^{-1})^T$, where 
\begin{align} \label{eq:apx:gamma}
\Gamma:=\Gamma(\bm{\Psi}_0)=
E_{\mathcal{D},\bm{x}}\left[\frac{\partial}{\partial\bm{\Psi}}\mathcal{S}(\bm{\Psi};\mathcal{D},\bm{x})^T\right]\Bigg|_{\bm{\Psi}=\bm{\Psi}_0}
&=E_{\mathcal{D},\bm{x}}\left[E_{Y}\left[\frac{\partial}{\partial\bm{\Psi}}\mathcal{S}(\bm{\Psi};\mathcal{D},\bm{x})^T|\mathcal{R},\mathcal{T},\bm{x}\right]\right]\Bigg|_{\bm{\Psi}=\bm{\Psi}_0}\nonumber\\
&:=E_{\mathcal{D},\bm{x}}\left[\tilde{\Gamma}(\bm{\Psi}_0;\mathcal{R},\mathcal{T},\bm{x})\right]
\end{align}
and
\begin{align} \label{eq:apx:lambda}
\Lambda:=\Lambda(\bm{\Psi}_0)=
E_{\mathcal{D},\bm{x}}\left[\mathcal{S}(\bm{\Psi};\mathcal{D},\bm{x})\mathcal{S}(\bm{\Psi};\mathcal{D},\bm{x})^T\right]\Bigg|_{\bm{\Psi}=\bm{\Psi}_0}
&=E_{\mathcal{D},\bm{x}}\left[E_{Y}\left[\mathcal{S}(\bm{\Psi};\mathcal{D},\bm{x})\mathcal{S}(\bm{\Psi};\mathcal{D},\bm{x})^T|\mathcal{R},\mathcal{T},\bm{x}\right]\right]\Bigg|_{\bm{\Psi}=\bm{\Psi}_0}\nonumber\\
&:=E_{\mathcal{D},\bm{x}}\left[\tilde{\Lambda}(\bm{\Psi}_0;\mathcal{R},\mathcal{T},\bm{x})\right].
\end{align}

We now derive $\tilde{\Gamma}(\bm{\Psi};\mathcal{R},\mathcal{T},\bm{x})$ and $\tilde{\Lambda}(\bm{\Psi};\mathcal{R},\mathcal{T},\bm{x})$ as follows. First, we write
\begin{align}
\tilde{\Gamma}(\bm{\Psi};\mathcal{R},\mathcal{T},\bm{x}):=
\begin{pmatrix}
\tilde{\Gamma}_{\theta\theta}(\bm{\Psi};\mathcal{R},\mathcal{T},\bm{x}) & \tilde{\Gamma}_{\theta\phi}(\bm{\Psi};\mathcal{R},\mathcal{T},\bm{x})  \\
\tilde{\Gamma}_{\phi\theta}(\bm{\Psi};\mathcal{R},\mathcal{T},\bm{x})  & \tilde{\Gamma}_{\phi\phi}(\bm{\Psi};\mathcal{R},\mathcal{T},\bm{x}),
\end{pmatrix}
\end{align}
where the four elements above are expressed as follows.

\begin{align} \label{eq:apx:gamma_tt}
\tilde{\Gamma}_{\theta\theta}(\bm{\Psi};\mathcal{R},\mathcal{T},\bm{x})
&=\int_{\mathcal{U}}W(y;\bm{x})\left\{\frac{\partial^2}{\partial\bm{\beta}\partial\bm{\beta}^T}\log f^*_{\mathcal{T}}(y;\bm{x},\bm{\Psi})\right\}f_{\mathcal{T}}(y;\bm{x},\bm{\Psi})dy\nonumber\\
&\quad +\sum_{m=1}^{M}\left\{\frac{\partial}{\partial\bm{\beta}}\left\{\lambda^*(\bm{\Psi};\bm{x})\frac{F^*(\mathcal{I}_{m};\bm{x},\bm{\Psi})}{F(\mathcal{I}_{m};\bm{x},\bm{\Psi})}\frac{\partial}{\partial\bm{\beta}^T} \log F^*_{\mathcal{T}}(\mathcal{I}_{m};\bm{x},\bm{\Psi})\right\}\right\}F_{\mathcal{T}}(\mathcal{I}_{m};\bm{x},\bm{\Psi})\nonumber\\
&=\int_{\mathcal{U}}W(y;\bm{x})\left\{\frac{\partial^2}{\partial\theta^{*2}}\log f^*_{\mathcal{T}}(y;\bm{x},\bm{\Psi})\right\}f_{\mathcal{T}}(y;\bm{x},\bm{\Psi})dy\left(\frac{\partial\theta^*}{\partial\bm{\beta}}\right)\left(\frac{\partial\theta^*}{\partial\bm{\beta}}\right)^T\nonumber\\
&\quad +\int_{\mathcal{U}}W(y;\bm{x})\left\{\frac{\partial}{\partial\theta^{*}}\log f^*_{\mathcal{T}}(y;\bm{x},\bm{\Psi})\right\}f_{\mathcal{T}}(y;\bm{x},\bm{\Psi})dy\left(\frac{\partial\theta^{*2}}{\partial\bm{\beta}\partial\bm{\beta}^T}\right)\nonumber\\
&\quad +\sum_{m=1}^{M}\left\{\frac{\partial}{\partial\theta^*}\left\{\lambda^*(\bm{\Psi};\bm{x})\frac{F^*(\mathcal{I}_{m};\bm{x},\bm{\Psi})}{F(\mathcal{I}_{m};\bm{x},\bm{\Psi})}\frac{\partial}{\partial\theta^*} \log F^*_{\mathcal{T}}(\mathcal{I}_{m};\bm{x},\bm{\Psi})\right\}\right\}F_{\mathcal{T}}(\mathcal{I}_{m};\bm{x},\bm{\Psi})\left(\frac{\partial\theta^*}{\partial\bm{\beta}}\right)\left(\frac{\partial\theta^*}{\partial\bm{\beta}}\right)^T\nonumber\\
&\quad +\sum_{m=1}^{M}\left\{\lambda^*(\bm{\Psi};\bm{x})\frac{F^*(\mathcal{I}_{m};\bm{x},\bm{\Psi})}{F(\mathcal{I}_{m};\bm{x},\bm{\Psi})}\frac{\partial}{\partial\theta^*} \log F^*_{\mathcal{T}}(\mathcal{I}_{m};\bm{x},\bm{\Psi})\right\}F_{\mathcal{T}}(\mathcal{I}_{m};\bm{x},\bm{\Psi})\left(\frac{\partial\theta^{*2}}{\partial\bm{\beta}\partial\bm{\beta}^T}\right)\nonumber\\
&:=\left(P_{\theta\theta}(\bm{\Psi};\mathcal{R},\mathcal{T},\bm{x})+Q_{\theta\theta}(\bm{\Psi};\mathcal{R},\mathcal{T},\bm{x})\right)\left(\frac{\partial\theta^*}{\partial\bm{\beta}}\right)\left(\frac{\partial\theta^*}{\partial\bm{\beta}}\right)^T
:=W_{\theta\theta}(\bm{\Psi};\mathcal{R},\mathcal{T},\bm{x})\bm{x}\bm{x}^T,
\end{align}
with
\begin{align}
W_{\theta\theta}(\bm{\Psi};\mathcal{R},\mathcal{T},\bm{x})=\frac{\phi^{*2}}{\phi^2}\left(\xi'(\bm{x}^T\bm{\beta})\right)^2\left(P_{\theta\theta}(\bm{\Psi};\mathcal{R},\mathcal{T},\bm{x})+Q_{\theta\theta}(\bm{\Psi};\mathcal{R},\mathcal{T},\bm{x})\right),
\end{align}
\begin{align}
P_{\theta\theta}(\bm{\Psi};\mathcal{R},\mathcal{T},\bm{x})
&=\int_{\mathcal{U}}W(y;\bm{x})\left\{\frac{\partial^2}{\partial\theta^{*2}}\log f^*_{\mathcal{T}}(y;\bm{x},\bm{\Psi})\right\}f_{\mathcal{T}}(y;\bm{x},\bm{\Psi})dy\nonumber\\
&=-\frac{\lambda^*(\bm{\Psi};\bm{x})}{F(\mathcal{T};\bm{x},\bm{\Psi})}\left[\frac{1}{\phi^*}A''(\theta^*)+\frac{\partial^2}{\partial\theta^{*2}}\log F^*(\mathcal{T};\bm{x},\bm{\Psi})\right]F^*(\mathcal{U};\bm{x},\bm{\Psi}),
\end{align}
\begin{align}
Q_{\theta\theta}(\bm{\Psi};\mathcal{R},\mathcal{T},\bm{x})
&=\sum_{m=1}^{M}\left\{\frac{\partial}{\partial\theta^*}\left\{\lambda^*(\bm{\Psi};\bm{x})\frac{F^*(\mathcal{I}_{m};\bm{x},\bm{\Psi})}{F(\mathcal{I}_{m};\bm{x},\bm{\Psi})}\frac{\partial}{\partial\theta^*} \log F^*_{\mathcal{T}}(\mathcal{I}_{m};\bm{x},\bm{\Psi})\right\}\right\}F_{\mathcal{T}}(\mathcal{I}_{m};\bm{x},\bm{\Psi}).
\end{align}

Note that in Equation (\ref{eq:apx:gamma_tt}) the sum of the second and forth terms of the second equality is zero, because it represents the expected score function multiplied by $\partial\theta^{*2}/\partial\bm{\beta}\partial\bm{\beta}^T$. (Need some explanations on how to evaluate (7.20)). Similarly, the other elements can be expressed as

\begin{align}
\tilde{\Gamma}_{\theta\phi}(\bm{\Psi};\mathcal{R},\mathcal{T},\bm{x})
&=\left(P_{\theta\theta}(\bm{\Psi};\mathcal{R},\mathcal{T},\bm{x})+Q_{\theta\theta}(\bm{\Psi};\mathcal{R},\mathcal{T},\bm{x})\right)\left(\frac{\partial\theta^*}{\partial\phi}\right)\left(\frac{\partial\theta^*}{\partial\bm{\beta}}\right)\nonumber\\
&\quad + \left(P_{\theta\phi}(\bm{\Psi};\mathcal{R},\mathcal{T},\bm{x})+Q_{\theta\phi}(\bm{\Psi};\mathcal{R},\mathcal{T},\bm{x})\right)\left(\frac{\partial\phi^*}{\partial\phi}\right)\left(\frac{\partial\theta^*}{\partial\bm{\beta}}\right)\nonumber\\
&:=W_{\theta\phi}(\bm{\Psi};\mathcal{R},\mathcal{T},\bm{x})\bm{x},
\end{align}
with
\begin{align}
W_{\theta\phi}(\bm{\Psi};\mathcal{R},\mathcal{T},\bm{x})
&=\frac{\phi^{*3}}{\phi^3}\xi'(\bm{x}^T\bm{\beta})\Bigg[\left(\left(c-\frac{1}{\tilde{\phi}}\right)\theta+\frac{\tilde{\theta}}{\tilde{\phi}}\right)\left(P_{\theta\theta}(\bm{\Psi};\mathcal{R},\mathcal{T},\bm{x})+Q_{\theta\theta}(\bm{\Psi};\mathcal{R},\mathcal{T},\bm{x})\right)\nonumber\\
&\hspace{15em}+\left(P_{\theta\phi}(\bm{\Psi};\mathcal{R},\mathcal{T},\bm{x})+Q_{\theta\phi}(\bm{\Psi};\mathcal{R},\mathcal{T},\bm{x})\right)\Bigg],
\end{align}
\begin{align}
P_{\theta\phi}(\bm{\Psi};\mathcal{R},\mathcal{T},\bm{x})
&=\int_{\mathcal{U}}W(y;\bm{x})\left\{\frac{\partial^2}{\partial\theta^{*}\partial\phi^*}\log f^*_{\mathcal{T}}(y;\bm{x},\bm{\Psi})\right\}f_{\mathcal{T}}(y;\bm{x},\bm{\Psi})dy\nonumber\\
&=-\frac{\lambda^*(\bm{\Psi};\bm{x})}{F(\mathcal{T};\bm{x},\bm{\Psi})}\left[\frac{1}{\phi^{*2}}\int_{\mathcal{U}}(y-A'(\theta^*))f^*(y;\bm{x},\bm{\Psi})dy+\left(\frac{\partial^2}{\partial\theta^{*}\partial\phi^*}\log F^*(\mathcal{T};\bm{x},\bm{\Psi})\right)F^*(\mathcal{U};\bm{x},\bm{\Psi})\right]\nonumber\\
&=-\frac{\lambda^*(\bm{\Psi};\bm{x})}{F(\mathcal{T};\bm{x},\bm{\Psi})}\left[\frac{1}{\phi^{*2}}D^*_{\theta}(\mathcal{U};\bm{x},\bm{\Psi})
+\left(\frac{\partial^2}{\partial\theta^{*}\partial\phi^*}\log F^*(\mathcal{T};\bm{x},\bm{\Psi})\right)F^*(\mathcal{U};\bm{x},\bm{\Psi})\right]
\end{align}
\begin{align}
Q_{\theta\phi}(\bm{\Psi};\mathcal{R},\mathcal{T},\bm{x})
&=\sum_{m=1}^{M}\left\{\frac{\partial}{\partial\theta^*}\left\{\lambda^*(\bm{\Psi};\bm{x})\frac{F^*(\mathcal{I}_{m};\bm{x},\bm{\Psi})}{F(\mathcal{I}_{m};\bm{x},\bm{\Psi})}\frac{\partial}{\partial\phi^*} \log F^*_{\mathcal{T}}(\mathcal{I}_{m};\bm{x},\bm{\Psi})\right\}\right\}F_{\mathcal{T}}(\mathcal{I}_{m};\bm{x},\bm{\Psi}),
\end{align}
where $D^*_{\theta}(\mathcal{U};\bm{x},\bm{\Psi})$ is simply $D_{\theta}(\mathcal{U};\bm{\Psi})$ in Equation (\ref{apx:lemma:Dt}) of Lemma \ref{apx:lemma:D} evaluated at $\theta^*$, $\phi^*$ and $\bm{x}$.

\begin{align}
\tilde{\Gamma}_{\phi\theta}(\bm{\Psi};\mathcal{R},\mathcal{T},\bm{x})
&:=W_{\phi\theta}(\bm{\Psi};\mathcal{R},\mathcal{T},\bm{x})\bm{x}^T,
\end{align}
with
\begin{align}
W_{\phi\theta}(\bm{\Psi};\mathcal{R},\mathcal{T},\bm{x})
&=\frac{\phi^{*3}}{\phi^3}\xi'(\bm{x}^T\bm{\beta})\Bigg[\left(\left(c-\frac{1}{\tilde{\phi}}\right)\theta+\frac{\tilde{\theta}}{\tilde{\phi}}\right)\left(P_{\theta\theta}(\bm{\Psi};\mathcal{R},\mathcal{T},\bm{x})+Q_{\theta\theta}(\bm{\Psi};\mathcal{R},\mathcal{T},\bm{x})\right)\nonumber\\
&\hspace{15em}+\left(P_{\phi\theta}(\bm{\Psi};\mathcal{R},\mathcal{T},\bm{x})+Q_{\phi\theta}(\bm{\Psi};\mathcal{R},\mathcal{T},\bm{x})\right)\Bigg],
\end{align}
\begin{align}
P_{\phi\theta}(\bm{\Psi};\mathcal{R},\mathcal{T},\bm{x})=P_{\theta\phi}(\bm{\Psi};\mathcal{R},\mathcal{T},\bm{x}),
\end{align}
\begin{align}
Q_{\phi\theta}(\bm{\Psi};\mathcal{R},\mathcal{T},\bm{x})
&=\sum_{m=1}^{M}\left\{\frac{\partial}{\partial\phi^*}\left\{\lambda^*(\bm{\Psi};\bm{x})\frac{F^*(\mathcal{I}_{m};\bm{x},\bm{\Psi})}{F(\mathcal{I}_{m};\bm{x},\bm{\Psi})}\frac{\partial}{\partial\theta^*} \log F^*_{\mathcal{T}}(\mathcal{I}_{m};\bm{x},\bm{\Psi})\right\}\right\}F_{\mathcal{T}}(\mathcal{I}_{m};\bm{x},\bm{\Psi}).
\end{align}

\begin{align}
\tilde{\Gamma}_{\phi\phi}(\bm{\Psi};\mathcal{R},\mathcal{T},\bm{x})
&:=W_{\phi\phi}(\bm{\Psi};\mathcal{R},\mathcal{T},\bm{x}),
\end{align}
with
\begin{align}
W_{\phi\phi}(\bm{\Psi};\mathcal{R},\mathcal{T},\bm{x})
&=\frac{\phi^{*4}}{\phi^4}\Bigg[\left(\left(c-\frac{1}{\tilde{\phi}}\right)\theta+\frac{\tilde{\theta}}{\tilde{\phi}}\right)^2\left(P_{\theta\theta}(\bm{\Psi};\mathcal{R},\mathcal{T},\bm{x})+Q_{\theta\theta}(\bm{\Psi};\mathcal{R},\mathcal{T},\bm{x})\right)\nonumber\\
&\hspace{3em}+\left(\left(c-\frac{1}{\tilde{\phi}}\right)\theta+\frac{\tilde{\theta}}{\tilde{\phi}}\right)\Big[\left(P_{\theta\phi}(\bm{\Psi};\mathcal{R},\mathcal{T},\bm{x})+Q_{\theta\phi}(\bm{\Psi};\mathcal{R},\mathcal{T},\bm{x})\right)\nonumber\\
&\hspace{13em}+\left(P_{\phi\theta}(\bm{\Psi};\mathcal{R},\mathcal{T},\bm{x})+Q_{\phi\theta}(\bm{\Psi};\mathcal{R},\mathcal{T},\bm{x})\right)\Big]\nonumber\\
&\hspace{3em}+\left(P_{\phi\phi}(\bm{\Psi};\mathcal{R},\mathcal{T},\bm{x})+Q_{\phi\phi}(\bm{\Psi};\mathcal{R},\mathcal{T},\bm{x})\right)\Bigg],
\end{align}

\begin{align}
P_{\phi\phi}(\bm{\Psi};\mathcal{R},\mathcal{T},\bm{x})
&=\int_{\mathcal{U}}W(y;\bm{x})\left\{\frac{\partial^2}{\partial\phi^{*2}}\log f^*_{\mathcal{T}}(y;\bm{x},\bm{\Psi})\right\}f_{\mathcal{T}}(y;\bm{x},\bm{\Psi})dy\nonumber\\
&=\frac{\lambda^*(\bm{\Psi};\bm{x})}{F(\mathcal{T};\bm{x},\bm{\Psi})}\bigg[\int_{\mathcal{U}}\left[\frac{2}{\phi^{*3}}(\theta^*y-A(\theta^*)+g(y))+b''(\phi^*)\right]f^*(y;\bm{x},\bm{\Psi})dy\nonumber\\
&\hspace{8em}-\left(\frac{\partial^2}{\partial\phi^{*2}}\log F^*(\mathcal{T};\bm{x},\bm{\Psi})\right)F^*(\mathcal{U};\bm{x},\bm{\Psi})\bigg]\nonumber\\
&=\frac{\lambda^*(\bm{\Psi};\bm{x})}{F(\mathcal{T};\bm{x},\bm{\Psi})}\bigg[\frac{2}{\phi^{*3}}D^*_{\phi}(\mathcal{U};\bm{\Psi},\bm{x})+\left[\frac{2}{\phi^*}b'(\phi^*)+b''(\phi^*)-\frac{\partial^2}{\partial\phi^{*2}}\log F^*(\mathcal{T};\bm{x},\bm{\Psi})\right]F^*(\mathcal{U};\bm{x},\bm{\Psi})\bigg]
\end{align}
\begin{align}
Q_{\phi\phi}(\bm{\Psi};\mathcal{R},\mathcal{T},\bm{x})
&=\sum_{m=1}^{M}\left\{\frac{\partial}{\partial\phi^*}\left\{\lambda^*(\bm{\Psi};\bm{x})\frac{F^*(\mathcal{I}_{m};\bm{x},\bm{\Psi})}{F(\mathcal{I}_{m};\bm{x},\bm{\Psi})}\frac{\partial}{\partial\phi^*} \log F^*_{\mathcal{T}}(\mathcal{I}_{m};\bm{x},\bm{\Psi})\right\}\right\}F_{\mathcal{T}}(\mathcal{I}_{m};\bm{x},\bm{\Psi}),
\end{align}
where $D^*_{\phi}(\mathcal{U};\bm{x},\bm{\Psi})$ is simply $D_{\phi}(\mathcal{U};\bm{\Psi})$ in Equation (\ref{apx:lemma:Dp}) of Lemma \ref{apx:lemma:D} evaluated at $\theta^*$, $\phi^*$ and $\bm{x}$. Second, we write
\begin{align}
\tilde{\Lambda}(\bm{\Psi};\mathcal{R},\mathcal{T},\bm{x}):=
\begin{pmatrix}
\tilde{\Lambda}_{\theta\theta}(\bm{\Psi};\mathcal{R},\mathcal{T},\bm{x}) & \tilde{\Lambda}_{\theta\phi}(\bm{\Psi};\mathcal{R},\mathcal{T},\bm{x})  \\
\tilde{\Lambda}_{\phi\theta}(\bm{\Psi};\mathcal{R},\mathcal{T},\bm{x})  & \tilde{\Lambda}_{\phi\phi}(\bm{\Psi};\mathcal{R},\mathcal{T},\bm{x})
\end{pmatrix},
\end{align}
where the four elements above are expressed as follows.
\begin{align}
\tilde{\Lambda}_{\theta\theta}(\bm{\Psi};\mathcal{R},\mathcal{T},\bm{x})
&=\int_{\mathcal{U}}W(y;\bm{x})^2\left(\frac{\partial}{\partial\bm{\beta}}\log f_{\mathcal{T}}^{*}(y;\bm{x},\bm{\Psi})\right)\left(\frac{\partial}{\partial\bm{\beta}}\log f_{\mathcal{T}}^{*}(y;\bm{x},\bm{\Psi})\right)^Tf_{\mathcal{T}}(y;\bm{x},\bm{\Psi})dy\nonumber\\
&\qquad +\sum_{m=1}^{M}\frac{(\lambda^*(\bm{\Psi};\bm{x}))^2}{F(\mathcal{I}_{m};\bm{x},\bm{\Psi})^2}\left(\frac{\partial}{\partial\bm{\beta}} F^*_{\mathcal{T}}(\mathcal{I}_{m};\bm{x},\bm{\Psi})\right)\left(\frac{\partial}{\partial\bm{\beta}} F^*_{\mathcal{T}}(\mathcal{I}_{m};\bm{x},\bm{\Psi})\right)^TF_{\mathcal{T}}(\mathcal{I}_{m};\bm{x},\bm{\Psi})\nonumber\\
&=\int_{\mathcal{U}}W(y;\bm{x})^2\left(\frac{\partial}{\partial\theta^*}\log f_{\mathcal{T}}^{*}(y;\bm{x},\bm{\Psi})\right)^2f_{\mathcal{T}}(y;\bm{x},\bm{\Psi})dy\left(\frac{\partial\theta^*}{\partial\bm{\beta}}\right)\left(\frac{\partial\theta^*}{\partial\bm{\beta}}\right)^T\nonumber\\
&\qquad +\sum_{m=1}^{M}\left(\lambda^*(\bm{\Psi};\bm{x})\frac{F^*(\mathcal{I}_{m};\bm{x},\bm{\Psi})}{F(\mathcal{I}_{m};\bm{x},\bm{\Psi})}\right)^2\left(\frac{\partial}{\partial\theta^*} \log F^*_{\mathcal{T}}(\mathcal{I}_{m};\bm{x},\bm{\Psi})\right)^2F_{\mathcal{T}}(\mathcal{I}_{m};\bm{x},\bm{\Psi})\left(\frac{\partial\theta^*}{\partial\bm{\beta}}\right)\left(\frac{\partial\theta^*}{\partial\bm{\beta}}\right)^T\nonumber\\
&:=(R_{\theta\theta}(\bm{\Psi};\mathcal{R},\mathcal{T},\bm{x})+S_{\theta\theta}(\bm{\Psi};\mathcal{R},\mathcal{T},\bm{x}))\left(\frac{\partial\theta^*}{\partial\bm{\beta}}\right)\left(\frac{\partial\theta^*}{\partial\bm{\beta}}\right)^T
:=V_{\theta\theta}(\bm{\Psi};\mathcal{R},\mathcal{T},\bm{x})\bm{x}\bm{x}^T,
\end{align}
with
\begin{align} \label{apx:asymp:v_tt}
V_{\theta\theta}(\bm{\Psi};\mathcal{R},\mathcal{T},\bm{x})=\frac{\phi^{*2}}{\phi^2}\left(\xi'(\bm{x}^T\bm{\beta})\right)^2\left(R_{\theta\theta}(\bm{\Psi};\mathcal{R},\mathcal{T},\bm{x})+S_{\theta\theta}(\bm{\Psi};\mathcal{R},\mathcal{T},\bm{x})\right),
\end{align}
\begin{align}
R_{\theta\theta}(\bm{\Psi};\mathcal{R},\mathcal{T},\bm{x})
&=\int_{\mathcal{U}}W(y;\bm{x})^2\left(\frac{\partial}{\partial\theta^*}\log f_{\mathcal{T}}^{*}(y;\bm{x},\bm{\Psi})\right)^2f_{\mathcal{T}}(y;\bm{x},\bm{\Psi})dy\nonumber\\
&=\frac{\lambda^{**}(\bm{\Psi};\bm{x})}{F(\mathcal{T};\bm{x},\bm{\Psi})}\int_{\mathcal{U}}\left\{\frac{1}{\phi^{*}}(y-A'(\theta^*))-\frac{\partial}{\partial\theta^*}\log F^*(\mathcal{T};\bm{x},\bm{\Psi})\right\}^2f^{**}(y;\bm{x},\bm{\Psi})dy\nonumber\\
&=\frac{\lambda^{**}(\bm{\Psi};\bm{x})}{F(\mathcal{T};\bm{x},\bm{\Psi})}\frac{1}{\phi^{*2}}\int_{\mathcal{U}}\left[(y-A'(\theta^{**}))+U_{\theta\theta}(\bm{\Psi};\mathcal{R},\mathcal{T},\bm{x})\right]^2f^{**}(y;\bm{x},\bm{\Psi})dy\nonumber\\
&=\frac{\lambda^{**}(\bm{\Psi};\bm{x})}{F(\mathcal{T};\bm{x},\bm{\Psi})}\frac{1}{\phi^{*2}}
\left[D^{**}_{\theta\theta}(\mathcal{U};\bm{x},\bm{\Psi})+2U_{\theta\theta}(\bm{\Psi};\mathcal{R},\mathcal{T},\bm{x})D^{**}_{\theta}(\mathcal{U};\bm{x},\bm{\Psi})+U_{\theta\theta}(\bm{\Psi};\mathcal{R},\mathcal{T},\bm{x})^2F^{**}(\mathcal{U};\bm{x},\bm{\Psi})\right]
\end{align}
such that $D^{**}_{\theta\theta}(\mathcal{U};\bm{x},\bm{\Psi})$ and $D^{**}_{\theta}(\mathcal{U};\bm{x},\bm{\Psi})$ are $D_{\theta\theta}(\mathcal{U};\bm{\Psi})$ and $D_{\theta}(\mathcal{U};\bm{\Psi})$ in Equations (\ref{apx:lemma:Dtt}) and (\ref{apx:lemma:Dt}) of Lemma \ref{apx:lemma:D} evaluated at $\theta^{**}$ and $\phi^{**}$ with covariates $\bm{x}$, $U_{\theta\theta}(\bm{\Psi};\mathcal{R},\mathcal{T},\bm{x})$ above is given by
\begin{align}
U_{\theta\theta}(\bm{\Psi};\mathcal{R},\mathcal{T},\bm{x})
:=-\left(A'(\theta^*)-A'(\theta^{**})\right)-\phi^*\frac{\partial}{\partial\theta^*}\log F^*(\mathcal{T};\bm{x},\bm{\Psi}),
\end{align}
and $S_{\theta\theta}(\bm{\Psi};\mathcal{R},\mathcal{T},\bm{x})$ in Equation (\ref{apx:asymp:v_tt}) is given by
\begin{align}
S_{\theta\theta}(\bm{\Psi};\mathcal{R},\mathcal{T},\bm{x})
&=\sum_{m=1}^{M}\left(\lambda^*(\bm{\Psi};\bm{x})\frac{F^*(\mathcal{I}_{m};\bm{x},\bm{\Psi})}{F(\mathcal{I}_{m};\bm{x},\bm{\Psi})}\right)^2\left(\frac{\partial}{\partial\theta^*} \log F^*_{\mathcal{T}}(\mathcal{I}_{m};\bm{x},\bm{\Psi})\right)^2F_{\mathcal{T}}(\mathcal{I}_{m};\bm{x},\bm{\Psi}).
\end{align}

Similarly, the other elements are expressed as
\begin{align}
\tilde{\Lambda}_{\theta\phi}(\bm{\Psi};\mathcal{R},\mathcal{T},\bm{x})
&=\left(R_{\theta\theta}(\bm{\Psi};\mathcal{R},\mathcal{T},\bm{x})+S_{\theta\theta}(\bm{\Psi};\mathcal{R},\mathcal{T},\bm{x})\right)\left(\frac{\partial\theta^*}{\partial\phi}\right)\left(\frac{\partial\theta^*}{\partial\bm{\beta}}\right)\nonumber\\
&\quad + \left(R_{\theta\phi}(\bm{\Psi};\mathcal{R},\mathcal{T},\bm{x})+S_{\theta\phi}(\bm{\Psi};\mathcal{R},\mathcal{T},\bm{x})\right)\left(\frac{\partial\phi^*}{\partial\phi}\right)\left(\frac{\partial\theta^*}{\partial\bm{\beta}}\right)\nonumber\\
&:=V_{\theta\phi}(\bm{\Psi};\mathcal{R},\mathcal{T},\bm{x})\bm{x},
\end{align}
with
\begin{align}
V_{\theta\phi}(\bm{\Psi};\mathcal{R},\mathcal{T},\bm{x})
&=\frac{\phi^{*3}}{\phi^3}\xi'(\bm{x}^T\bm{\beta})\Bigg[\left(\left(c-\frac{1}{\tilde{\phi}}\right)\theta+\frac{\tilde{\theta}}{\tilde{\phi}}\right)\left(R_{\theta\theta}(\bm{\Psi};\mathcal{R},\mathcal{T},\bm{x})+S_{\theta\theta}(\bm{\Psi};\mathcal{R},\mathcal{T},\bm{x})\right)\nonumber\\
&\hspace{15em}+\left(R_{\theta\phi}(\bm{\Psi};\mathcal{R},\mathcal{T},\bm{x})+S_{\theta\phi}(\bm{\Psi};\mathcal{R},\mathcal{T},\bm{x})\right)\Bigg],
\end{align}
\begin{align}
R_{\theta\phi}(\bm{\Psi};\mathcal{R},\mathcal{T},\bm{x})
&=\int_{\mathcal{U}}W(y;\bm{x})^2\left(\frac{\partial}{\partial\theta^*}\log f_{\mathcal{T}}^{*}(y;\bm{x},\bm{\Psi})\right)\left(\frac{\partial}{\partial\phi^*}\log f_{\mathcal{T}}^{*}(y;\bm{x},\bm{\Psi})\right)f_{\mathcal{T}}(y;\bm{x},\bm{\Psi})dy\nonumber\\
&=-\frac{\lambda^{**}(\bm{\Psi};\bm{x})}{F(\mathcal{T};\bm{x},\bm{\Psi})}\frac{1}{\phi^{*3}}\int_{\mathcal{U}}\left[(y-A'(\theta^{**}))+U_{\theta\theta}(\bm{\Psi};\mathcal{R},\mathcal{T},\bm{x})\right]\nonumber\\
&\hspace{10em}\times[(\theta^{**}y-A(\theta^{**})+g(y)-\phi^{**2}b'(\phi^{**}))+(\theta^*-\theta^{**})(y-A'(\theta^{**}))\nonumber\\
&\hspace{12em}+U_{\phi\phi}(\bm{\Psi};\mathcal{R},\mathcal{T},\bm{x})] f^{**}(y;\bm{x},\bm{\Psi})dy\nonumber\\
&=-\frac{\lambda^{**}(\bm{\Psi};\bm{x})}{F(\mathcal{T};\bm{x},\bm{\Psi})}\frac{1}{\phi^{*3}}
\Big\{D^{**}_{\theta\phi}(\mathcal{U};\bm{x},\bm{\Psi})
+U_{\theta\theta}(\bm{\Psi};\mathcal{R},\mathcal{T},\bm{x})D^{**}_{\phi}(\mathcal{U};\bm{x},\bm{\Psi})
+(\theta^*-\theta^{**})D^{**}_{\theta\theta}(\mathcal{U};\bm{x},\bm{\Psi})\nonumber\\
&\hspace{10em}+[(\theta^*-\theta^{**})U_{\theta\theta}(\bm{\Psi};\mathcal{R},\mathcal{T},\bm{x})+U_{\phi\phi}(\bm{\Psi};\mathcal{R},\mathcal{T},\bm{x})]D^{**}_{\theta}(\mathcal{U};\bm{x},\bm{\Psi})\nonumber\\
&\hspace{10em}+U_{\theta\theta}(\bm{\Psi};\mathcal{R},\mathcal{T},\bm{x})U_{\phi\phi}(\bm{\Psi};\mathcal{R},\mathcal{T},\bm{x})F^{**}(\mathcal{U};\bm{x},\bm{\Psi})\Big\},
\end{align}
\begin{align}
S_{\theta\phi}(\bm{\Psi};\mathcal{R},\mathcal{T},\bm{x})
&=\sum_{m=1}^{M}\left(\lambda^*(\bm{\Psi};\bm{x})\frac{F^*(\mathcal{I}_{m};\bm{x},\bm{\Psi})}{F(\mathcal{I}_{m};\bm{x},\bm{\Psi})}\right)^2\left(\frac{\partial}{\partial\theta^*} \log F^*_{\mathcal{T}}(\mathcal{I}_{m};\bm{x},\bm{\Psi})\right)\left(\frac{\partial}{\partial\phi^*} \log F^*_{\mathcal{T}}(\mathcal{I}_{m};\bm{x},\bm{\Psi})\right)F_{\mathcal{T}}(\mathcal{I}_{m};\bm{x},\bm{\Psi}),
\end{align}
where $U_{\phi\phi}(\bm{\Psi};\mathcal{R},\mathcal{T},\bm{x})$ is evaluated as
\begin{align}
U_{\phi\phi}(\bm{\Psi};\mathcal{R},\mathcal{T},\bm{x})
=(\theta^*-\theta^{**})A'(\theta^{**})-(A(\theta^*)-A(\theta^{**}))-(\phi^{*2}b'(\phi^*)-\phi^{**2}b'(\phi^{**}))+\phi^{*2}\frac{\partial}{\partial\phi^*}\log F^*(\mathcal{T};\bm{x},\bm{\Psi}),
\end{align}
and $D^{**}_{\theta\phi}(\mathcal{U};\bm{x},\bm{\Psi})$ and $D^{**}_{\phi}(\mathcal{U};\bm{x},\bm{\Psi})$ are $D_{\theta\phi}(\mathcal{U};\bm{\Psi})$ and $D_{\phi}(\mathcal{U};\bm{\Psi})$ in Equations (\ref{apx:lemma:Dtp}) and (\ref{apx:lemma:Dp}) of Lemma \ref{apx:lemma:D} evaluated at $\theta^{**}$ and $\phi^{**}$ with covariates $\bm{x}$.

\begin{align}
\tilde{\Lambda}_{\phi\theta}(\bm{\Psi};\mathcal{R},\mathcal{T},\bm{x})=\tilde{\Lambda}_{\theta\phi}(\bm{\Psi};\mathcal{R},\mathcal{T},\bm{x})^T.
\end{align}

\begin{align}
\tilde{\Lambda}_{\phi\phi}(\bm{\Psi};\mathcal{R},\mathcal{T},\bm{x})
&:=V_{\phi\phi}(\bm{\Psi};\mathcal{R},\mathcal{T},\bm{x}),
\end{align}
with
\begin{align}
V_{\phi\phi}(\bm{\Psi};\mathcal{R},\mathcal{T},\bm{x})
&=\frac{\phi^{*4}}{\phi^4}\Bigg[\left(\left(c-\frac{1}{\tilde{\phi}}\right)\theta+\frac{\tilde{\theta}}{\tilde{\phi}}\right)^2\left(R_{\theta\theta}(\bm{\Psi};\mathcal{R},\mathcal{T},\bm{x})+S_{\theta\theta}(\bm{\Psi};\mathcal{R},\mathcal{T},\bm{x})\right)\nonumber\\
&\hspace{3em}+2\left(\left(c-\frac{1}{\tilde{\phi}}\right)\theta+\frac{\tilde{\theta}}{\tilde{\phi}}\right)\left(R_{\theta\phi}(\bm{\Psi};\mathcal{R},\mathcal{T},\bm{x})+S_{\theta\phi}(\bm{\Psi};\mathcal{R},\mathcal{T},\bm{x})\right)\nonumber\\
&\hspace{11em}+\left(R_{\phi\phi}(\bm{\Psi};\mathcal{R},\mathcal{T},\bm{x})+S_{\phi\phi}(\bm{\Psi};\mathcal{R},\mathcal{T},\bm{x})\right)\Bigg],
\end{align}
\begin{align}
R_{\phi\phi}(\bm{\Psi};\mathcal{R},\mathcal{T},\bm{x})
&=\int_{\mathcal{U}}W(y;\bm{x})^2\left(\frac{\partial}{\partial\phi^*}\log f_{\mathcal{T}}^{*}(y;\bm{x},\bm{\Psi})\right)^2f_{\mathcal{T}}(y;\bm{x},\bm{\Psi})dy\nonumber\\
&=\frac{\lambda^{**}(\bm{\Psi};\bm{x})}{F(\mathcal{T};\bm{x},\bm{\Psi})}\frac{1}{\phi^{*4}}\int_{\mathcal{U}}\Big[(\theta^{**}y-A(\theta^{**})+g(y)-\phi^{**2}b'(\phi^{**}))+(\theta^*-\theta^{**})(y-A'(\theta^{**}))\nonumber\\
&\hspace{12em}+U_{\phi\phi}(\bm{\Psi};\mathcal{R},\mathcal{T},\bm{x})\Big]^2 f^{**}(y;\bm{x},\bm{\Psi})dy\nonumber\\
&=\frac{\lambda^{**}(\bm{\Psi};\bm{x})}{F(\mathcal{T};\bm{x},\bm{\Psi})}\frac{1}{\phi^{*4}}
\Big\{D^{**}_{\phi\phi}(\mathcal{U};\bm{x},\bm{\Psi})
+(\theta^*-\theta^{**})^2D^{**}_{\theta\theta}(\mathcal{U};\bm{x},\bm{\Psi})
+2(\theta^*-\theta^{**})D^{**}_{\theta\phi}(\mathcal{U};\bm{x},\bm{\Psi})\nonumber\\
&\hspace{8em}+2U_{\phi\phi}(\bm{\Psi};\mathcal{R},\mathcal{T},\bm{x})D^{**}_{\phi}(\mathcal{U};\bm{x},\bm{\Psi})
+2(\theta^*-\theta^{**})U_{\phi\phi}(\bm{\Psi};\mathcal{R},\mathcal{T},\bm{x})D^{**}_{\theta}(\mathcal{U};\bm{x},\bm{\Psi})\nonumber\\
&\hspace{8em}+U_{\phi\phi}(\bm{\Psi};\mathcal{R},\mathcal{T},\bm{x})^2F^{**}(\mathcal{U};\bm{x},\bm{\Psi})\Big\},
\end{align}
\begin{align} \label{apx:asymp:s_pp}
S_{\phi\phi}(\bm{\Psi};\mathcal{R},\mathcal{T},\bm{x})
&=\sum_{m=1}^{M}\left(\lambda^*(\bm{\Psi};\bm{x})\frac{F^*(\mathcal{I}_{m};\bm{x},\bm{\Psi})}{F(\mathcal{I}_{m};\bm{x},\bm{\Psi})}\right)^2\left(\frac{\partial}{\partial\phi^*} \log F^*_{\mathcal{T}}(\mathcal{I}_{m};\bm{x},\bm{\Psi})\right)^2F_{\mathcal{T}}(\mathcal{I}_{m};\bm{x},\bm{\Psi}),
\end{align}
where $D^{**}_{\phi\phi}(\mathcal{U};\bm{x},\bm{\Psi})$ is $D_{\phi\phi}(\mathcal{U};\bm{\Psi})$ in Equation (\ref{apx:lemma:Dpp}) of Lemma \ref{apx:lemma:D} evaluated at $\theta^{**}$ and $\phi^{**}$ with covariates $\bm{x}$.

\subsection{Proof of Theorem \ref{thm:asymp}} \label{apx:thm:asymp}
Note that Theorem \ref{thm:asymp} is a special case of Theorem \ref{thm:censtrun:asymp} with $M=0$, $\mathcal{U}=\mathcal{T}=\mathbb{R}$ and $\mathcal{D}=Y$. As a result, the terms defined in the previous subsection related to censored data $Q_{\theta\theta}(\bm{\Psi};\mathcal{R},\mathcal{T},\bm{x})$, $Q_{\theta\phi}(\bm{\Psi};\mathcal{R},\mathcal{T},\bm{x})$, $Q_{\phi\phi}(\bm{\Psi};\mathcal{R},\mathcal{T},\bm{x})$, $S_{\theta\theta}(\bm{\Psi};\mathcal{R},\mathcal{T},\bm{x})$, $S_{\theta\phi}(\bm{\Psi};\mathcal{R},\mathcal{T},\bm{x})$, $S_{\phi\phi}(\bm{\Psi};\mathcal{R},\mathcal{T},\bm{x})$ all equal to zero. Moreover, the distribution functions $F(\mathcal{U};\bm{x},\bm{\Psi})$, $F(\mathcal{T};\bm{x},\bm{\Psi})$, $F^*(\mathcal{U};\bm{x},\bm{\Psi})$, $F^*(\mathcal{T};\bm{x},\bm{\Psi})$,
$F^{**}(\mathcal{U};\bm{x},\bm{\Psi})$, $F^{**}(\mathcal{T};\bm{x},\bm{\Psi})$ all equal to $1$, and hence their derivatives w.r.t. any parameters are zero. Plugging these numbers to Equations (\ref{apx:lemma:Dt}) to (\ref{apx:asymp:s_pp}), we easily obtain the results stated in the theorem.

\subsection{Proof of Theorem \ref{thm:censtrun:diag}} \label{apx:thm:censtrun:diag}
Define a meta extended SWLE score function $\mathcal{S}_n^{\text{meta}}(\bm{\Psi};\mathcal{D},\bm{X})$ as
\begin{align} \label{apx:diag:swle}
\mathcal{S}_n^{\text{meta}}(\bm{\Psi}^{\text{meta}};\mathcal{D},\bm{X})=
\begin{pmatrix}
\mathcal{S}_n^{(1)}(\bm{\Psi}^{(1)};\mathcal{D},\bm{X})\\
\vdots\\
\mathcal{S}_n^{(K)}(\bm{\Psi}^{(K)};\mathcal{D},\bm{X})
\end{pmatrix},
\end{align}
where $\mathcal{S}_n^{(k)}(\bm{\Psi};\mathcal{D},\bm{X})$ is defined as the extended SWLE score function in Equation (\ref{eq:censtrun:swle}) evaluated at weight function hyperparameters $\tilde{\bm{\Psi}}^{(k)}$, for $k=1,\ldots,K$. Also, $\bm{\Psi}^{\text{meta}}=(\bm{\Psi}^{(1)},\ldots,\bm{\Psi}^{(K)})$ is a collection of $K$ sets of parameters for the $K$ individual SWLE score functions. Then, it is obvious that $\hat{\bm{\Psi}}_n^{\text{meta}}$ is the solution of $\mathcal{S}_n^{\text{meta}}(\hat{\bm{\Psi}}^{\text{meta}};\mathcal{D},\bm{X})=\bm{0}$. Correspondingly, define the meta individual score function $\mathcal{S}^{\text{meta}}(\bm{\Psi};\mathcal{D},\bm{x})$ as
\begin{align} \label{apx:diag:swle_ind}
\mathcal{S}^{\text{meta}}(\bm{\Psi}^{\text{meta}};\mathcal{D},\bm{x})=
\begin{pmatrix}
\mathcal{S}^{(1)}(\bm{\Psi}^{(1)};\mathcal{D},\bm{x})\\
\vdots\\
\mathcal{S}^{(K)}(\bm{\Psi}^{(K)};\mathcal{D},\bm{x})
\end{pmatrix},
\end{align}
with $\mathcal{S}^{(k)}(\bm{\Psi};\mathcal{D},\bm{x})$ being the individual score function evaluated as hyperparameters $\tilde{\bm{\Psi}}^{(k)}$. Again, with a slight abuse of notations, $\mathcal{D}$ in Equation (\ref{apx:diag:swle_ind}) is simply $\mathcal{D}_i$ with subscript $i$ dropped, as opposed to Equation (\ref{apx:diag:swle}) where $\mathcal{D}$ represent observed information across all losses. Applying Theorems 5.41 and 5.42 of \cite{van2000asymptotic}, consistency is resulted from Equation (\ref{apx:asymp:consis}), which shows that $E_{\mathcal{D},\bm{x}}[\mathcal{S}^{(k)}(\bm{\Psi}_0;\mathcal{D},\bm{x})]=\bm{0}$ for every $k=1,\ldots ,K$, and hence $E_{\mathcal{D},\bm{x}}[\mathcal{S}^{\text{meta}}(\bm{\Psi}_0^{\text{meta}};\mathcal{D},\bm{x})]=\bm{0}$. 

Before proving asymptotic normality, we define the following notations. First, denote $W^{(k)}(y;\bm{x})$ as the weight function in Equation (\ref{eq:thm:weight}) evaluated at hyperparameters $\tilde{\bm{\Psi}}^{(k)}$. Then, denote $\phi^{(k)}=(\phi^{-1}+(\tilde{\phi}^{(k)})^{-1}-c)^{-1}$, $\theta^{(k)}=(\theta/\phi+\tilde{\theta}^{(k)}/\tilde{\phi}^{(k)})\phi^{(k)}$, $\phi^{(k,k')}=(\phi^{-1}+(\tilde{\phi}^{(k)})^{-1}+(\tilde{\phi}^{(k')})^{-1}-2c)^{-1}$ and $\theta^{(k,k')}=(\theta/\phi+\tilde{\theta}^{(k)}/\tilde{\phi}^{(k)}+\tilde{\theta}^{(k')}/\tilde{\phi}^{(k')})\phi^{(k,k')}$ with $\tilde{\theta}^{(k)}=\xi(\bm{x}^T\tilde{\bm{\beta}}^{(k)})$, we further define corresponding bias adjustment terms and transformed density functions
\begin{align}
\lambda^{(k)}(\bm{\Psi};\bm{x})=\exp\left\{\frac{A(\theta^{(k)})}{\phi^{(k)}}-b(\phi^{(k)})-\frac{A(\theta)}{\phi}+b(\phi)\right\},
\end{align}
\begin{align}
\lambda^{(k,k')}(\bm{\Psi};\bm{x})=\exp\left\{\frac{A(\theta^{(k,k')})}{\phi^{(k,k')}}-b(\phi^{(k,k')})-\frac{A(\theta)}{\phi}+b(\phi)\right\},
\end{align}
\begin{align}
f^{(k)}(y;\bm{x},\bm{\Psi})=\exp\left\{\frac{\theta^{(k)}y-A(\theta^{(k)})}{\phi^{(k)}}+C(y,\phi^{(k)})\right\},
\end{align}
\begin{align}
f^{(k,k')}(y;\bm{x},\bm{\Psi})=\exp\left\{\frac{\theta^{(k,k')}y-A(\theta^{(k,k')})}{\phi^{(k,k')}}+C(y,\phi^{(k,k')})\right\}.
\end{align}
After that, define $f^{(k)}_{\mathcal{T}}(y;\bm{x},\bm{\Psi})$ and $f^{(k,k')}_{\mathcal{T}}(y;\bm{x},\bm{\Psi})$ as the truncated density functions of $f^{(k)}(y;\bm{x},\bm{\Psi})$ and $f^{(k,k')}(y;\bm{x},\bm{\Psi})$ respectively in the same way as Equation (\ref{eq:censtrun:density}). The corresponding distribution functions is given by $F^{(k)}_{\mathcal{T}}(\cdot;\bm{x},\bm{\Psi})$ and $F^{(k,k')}_{\mathcal{T}}(\cdot;\bm{x},\bm{\Psi})$. Finally, we denote $D^{(k)}_{\theta}(\mathcal{U};\bm{x},\bm{\Psi})$, $D^{(k)}_{\phi}(\mathcal{U};\bm{x},\bm{\Psi})$, $D^{(k)}_{\theta\theta}(\mathcal{U};\bm{x},\bm{\Psi})$, $D^{(k)}_{\theta\phi}(\mathcal{U};\bm{x},\bm{\Psi})$ and $D^{(k)}_{\phi\phi}(\mathcal{U};\bm{x},\bm{\Psi})$ as $D_{\theta}(\mathcal{U};\bm{\Psi})$, $D_{\phi}(\mathcal{U};\bm{\Psi})$, $D_{\theta\theta}(\mathcal{U};\bm{\Psi})$, $D_{\theta\phi}(\mathcal{U};\bm{\Psi})$ and $D_{\phi\phi}(\mathcal{U};\bm{\Psi})$ in Equations (\ref{apx:lemma:Dt}) to (\ref{apx:lemma:Dtp}) evaluated at $\theta^{(k)}$ and $\phi^{(k)}$ with covariates $\bm{x}$, and similarly $D^{(k,k')}_{\theta}(\mathcal{U};\bm{x},\bm{\Psi})$, $D^{(k,k')}_{\phi}(\mathcal{U};\bm{x},\bm{\Psi})$, $D^{(k,k')}_{\theta\theta}(\mathcal{U};\bm{x},\bm{\Psi})$, $D^{(k,k')}_{\theta\phi}(\mathcal{U};\bm{x},\bm{\Psi})$ and $D^{(k,k')}_{\phi\phi}(\mathcal{U};\bm{x},\bm{\Psi})$ as those evaluated at $\theta^{(k,k')}$ and $\phi^{(k,k')}$.

Theorems 5.41 of \cite{van2000asymptotic} shows that $\sqrt{n}\left(\hat{\bm{\Psi}}_n^{\text{meta}}-\bm{\Psi}_0^{\text{meta}}\right)\overset{d}{\rightarrow}\mathcal{N}(\bm{0},\bm{\Sigma}^{\text{meta}})$, with $\bm{\Sigma}^{\text{meta}}=\left({[\Gamma^{\text{meta}}]}^{-1}\right)\Lambda^{\text{meta}}\left({[\Gamma^{\text{meta}}]}^{-1}\right)^T$, where
\begin{align}
\Gamma^{\text{meta}}=
\begin{pmatrix}
\Gamma^{(1,1)} & \Gamma^{(1,2)} & \dots & \Gamma^{(1,K)} \\
\Gamma^{(2,1)} & \Gamma^{(2,2)} & \dots & \Gamma^{(2,K)} \\
\vdots & \vdots & \ddots & \vdots \\
\Gamma^{(K,1)} & \Gamma^{(K,2)} & \dots & \Gamma^{(K,K)}
\end{pmatrix}
\end{align}
and
\begin{align}
\Lambda^{\text{meta}}=
\begin{pmatrix}
\Lambda^{(1,1)} & \Lambda^{(1,2)} & \dots & \Lambda^{(1,K)} \\
\Lambda^{(2,1)} & \Lambda^{(2,2)} & \dots & \Lambda^{(2,K)} \\
\vdots & \vdots & \ddots & \vdots \\
\Lambda^{(K,1)} & \Lambda^{(K,2)} & \dots & \Lambda^{(K,K)}
\end{pmatrix},
\end{align}
with $\Gamma^{(k,k')}$ and $\Lambda^{(k,k')}$ being $(P+1)\times (P+1)$ matrices for $k,k'=1,\ldots,K$ given by
\begin{align}
\Gamma^{(k,k')}:=\Gamma^{(k,k')}(\bm{\Psi}_0)=E_{\mathcal{D},\bm{x}}\left[\frac{\partial}{\partial\bm{\Psi}^{(k)}}\mathcal{S}^{(k')}(\bm{\Psi}^{(k')};\mathcal{D},\bm{x})^T\right]\Bigg|_{\bm{\Psi}^{(k)},\bm{\Psi}^{(k')}=\bm{\Psi}_0},
\end{align}
\begin{align}
\Lambda^{(k,k')}:=\Gamma^{(k,k')}(\bm{\Psi}_0)=E_{\mathcal{D},\bm{x}}\left[\mathcal{S}^{(k)}(\bm{\Psi}^{(k)};\mathcal{D},\bm{x})\mathcal{S}^{(k')}(\bm{\Psi}^{(k')};\mathcal{D},\bm{x})^T\right]\bigg|_{\bm{\Psi}^{(k)},\bm{\Psi}^{(k')}=\bm{\Psi}_0}.
\end{align}

Obviously, $\Gamma^{(k,k')}=0$ for $k\neq k'$. Denote $\Gamma^{(k)}=\Gamma^{(k,k)}$, we have $\Gamma^{\text{meta}}=\text{diag}(\Gamma^{(1)},\ldots,\Gamma^{(K)})$. Evaluating the matrix inverses and products $\bm{\Sigma}^{\text{meta}}=\left({[\Gamma^{\text{meta}}]}^{-1}\right)\Lambda^{\text{meta}}\left({[\Gamma^{\text{meta}}]}^{-1}\right)^T$, we obtain the form of Equation (\ref{eq:diag:sigma}) for $\bm{\Sigma}^{\text{meta}}$. We evaluate each term as follows. First, we have
\begin{align}
\Gamma^{(k)}:=\Gamma^{(k)}(\bm{\Psi}_0)
:=E_{\mathcal{D},\bm{x}}\left[\tilde{\Gamma}^{(k)}(\bm{\Psi}_0;\mathcal{R},\mathcal{T},\bm{x})\right],
\end{align}
where $\tilde{\Gamma}^{(k)}(\bm{\Psi}_0;\mathcal{R},\mathcal{T},\bm{x})$ is simply $\tilde{\Gamma}(\bm{\Psi}_0;\mathcal{R},\mathcal{T},\bm{x})$ in Equation (\ref{eq:apx:gamma}) with weight function hyperparameters selected as $\tilde{\bm{\Psi}}^{(k)}$. Then, we have
\begin{align}
\Lambda^{(k,k')}:=\Lambda^{(k,k')}(\bm{\Psi}_0)
:=E_{\mathcal{D},\bm{x}}\left[\tilde{\Lambda}^{(k,k')}(\bm{\Psi}_0;\mathcal{R},\mathcal{T},\bm{x})\right],
\end{align}
and write
\begin{align}
\tilde{\Lambda}^{(k,k')}(\bm{\Psi};\mathcal{R},\mathcal{T},\bm{x}):=
\begin{pmatrix}
\tilde{\Lambda}^{(k,k')}_{\theta\theta}(\bm{\Psi};\mathcal{R},\mathcal{T},\bm{x}) & \tilde{\Lambda}^{(k,k')}_{\theta\phi}(\bm{\Psi};\mathcal{R},\mathcal{T},\bm{x})  \\
\tilde{\Lambda}^{(k,k')}_{\phi\theta}(\bm{\Psi};\mathcal{R},\mathcal{T},\bm{x})  & \tilde{\Lambda}^{(k,k')}_{\phi\phi}(\bm{\Psi};\mathcal{R},\mathcal{T},\bm{x})
\end{pmatrix},
\end{align}
the four elements above are expressed as follows. First, we derive $\tilde{\Lambda}^{(k,k')}_{\theta\theta}(\bm{\Psi};\mathcal{R},\mathcal{T},\bm{x})$ as

\begin{align} \label{apx:diag:lambda_tt}
\tilde{\Lambda}^{(k,k')}_{\theta\theta}(\bm{\Psi};\mathcal{R},\mathcal{T},\bm{x})
&=\int_{\mathcal{U}}W^{(k)}(y;\bm{x})W^{(k')}(y;\bm{x})\left(\frac{\partial}{\partial\bm{\beta}}\log f_{\mathcal{T}}^{(k)}(y;\bm{x},\bm{\Psi})\right)\left(\frac{\partial}{\partial\bm{\beta}}\log f_{\mathcal{T}}^{(k')}(y;\bm{x},\bm{\Psi})\right)^Tf_{\mathcal{T}}(y;\bm{x},\bm{\Psi})dy\nonumber\\
&\qquad +\sum_{m=1}^{M}\lambda^{(k)}(\bm{\Psi};\bm{x})\lambda^{(k')}(\bm{\Psi};\bm{x})\frac{F^{(k)}(\mathcal{I}_{m};\bm{x},\bm{\Psi})F^{(k')}(\mathcal{I}_{m};\bm{x},\bm{\Psi})}{F(\mathcal{I}_{m};\bm{x},\bm{\Psi})^2}\left(\frac{\partial}{\partial\bm{\beta}} \log F^{(k)}_{\mathcal{T}}(\mathcal{I}_{m};\bm{x},\bm{\Psi})\right)\nonumber\\
&\hspace{10em}\times\left(\frac{\partial}{\partial\bm{\beta}} \log F^{(k')}_{\mathcal{T}}(\mathcal{I}_{m};\bm{x},\bm{\Psi})\right)^TF_{\mathcal{T}}(\mathcal{I}_{m};\bm{x},\bm{\Psi})\nonumber\\
&=\int_{\mathcal{U}}W^{(k)}(y;\bm{x})W^{(k')}(y;\bm{x})\left(\frac{\partial}{\partial\theta^{(k)}}\log f_{\mathcal{T}}^{(k)}(y;\bm{x},\bm{\Psi})\right)\left(\frac{\partial}{\partial\theta^{(k')}}\log f_{\mathcal{T}}^{(k')}(y;\bm{x},\bm{\Psi})\right)\nonumber\\
&\hspace{10em}\times f_{\mathcal{T}}(y;\bm{x},\bm{\Psi})dy\left(\frac{\partial\theta^{(k)}}{\partial\bm{\beta}}\right)\left(\frac{\partial\theta^{(k')}}{\partial\bm{\beta}}\right)^T\nonumber\\
&\qquad +\sum_{m=1}^{M}\lambda^{(k)}(\bm{\Psi};\bm{x})\lambda^{(k')}(\bm{\Psi};\bm{x})\frac{F^{(k)}(\mathcal{I}_{m};\bm{x},\bm{\Psi})F^{(k')}(\mathcal{I}_{m};\bm{x},\bm{\Psi})}{F(\mathcal{I}_{m};\bm{x},\bm{\Psi})^2}\left(\frac{\partial}{\partial\theta^{(k)}} \log F^{(k)}_{\mathcal{T}}(\mathcal{I}_{m};\bm{x},\bm{\Psi})\right)\nonumber\\
&\hspace{10em}\times \left(\frac{\partial}{\partial\theta^{(k')}} \log F^{(k')}_{\mathcal{T}}(\mathcal{I}_{m};\bm{x},\bm{\Psi})\right) \times F_{\mathcal{T}}(\mathcal{I}_{m};\bm{x},\bm{\Psi})\left(\frac{\partial\theta^{(k)}}{\partial\bm{\beta}}\right)\left(\frac{\partial\theta^{(k')}}{\partial\bm{\beta}}\right)^T\nonumber\\
&:=(R^{(k,k')}_{\theta\theta}(\bm{\Psi};\mathcal{R},\mathcal{T},\bm{x})+S^{(k,k')}_{\theta\theta}(\bm{\Psi};\mathcal{R},\mathcal{T},\bm{x}))\left(\frac{\partial\theta^{(k)}}{\partial\bm{\beta}}\right)\left(\frac{\partial\theta^{(k')}}{\partial\bm{\beta}}\right)^T
:=V^{(k,k')}_{\theta\theta}(\bm{\Psi};\mathcal{R},\mathcal{T},\bm{x})\bm{x}\bm{x}^T,
\end{align}
with
\begin{align} \label{apx:diag:v_tt}
V_{\theta\theta}^{(k,k')}(\bm{\Psi};\mathcal{R},\mathcal{T},\bm{x})=\frac{\phi^{(k)}\phi^{(k')}}{\phi^2}\left(\xi'(\bm{x}^T\bm{\beta})\right)^2\left(R^{(k,k')}_{\theta\theta}(\bm{\Psi};\mathcal{R},\mathcal{T},\bm{x})+S^{(k,k')}_{\theta\theta}(\bm{\Psi};\mathcal{R},\mathcal{T},\bm{x})\right),
\end{align}
\begin{align}
R^{(k,k')}_{\theta\theta}(\bm{\Psi};\mathcal{R},\mathcal{T},\bm{x})
&=\int_{\mathcal{U}}W^{(k)}(y;\bm{x})W^{(k')}(y;\bm{x})\left(\frac{\partial}{\partial\theta^{(k)}}\log f_{\mathcal{T}}^{(k)}(y;\bm{x},\bm{\Psi})\right)\left(\frac{\partial}{\partial\theta^{(k')}}\log f_{\mathcal{T}}^{(k')}(y;\bm{x},\bm{\Psi})\right)f_{\mathcal{T}}(y;\bm{x},\bm{\Psi})dy\nonumber\\
&=\frac{\lambda^{(k,k')}(\bm{\Psi};\bm{x})}{F(\mathcal{T};\bm{x},\bm{\Psi})}\int_{\mathcal{U}}\left\{\frac{1}{\phi^{(k)}}(y-A'(\theta^{(k)}))-\frac{\partial}{\partial\theta^{(k)}}\log F^{(k)}(\mathcal{T};\bm{x},\bm{\Psi})\right\}\nonumber\\
&\hspace{10em}\times \left\{\frac{1}{\phi^{(k')}}(y-A'(\theta^{(k')}))-\frac{\partial}{\partial\theta^{(k')}}\log F^{(k')}(\mathcal{T};\bm{x},\bm{\Psi})\right\}f^{(k,k')}(y;\bm{x},\bm{\Psi})dy\nonumber\\
&=\frac{\lambda^{(k,k')}(\bm{\Psi};\bm{x})}{F(\mathcal{T};\bm{x},\bm{\Psi})}\frac{1}{\phi^{(k)}\phi^{(k')}}\int_{\mathcal{U}}\left[(y-A'(\theta^{(k,k')}))+U^{(k,k')}_{\theta\theta}(\bm{\Psi};\mathcal{R},\mathcal{T},\bm{x})\right]\nonumber\\
&\hspace{12em}\times \left[(y-A'(\theta^{(k,k')}))+U^{(k',k)}_{\theta\theta}(\bm{\Psi};\mathcal{R},\mathcal{T},\bm{x})\right]f^{(k,k')}(y;\bm{x},\bm{\Psi})dy\nonumber\\
&=\frac{\lambda^{(k,k')}(\bm{\Psi};\bm{x})}{F(\mathcal{T};\bm{x},\bm{\Psi})}\frac{1}{\phi^{(k)}\phi^{(k')}}
\bigg[D^{(k,k')}_{\theta\theta}(\mathcal{U};\bm{x},\bm{\Psi})+U^{(k,k')}_{\theta\theta}(\bm{\Psi};\mathcal{R},\mathcal{T},\bm{x})D^{(k,k')}_{\theta}(\mathcal{U};\bm{x},\bm{\Psi})\nonumber\\
&\hspace{12em}+U^{(k',k)}_{\theta\theta}(\bm{\Psi};\mathcal{R},\mathcal{T},\bm{x})D^{(k,k')}_{\theta}(\mathcal{U};\bm{x},\bm{\Psi})\nonumber\\
&\hspace{12em}+U^{(k,k')}_{\theta\theta}(\bm{\Psi};\mathcal{R},\mathcal{T},\bm{x})U^{(k',k)}_{\theta\theta}(\bm{\Psi};\mathcal{R},\mathcal{T},\bm{x})F^{(k,k')}(\mathcal{U};\bm{x},\bm{\Psi})\bigg],
\end{align}
\begin{align}
S^{(k,k')}_{\theta\theta}(\bm{\Psi};\mathcal{R},\mathcal{T},\bm{x})
&=\sum_{m=1}^{M}\lambda^{(k)}(\bm{\Psi};\bm{x})\lambda^{(k')}(\bm{\Psi};\bm{x})\frac{F^{(k)}(\mathcal{I}_{m};\bm{x},\bm{\Psi})F^{(k')}(\mathcal{I}_{m};\bm{x},\bm{\Psi})}{F(\mathcal{I}_{m};\bm{x},\bm{\Psi})^2}\nonumber\\
&\hspace{4em}\times\left(\frac{\partial}{\partial\theta^{(k)}} \log F^{(k)}_{\mathcal{T}}(\mathcal{I}_{m};\bm{x},\bm{\Psi})\right)\left(\frac{\partial}{\partial\theta^{(k')}} \log F^{(k')}_{\mathcal{T}}(\mathcal{I}_{m};\bm{x},\bm{\Psi})\right)
F_{\mathcal{T}}(\mathcal{I}_{m};\bm{x},\bm{\Psi}),
\end{align}
\begin{align}
U^{(k,k')}_{\theta\theta}(\bm{\Psi};\mathcal{R},\mathcal{T},\bm{x})
:=-\left(A'(\theta^{(k)})-A'(\theta^{(k,k')})\right)-\phi^{(k)}\frac{\partial}{\partial\theta^{(k)}}\log F^{(k)}(\mathcal{T};\bm{x},\bm{\Psi}).
\end{align}

Second, we derive $\tilde{\Lambda}^{(k,k')}_{\theta\phi}(\bm{\Psi};\mathcal{R},\mathcal{T},\bm{x})$ as
\begin{align}
\tilde{\Lambda}^{(k,k')}_{\theta\phi}(\bm{\Psi};\mathcal{R},\mathcal{T},\bm{x})
&=\left(R^{(k,k')}_{\theta\theta}(\bm{\Psi};\mathcal{R},\mathcal{T},\bm{x})+S^{(k,k')}_{\theta\theta}(\bm{\Psi};\mathcal{R},\mathcal{T},\bm{x})\right)\left(\frac{\partial\theta^{(k)}}{\partial\bm{\beta}}\right)\left(\frac{\partial\theta^{(k')}}{\partial\phi}\right)\nonumber\\
&\quad + \left(R^{(k,k')}_{\theta\phi}(\bm{\Psi};\mathcal{R},\mathcal{T},\bm{x})+S^{(k,k')}_{\theta\phi}(\bm{\Psi};\mathcal{R},\mathcal{T},\bm{x})\right)\left(\frac{\partial\theta^{(k)}}{\partial\bm{\beta}}\right)\left(\frac{\partial\phi^{(k')}}{\partial\phi}\right)\nonumber\\
&:=V^{(k,k')}_{\theta\phi}(\bm{\Psi};\mathcal{R},\mathcal{T},\bm{x})\bm{x},
\end{align}
with
\begin{align}
V^{(k,k')}_{\theta\phi}(\bm{\Psi};\mathcal{R},\mathcal{T},\bm{x})
&=\frac{\phi^{(k)}\phi^{(k')2}}{\phi^3}\xi'(\bm{x}^T\bm{\beta})\Bigg[\left(\left(c-\frac{1}{\tilde{\phi}^{(k')}}\right)\theta+\frac{\tilde{\theta}^{(k')}}{\tilde{\phi}^{(k')}}\right)\left(R^{(k,k')}_{\theta\theta}(\bm{\Psi};\mathcal{R},\mathcal{T},\bm{x})+S^{(k,k')}_{\theta\theta}(\bm{\Psi};\mathcal{R},\mathcal{T},\bm{x})\right)\nonumber\\
&\hspace{15em}+\left(R^{(k,k')}_{\theta\phi}(\bm{\Psi};\mathcal{R},\mathcal{T},\bm{x})+S^{(k,k')}_{\theta\phi}(\bm{\Psi};\mathcal{R},\mathcal{T},\bm{x})\right)\Bigg],
\end{align}
\begin{align}
R^{(k,k')}_{\theta\phi}(\bm{\Psi};\mathcal{R},\mathcal{T},\bm{x})
&=\int_{\mathcal{U}}W^{(k)}(y;\bm{x})W^{(k')}(y;\bm{x})\left(\frac{\partial}{\partial\theta^{(k)}}\log f_{\mathcal{T}}^{(k)}(y;\bm{x},\bm{\Psi})\right)\left(\frac{\partial}{\partial\phi^{(k')}}\log f_{\mathcal{T}}^{(k')}(y;\bm{x},\bm{\Psi})\right)f_{\mathcal{T}}(y;\bm{x},\bm{\Psi})dy\nonumber\\
&=-\frac{\lambda^{(k,k')}(\bm{\Psi};\bm{x})}{F(\mathcal{T};\bm{x},\bm{\Psi})}\frac{1}{\phi^{(k)}\phi^{(k')2}}\int_{\mathcal{U}}\left[(y-A'(\theta^{(k,k')}))+U^{(k,k')}_{\theta\theta}(\bm{\Psi};\mathcal{R},\mathcal{T},\bm{x})\right]\nonumber\\
&\hspace{13em}\times[(\theta^{(k,k')}y-A(\theta^{(k,k')})+g(y)-\phi^{(k,k')2}b'(\phi^{(k,k')}))\nonumber\\
&\hspace{15em}+(\theta^{(k')}-\theta^{(k,k')})(y-A'(\theta^{(k,k')}))\nonumber\\
&\hspace{15em}+U^{(k',k)}_{\phi\phi}(\bm{\Psi};\mathcal{R},\mathcal{T},\bm{x})] f^{(k,k')}(y;\bm{x},\bm{\Psi})dy\nonumber\\
&=-\frac{\lambda^{(k,k')}(\bm{\Psi};\bm{x})}{F(\mathcal{T};\bm{x},\bm{\Psi})}\frac{1}{\phi^{(k)}\phi^{(k')2}}\Big\{D^{(k,k')}_{\theta\phi}(\mathcal{U};\bm{x},\bm{\Psi})
+U^{(k,k')}_{\theta\theta}(\bm{\Psi};\mathcal{R},\mathcal{T},\bm{x})D^{(k,k')}_{\phi}(\mathcal{U};\bm{x},\bm{\Psi})\nonumber\\
&\hspace{13em}+(\theta^{(k')}-\theta^{(k,k')})D^{(k,k')}_{\theta\theta}(\mathcal{U};\bm{x},\bm{\Psi})\nonumber\\
&\hspace{13em}+[(\theta^{(k')}-\theta^{(k,k')})U^{(k,k')}_{\theta\theta}(\bm{\Psi};\mathcal{R},\mathcal{T},\bm{x})+U^{(k',k)}_{\phi\phi}(\bm{\Psi};\mathcal{R},\mathcal{T},\bm{x})]\nonumber\\
&\hspace{15em}\times D^{(k,k')}_{\theta}(\mathcal{U};\bm{x},\bm{\Psi})\nonumber\\
&\hspace{13em}+U^{(k,k')}_{\theta\theta}(\bm{\Psi};\mathcal{R},\mathcal{T},\bm{x})U^{(k',k)}_{\phi\phi}(\bm{\Psi};\mathcal{R},\mathcal{T},\bm{x})F^{(k,k')}(\mathcal{U};\bm{x},\bm{\Psi})\Big\},
\end{align}
\begin{align}
S^{(k,k')}_{\theta\phi}(\bm{\Psi};\mathcal{R},\mathcal{T},\bm{x})
&=\sum_{m=1}^{M}\lambda^{(k)}(\bm{\Psi};\bm{x})\lambda^{(k')}(\bm{\Psi};\bm{x})\frac{F^{(k)}(\mathcal{I}_{m};\bm{x},\bm{\Psi})F^{(k')}(\mathcal{I}_{m};\bm{x},\bm{\Psi})}{F(\mathcal{I}_{m};\bm{x},\bm{\Psi})^2}\nonumber\\
&\hspace{4em}\times\left(\frac{\partial}{\partial\theta^{(k)}} \log F^{(k)}_{\mathcal{T}}(\mathcal{I}_{m};\bm{x},\bm{\Psi})\right)\left(\frac{\partial}{\partial\phi^{(k')}} \log F^{(k')}_{\mathcal{T}}(\mathcal{I}_{m};\bm{x},\bm{\Psi})\right)F_{\mathcal{T}}(\mathcal{I}_{m};\bm{x},\bm{\Psi}),
\end{align}
\begin{align}
U^{(k,k')}_{\phi\phi}(\bm{\Psi};\mathcal{R},\mathcal{T},\bm{x})
&=(\theta^{(k)}-\theta^{(k,k')})A'(\theta^{(k,k')})-(A(\theta^{(k)})-A(\theta^{(k,k')}))\nonumber\\
&\hspace{7em}-(\phi^{(k)2}b'(\phi^{(k)})-\phi^{(k,k')2}b'(\phi^{(k,k')}))+\phi^{(k)2}\frac{\partial}{\partial\phi^{(k)}}\log F^{(k)}(\mathcal{T};\bm{x},\bm{\Psi}).
\end{align}

Third, $\tilde{\Lambda}^{(k,k')}_{\phi\theta}(\bm{\Psi};\mathcal{R},\mathcal{T},\bm{x})$ is given by

\begin{align}
\tilde{\Lambda}^{(k,k')}_{\phi\theta}(\bm{\Psi};\mathcal{R},\mathcal{T},\bm{x})=\tilde{\Lambda}^{(k',k)}_{\theta\phi}(\bm{\Psi};\mathcal{R},\mathcal{T},\bm{x})^T.
\end{align}

Finally, $\tilde{\Lambda}^{(k,k')}_{\phi\phi}(\bm{\Psi};\mathcal{R},\mathcal{T},\bm{x})$ is evaluated as

\begin{align}
\tilde{\Lambda}^{(k,k')}_{\phi\phi}(\bm{\Psi};\mathcal{R},\mathcal{T},\bm{x})
&:=V^{(k,k')}_{\phi\phi}(\bm{\Psi};\mathcal{R},\mathcal{T},\bm{x}),
\end{align}
with
\begin{align}
V^{(k,k')}_{\phi\phi}(\bm{\Psi};\mathcal{R},\mathcal{T},\bm{x})
&=\frac{\phi^{(k)2}\phi^{(k')2}}{\phi^4}\Bigg[\left(\left(c-\frac{1}{\tilde{\phi}^{(k)}}\right)\theta+\frac{\tilde{\theta}^{(k)}}{\tilde{\phi}^{(k)}}\right)\left(\left(c-\frac{1}{\tilde{\phi}^{(k')}}\right)\theta+\frac{\tilde{\theta}^{(k')}}{\tilde{\phi}^{(k')}}\right)\nonumber\\
&\hspace{8em}\times\left(R^{(k,k')}_{\theta\theta}(\bm{\Psi};\mathcal{R},\mathcal{T},\bm{x})+S^{(k,k')}_{\theta\theta}(\bm{\Psi};\mathcal{R},\mathcal{T},\bm{x})\right)\nonumber\\
&\hspace{6em}+\left(\left(c-\frac{1}{\tilde{\phi}^{(k)}}\right)\theta+\frac{\tilde{\theta}^{(k)}}{\tilde{\phi}^{(k)}}\right)
\left(R^{(k,k')}_{\theta\phi}(\bm{\Psi};\mathcal{R},\mathcal{T},\bm{x})+S^{(k,k')}_{\theta\phi}(\bm{\Psi};\mathcal{R},\mathcal{T},\bm{x})\right)\nonumber\\
&\hspace{6em}+\left(\left(c-\frac{1}{\tilde{\phi}^{(k')}}\right)\theta+\frac{\tilde{\theta}^{(k')}}{\tilde{\phi}^{(k')}}\right)
\left(R^{(k',k)}_{\theta\phi}(\bm{\Psi};\mathcal{R},\mathcal{T},\bm{x})+S^{(k',k)}_{\theta\phi}(\bm{\Psi};\mathcal{R},\mathcal{T},\bm{x})\right)\nonumber\\
&\hspace{6em}+\left(R^{(k,k')}_{\phi\phi}(\bm{\Psi};\mathcal{R},\mathcal{T},\bm{x})+S^{(k,k')}_{\phi\phi}(\bm{\Psi};\mathcal{R},\mathcal{T},\bm{x})\right)\Bigg],
\end{align}
\begin{align}
R^{(k',k)}_{\phi\phi}(\bm{\Psi};\mathcal{R},\mathcal{T},\bm{x})
&=\int_{\mathcal{U}}W^{(k)}(y;\bm{x})W^{(k')}(y;\bm{x})\left(\frac{\partial}{\partial\phi^{(k)}}\log f_{\mathcal{T}}^{(k)}(y;\bm{x},\bm{\Psi})\right)\left(\frac{\partial}{\partial\phi^{(k')}}\log f_{\mathcal{T}}^{(k')}(y;\bm{x},\bm{\Psi})\right)f_{\mathcal{T}}(y;\bm{x},\bm{\Psi})dy\nonumber\\
&=\frac{\lambda^{(k,k')}(\bm{\Psi};\bm{x})}{F(\mathcal{T};\bm{x},\bm{\Psi})}\frac{1}{\phi^{(k)2}\phi^{(k')2}}\int_{\mathcal{U}}\Big[(\theta^{(k,k')}y-A(\theta^{(k,k')})+g(y)-\phi^{(k,k')2}b'(\phi^{(k,k')}))\nonumber\\
&\hspace{14em}+(\theta^{(k)}-\theta^{(k,k')})(y-A'(\theta^{(k,k')}))
+U^{(k,k')}_{\phi\phi}(\bm{\Psi};\mathcal{R},\mathcal{T},\bm{x})\Big] \nonumber\\
&\hspace{13em}\times\Big[(\theta^{(k,k')}y-A(\theta^{(k,k')})+g(y)-\phi^{(k,k')2}b'(\phi^{(k,k')}))\nonumber\\
&\hspace{14em}+(\theta^{(k')}-\theta^{(k,k')})(y-A'(\theta^{(k,k')}))
+U^{(k',k)}_{\phi\phi}(\bm{\Psi};\mathcal{R},\mathcal{T},\bm{x})\Big] \nonumber\\
&\hspace{13em}\times f^{(k,k')}(y;\bm{x},\bm{\Psi})dy\nonumber\\
&=\frac{\lambda^{(k,k')}(\bm{\Psi};\bm{x})}{F(\mathcal{T};\bm{x},\bm{\Psi})}\frac{1}{\phi^{(k)2}\phi^{(k')2}}
\Big\{D^{(k,k')}_{\phi\phi}(\mathcal{U};\bm{x},\bm{\Psi})
+(\theta^{(k)}-\theta^{(k,k')})(\theta^{(k')}-\theta^{(k,k')})D^{(k,k')}_{\theta\theta}(\mathcal{U};\bm{x},\bm{\Psi})\nonumber\\
&\hspace{12em}+\left[(\theta^{(k)}-\theta^{(k,k')})+(\theta^{(k')}-\theta^{(k,k')})\right]D^{(k,k')}_{\theta\phi}(\mathcal{U};\bm{x},\bm{\Psi})\nonumber\\
&\hspace{12em}+\left[U^{(k,k')}_{\phi\phi}(\bm{\Psi};\mathcal{R},\mathcal{T},\bm{x})+U^{(k',k)}_{\phi\phi}(\bm{\Psi};\mathcal{R},\mathcal{T},\bm{x})\right]D^{(k,k')}_{\phi}(\mathcal{U};\bm{x},\bm{\Psi})\nonumber\\
&\hspace{12em}+(\theta^{(k')}-\theta^{(k,k')})U^{(k,k')}_{\phi\phi}(\bm{\Psi};\mathcal{R},\mathcal{T},\bm{x})D^{(k,k')}_{\theta}(\mathcal{U};\bm{x},\bm{\Psi})\nonumber\\
&\hspace{12em}+(\theta^{(k)}-\theta^{(k,k')})U^{(k',k)}_{\phi\phi}(\bm{\Psi};\mathcal{R},\mathcal{T},\bm{x})D^{(k,k')}_{\theta}(\mathcal{U};\bm{x},\bm{\Psi})\nonumber\\
&\hspace{12em}+U^{(k,k')}_{\phi\phi}(\bm{\Psi};\mathcal{R},\mathcal{T},\bm{x})U^{(k',k)}_{\phi\phi}(\bm{\Psi};\mathcal{R},\mathcal{T},\bm{x})F^{(k,k')}(\mathcal{U};\bm{x},\bm{\Psi})\Big\},
\end{align}
\begin{align} \label{apx:diag:s_pp}
S^{(k,k')}_{\phi\phi}(\bm{\Psi};\mathcal{R},\mathcal{T},\bm{x})
&=\sum_{m=1}^{M}\lambda^{(k)}(\bm{\Psi};\bm{x})\lambda^{(k')}(\bm{\Psi};\bm{x})\frac{F^{(k)}(\mathcal{I}_{m};\bm{x},\bm{\Psi})F^{(k')}(\mathcal{I}_{m};\bm{x},\bm{\Psi})}{F(\mathcal{I}_{m};\bm{x},\bm{\Psi})^2}\nonumber\\
&\hspace{4em}\times\left(\frac{\partial}{\partial\phi^{(k)}} \log F^{(k)}_{\mathcal{T}}(\mathcal{I}_{m};\bm{x},\bm{\Psi})\right)\left(\frac{\partial}{\partial\phi^{(k')}} \log F^{(k')}_{\mathcal{T}}(\mathcal{I}_{m};\bm{x},\bm{\Psi})\right)F_{\mathcal{T}}(\mathcal{I}_{m};\bm{x},\bm{\Psi}),
\end{align}

Finally, Equation (\ref{eq:censtrun:thm_wald}) is an ordinary Wald statistic resulted from Equation (\ref{eq:censtrun:thm_asymp}).

\subsection{Proof of Theorems \ref{thm:diag:asymp} and \ref{thm:diag:chisq}} \label{apx:thm:diag:chisq}
Theorems \ref{thm:diag:asymp} and \ref{thm:diag:chisq} combined represent a special case of Theorem \ref{thm:censtrun:diag} with $M=0$, $\mathcal{U}=\mathcal{T}=\mathbb{R}$ and $\mathcal{D}=Y$.

\subsection{Derivations of Equations (\ref{eq:data_wgt_ratio_num}) and (\ref{eq:data_wgt_ratio_den}) in the manuscript}\label{apx:thm:exp_w_data}
Taking a conditional expectation on $Y$, we have
\begin{align}
E_{\mathcal{D},\bm{x}}[W(Y,\bm{x})|Y>q_{\alpha}]
&=E_{\mathcal{D},\bm{x}}\left[E_Y\left[W(Y,\bm{x})|Y>q_{\alpha},\mathcal{R},\mathcal{T},\bm{x}\right]\right]\nonumber\\
&=E_{\mathcal{D},\bm{x}}\left[\int_{\mathcal{Q}_{\alpha}}W(y,\bm{x})\frac{f_{\mathcal{T}}(y;\bm{x},\bm{\Psi})}{F_{\mathcal{T}}(\mathcal{Q}_{\alpha};\bm{x},\bm{\Psi})}dy\right] \nonumber\\
&=E_{\mathcal{D},\bm{x}}\left[\int_{\mathcal{Q}_{\alpha}}W(y,\bm{x})\frac{f(y;\bm{x},\bm{\Psi})1\{y\in\mathcal{T}\}}{F(\mathcal{Q}_{\alpha}\cap\mathcal{T};\bm{x},\bm{\Psi})}dy\right]\nonumber\\
&=E_{\mathcal{D},\bm{x}}\left[\frac{\lambda^{*}(\bm{\Psi};\bm{x})}{F(\mathcal{Q}_{\alpha}\cap\mathcal{T};\bm{x},\bm{\Psi})}\int_{\mathcal{Q}_{\alpha}\cap\mathcal{T}}f^{*}(y;\bm{x},\bm{\Psi})dy\right]\nonumber\\
&=E_{\mathcal{D},\bm{x}}\left[\lambda^{*}(\bm{\Psi};\bm{x})\frac{F^*(\mathcal{Q}_{\alpha}\cap\mathcal{T};\bm{x},\bm{\Psi})}{F(\mathcal{Q}_{\alpha}\cap\mathcal{T};\bm{x},\bm{\Psi})}\right],
\end{align}

Equation (\ref{eq:data_wgt_ratio_den}) in the main paper follows immediately the above equation by setting $q_{\alpha}=-\infty$ and hence $\mathcal{Q}_{\alpha}=\mathcal{Y}$.

\section{Additional plots and tables for the real data analysis}
\begin{table}[!h]
\centering
\begin{tabular}{llll}
\hline
Variable & Description & Type & Notes \\ \hline
$x_{i2}$ & Policyholder age & Discrete &  \\
$x_{i3}$ & Car age & Discrete &  \\
$x_{i4}$ & Car fuel & Categorical & Diesel: $x_{i4}=1$ \\
 &  &  & Gasoline: $x_{i4}=0$ \\
$x_{i5}$--$x_{i8}$ & Geographical location & Categorical & Region I: $x_{i5}=1$ \\
 &  &  & Region II: $x_{i6}=1$ \\
 &  &  & Region III: $x_{i7}=1$ \\
 &  &  & Region IV: $x_{i8}=1$ \\
 &  &  & Capital: $x_{i5}=x_{i6}=x_{i7}=x_{i8}=0$ \\
$x_{i9}$--$x_{i10}$ & Car brand class & Categorical & Class A: $x_{i9}=1$ \\
 &  &  & Class B: $x_{i10}=1$ \\
 &  &  & Class C: $x_{i9}=x_{i10}=0$ \\
$x_{i11}$ & Contract type & Categorical & Renewal contract: $x_{i11}=1$ \\
 &  &  & New contract: $x_{i11}=0$ \\ 
\hline
\end{tabular}
\caption{[European automobile claims] Summary of the covariates.}
\label{table:real2_cov}
\end{table}

\begin{figure}[!h]
\begin{center}
\includegraphics[width=\linewidth]{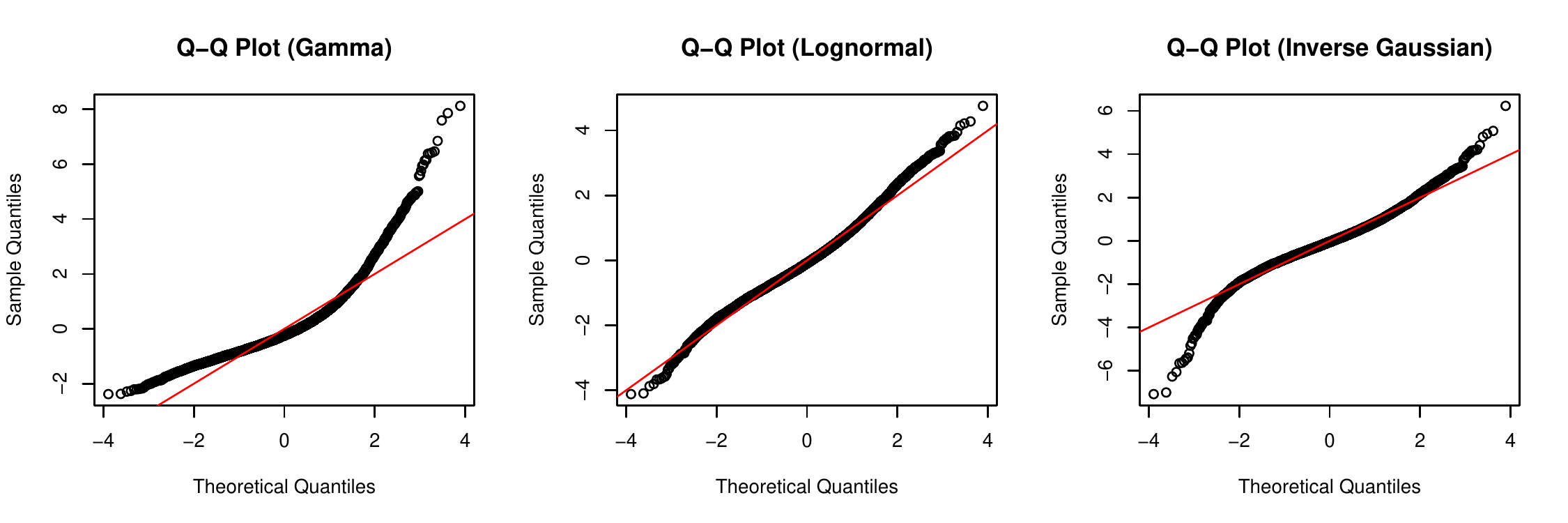}
\end{center}
\vspace{-0.5cm}
\caption{[European automobile claims] The Q-Q plots of the normalized residuals based on Gamma (left panel), log-normal (middle panel) and inverse Gaussian (right panel) distributions.}
\label{fig:real2_qq}
\end{figure}
\end{appendices}

\end{document}